\documentclass[11pt]{article}
\usepackage{macro}
\usepackage{float}
\usepackage[normalem]{ulem}
\usepackage{mwe}
\usepackage{soul}
\usepackage{hyperref}
\usepackage[dvipsnames]{xcolor}
\hypersetup{colorlinks = true, linkcolor=black, citecolor=MidnightBlue, urlcolor=MidnightBlue}

\def\bea{\begin{eqnarray}}
\def\eea{\end{eqnarray}}
\def\>{\rangle}
\def\<{\langle}
\def\rs{\mathrm{s}}

\newcommand{\oprnorm}[1]{\left\lVert #1 \right\rVert}
\newcommand{\oprnormsm}[1]{\lVert #1 \rVert}
\renewcommand{\leq}{\le}
\renewcommand{\geq}{\ge}

\newenvironment{taggedlemma}[1]
 {\taggedtheoremx}
 {\endtaggedtheoremx}

\title{
No-go theorems for sequential preparation of two-dimensional chiral states via channel-state correspondence
}

\author[1]{Ruihua Fan}
\author[2]{Yantao Wu}
\author[3]{Yimu Bao}
\author[1,4]{Zhehao Dai}
\affil[1]{\normalsize\it Department of Physics, University of California, Berkeley, CA 94720, USA}
\affil[2]{\normalsize\it Institute of Physics, Chinese Academy of Sciences, Beijing 100190, China}
\affil[3]{\normalsize\it Kavli Institute for Theoretical Physics, University of California, Santa Barbara, CA 93106, USA}
\affil[4]{\normalsize\it Department of Physics and Astronomy, University of Pittsburgh, PA 15213, USA}
\begin{document}

\maketitle	

\begin{abstract}
We investigate whether sequential unitary circuits can prepare two-dimensional chiral states, using a correspondence between sequentially prepared states, isometric tensor network states, and one-dimensional quantum channel circuits.
We establish two no-go theorems, one for Gaussian fermion systems and one for generic interacting systems.
In Gaussian fermion systems, the correspondence relates the defining features of chiral wave functions in their entanglement spectrum to the algebraic decaying correlations in the steady state of channel dynamics.
We establish the no-go theorem by proving that local channel dynamics with translational invariance cannot support such correlations. 
As a direct implication, two-dimensional Gaussian fermion isometric tensor network states cannot support algebraically decaying correlations in all directions or represent a chiral state.
In generic interacting systems, we establish a no-go theorem by showing that the state prepared by sequential circuits cannot host the tripartite entanglement of a chiral state due to the constraints from causality.
\end{abstract}

\tableofcontents

\section{Introduction}
\label{sec:intro}

Efficient preparation of complex quantum states on near-term quantum devices is a central topic in quantum many-body physics~\cite{Semeghini:2021wls,Andersen:2022xmz,Iqbal:2023wvm,Kim:2023bwr,Bluvstein:2023zmt,Altman:2019vbv}. 
A key step toward this goal is to identify the fundamental limitations of different preparation protocols.
Among the proposed architectures, sequential unitary circuits stand out as a compelling approach~\cite{Cirac:sequential2005,cirac:2dsequential2008}.
On the one hand, each qubit is acted on by only a finite number of gates, keeping the overall error rate low.
On the other hand, the circuit has an extensive depth and can generate long-range entanglement~\cite{Lieb:1972wy}.
Indeed, it has been shown that sequential unitary circuits can prepare highly entangled quantum states, such as the string-net states in two dimensions and fractons in three dimensions~\cite{FuDaoZhiDa:2020vsq,PRXQuantum.3.040315,Chen:2023qst,Satzinger:2021eqy}. 

Given their power, we ask whether sequential circuits can also prepare two-dimensional (2D) chiral states, topological states that have a nonzero chiral central charge~\cite{KaneFisher1997,Cappelli:2001mp,Kitaev:2005hzj}.
The fact that such states do not admit fixed-point wave functions with strict area-law entanglement makes this question particularly interesting~\cite{Kapustin:nogo,Kim:2024amo,Li:2024iwn}.
Previous work has provided explicit preparation protocols in two scenarios: parallel wires that are continuous in one spatial direction and discrete in the other, and lattice systems where the state is prepared through sequential adiabatic evolution generated by gapped Hamiltonians~\cite{Chen:2024fvu}.
However, the question remains open for the most experimentally relevant settings: can a 2D chiral state be prepared on a lattice using a sequential circuit built from local unitary gates with compact support? 

We tackle this question by relating the capability of sequential circuits to another, more elementary subject: the steady states of local quantum channel dynamics in one lower dimension.
For example, consider a two-dimensional array of qudits on an $L_x \times L_y$ grid, initialized in a trivial product state in the bulk. Each step of the sequential circuit acts only locally in the $y$-direction
\begin{equation}
	\ket{\psi(t)} \mapsto \ket{\psi(t+1)} = \prod_{x=1}^{L_x} U_{x,y=t} \ket{\psi(t)}
 \label{eq:sequential circuit}
\end{equation}
where $U_{x,y}$ is a local unitary acting in the vicinity of $(x,y)$. For simplicity, we require that $U_{x,y}$'s with the same $y$ commute so that we can apply them in parallel.
As we detail later, this two-dimensional unitary dynamics can be recast as a one-dimensional Markovian channel circuit evolving a mixed state
\begin{equation}
	\rho(t) \mapsto \rho(t+1) = \prod_{x=1}^{L_x} \calN_{x,t}[\rho(t)]\,,
	\label{eq:local channel circuit}
\end{equation}
where $\calN_{x,t}$ is a local channel that corresponds to $U_{x,y=t}$ in \eqnref{eq:sequential circuit}. By construction, each $\calN_{x,t}$ is compactly supported near $x$, and the $\calN_{x,t}$'s with the same time $t$ commute. \figref{fig:correspondence channel isotns}~(a) and (b) show one pair of such examples.
Importantly, the spectrum of the mixed state $\rho(T)$ coincides with the entanglement spectrum of the output state of the sequential circuit for the subregion $y < T$.  
In two dimensions, the entanglement structure is a useful fingerprint for all topological properties~\cite{Kitaev:2005dm,LevinWen:2006,Li:2008kda,Shi:2019mlt}. Therefore, analyzing the steady state of channel dynamics helps determine the topology of the output of the sequential circuit, hence its capabilities.

\begin{figure}[t]
\centering
\begin{tikzpicture}
\node at (0,0) {\includegraphics[width=0.92\textwidth]{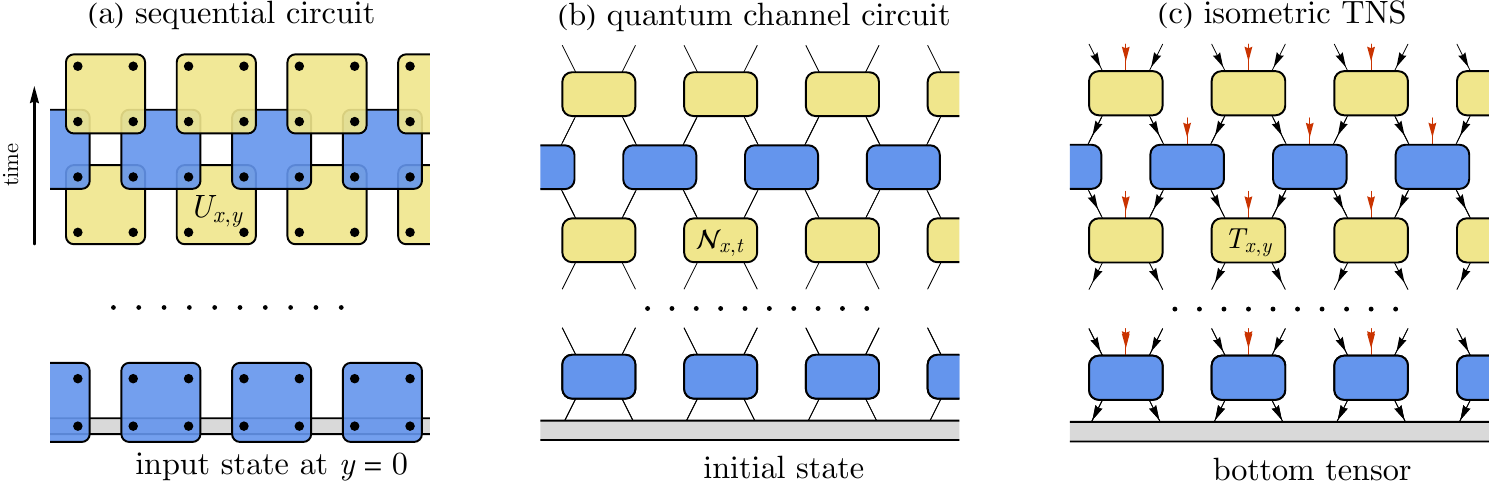}};
\draw[black,dashed,thick] (-7.32,-2.15) -- (-3.25,-2.15) -- (-3.25,0.9) -- (-7.32,0.9) -- node[left]{$\rho_{y<T}$} cycle;
\draw[black,dashed,thick] (-2.1,1.2) node[left] {$\rho(T)$} -- (2.4,1.2);
\draw[black,dashed,thick] (3.4,-2.2) -- (7.8,-2.2) -- (7.8,1.2) -- (3.4,1.2) -- node[left]{$\rho_{y<T}$} cycle;
\end{tikzpicture}
\vspace{10pt}
\renewcommand{\arraystretch}{1.5}
    \begin{tabular}{|l|l|l|}
        \hline
        \textbf{sequential circuit} & \textbf{channel dynamics} & \textbf{isoTNS} \\
        \hline
        $(x,y=t)$ & $(x,t)$ & $(x,y)$ \\ 
        \hline
        output state & state of the environment & wave function (red legs) \\
        \hline 
        unitary gate $U_{x,y}$ & Kraus operator $\mathcal{N}_{x,t}$ & isometric tensor $T_{x,y}$ \\
        \hline
        input state at $y=0$ & initial state (pure) & bottom tensor (orthogonality row) \\ 
        \hline
        reduced density matrix $\rho_{y<T}$ & density matrix $\rho(T)$ & reduced density matrix $\rho_{y<T}$ \\ 
        \hline
\end{tabular}
\caption{Correspondence between (a) 1+1D brick-wall channel circuit, (b) 2D isometric tensor network states and (c) sequential unitary circuits. }
\label{fig:correspondence channel isotns}
\end{figure}

This channel-state correspondence is best understood in the language of isometric tensor network states (isoTNS)~\cite{Cirac:sequential2005,Zaletel:2020roc}, a subclass of tensor network states (TNS) where each tensor satisfies an isometry constraint. 
Specifically, the isoTNS wave function can represent all possible states prepared by sequential circuits, such as that in \eqnref{eq:sequential circuit}, while also capturing the full history of the local channel dynamics in \eqnref{eq:local channel circuit}.
\figref{fig:correspondence channel isotns}~(c) shows the explicit form of the isoTNS.
The isometric constraint was originally introduced to simplify numerical algorithms in two dimensions; here, it naturally corresponds to the unitarity of sequential circuits or the trace-preserving condition in quantum channel dynamics. 
The capability of the sequential circuit is equivalent to the representability of such an isoTNS ansatz. 
In this context, it was proved that 2D Gaussian fermion TNS can represent chiral states at the cost of having power-law correlation functions in the bulk~\cite{Dubail:2013pda}. Our question amounts to asking whether this is still possible with the isometric constraint. 

Our answer is summarized by two no-go theorems, one of which is proved for Gaussian fermion circuits and the other for generic interacting circuits on the cylindrical geometry. Let the initial state of the sequential circuit be a 2D product state except that the bottom row is an entangled 1D state---the resource. Then we have
\begin{rmk}[No-go theorem 1]
    Given an arbitrary Gaussian resource state, any 2D translationally invariant Gaussian fermion sequential circuit cannot prepare a chiral state in the thermodynamic limit, or equivalently, a Gaussian isoTNS cannot represent a chiral state.
\end{rmk}
\begin{rmk}[No-go theorem 2]
    If the resource is a finite-bond-dimensional matrix product state, any sequential 2D circuit with 2-local gates cannot generate gapless edges, described by conformal field theories, on an $L_x\times L_y$ cylinder for $L_y < L_x/6$. Here, the cylinder is open in $y$ direction.
\end{rmk}

In what follows, we outline our proof strategy and the structure of the remainder of the manuscript. 
In \secref{sec:correspondence}, we review the channel-state correspondence and explain the case for Gaussian fermions in detail.
For Gaussian wave functions, the chirality hinges on the presence of chiral modes in the single-particle entanglement spectrum, which depends on the steady-state correlation functions of the associated Gaussian quantum channel, according to the correspondence.
In \secref{sec:free fermion no go}, we prove an important intermediate result: for any translationally invariant local Gaussian channel, all single-particle modes that are coupled to the environment can at most mediate exponentially decaying spatial correlations in the long-time limit.
This result, combined with the correspondence, leads to no-go theorem 1.
Our theorem also extends to Gaussian sequential circuits of $\calO(L^2)$ depth.
For a generic wave function of interacting systems, the presence of chiral edge modes is diagnosed by tripartite entanglement.
In \secref{sec:interacting}, we show that the tripartite entanglement is constrained by the causality of sequential circuits, leading to no-go theorem 2.
We conclude with the discussion in \secref{sec:discussion}.

\section{Channel-state correspondence}
\label{sec:correspondence}

In this section, we review the correspondence between the $d$-dimensional local quantum channel, the $d+1$-dimensional sequential quantum circuit, and $d+1$-dimensional isoTNS.
The correspondence is well-known for $d = 0$ in the tensor network community~\cite{Cirac:2020obd} and has been formulated in Ref.~\cite{Zaletel:2020roc,Malz:2024val} for $d \geq 1$.
The correspondence to the channel dynamics helps us address the limitation of sequential circuits in state preparation later in this work.
In the rest of this paper, for convenience, we often use the language of isoTNS instead of the terminology of sequential circuits.

\subsection{Correspondence in the qudit systems}
\label{subsec:generalcorrespondence}

We lay out the correspondence in spin systems in dimension $d = 1$; generalization to other dimensions is straightforward~\cite{Cirac:sequential2005,3DisoTNS}. 
We start with the mapping between the channel dynamics and isoTNS and then explain its equivalence to isoTNS and sequential circuits. 
The table in \figref{fig:correspondence channel isotns} summarizes the dictionary.

We consider a one-dimensional quantum channel circuit acting on an $L$-qudit chain with a periodic boundary condition.
At each time step, the density matrix $\rho(t)$ evolves under local channels as
\begin{equation}
    \rho(t+1) = \calN_t [\rho(t)] = \prod_x \calN_{x,t}[\rho(t)]\,,
\end{equation}
where $(x,t)$ denotes the spacetime coordinate of each channel, and we assume that the channels $\calN_{x,t}$ mutually commute for the same $t$.
For simplicity, let $\calN_{x,t}$ act only on two nearest-neighbor qudits and be arranged in a brick-wall pattern, shown in \figref{fig:correspondence channel isotns}~(a).
The circuit describes a discretized Markovian open system dynamics.
Each channel can describe local unitary evolution, local dissipation, or measurement with local feedback. 
The dynamics still satisfies the Lieb-Robinson bound~\cite{Poulin:LRbound2010} but not necessarily the quantum detailed balance conditions~\cite{Kossakowski:1977db,FU07,Temme:2010big}.

The 1D channel circuit can be mapped to a 2D isoTNS by considering the Stinespring dilation for each channel.
Specifically, we purify each local channel $\calN_{x,t}$ as a unitary $U_{x,t}$ that acts on the two system qudits $\calH_x\otimes \calH_{x+1}$ and an environment qudit $\calH^{\text{env}}_x$. 
Without loss of generality, we fix the initial state of the environment as $\ket{0}$ so that the unitary becomes an isometric map $T_{x,t}$,
\begin{equation}
\begin{gathered}
	\calN_{x,t} \mapsto T_{x,t} = \begin{gathered} \includegraphics[width=0.9cm]{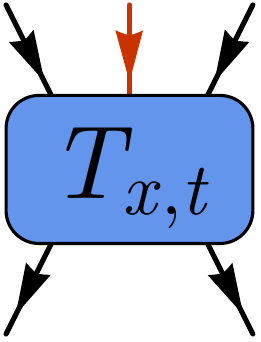} \end{gathered} \\
    T_{x,t}:\calH_x\otimes \calH_{x+1} \rightarrow \calH_x\otimes \calH_{x+1} \otimes \calH^{\text{env}}_x,
\end{gathered}
\end{equation}
where the black legs denote the two system qudits and the red leg in the middle denotes the environment qudit.
Arrows on the legs indicate the isometric condition, following the convention in Ref.~\cite{Zaletel:2020roc}, which is \textit{opposite} to the time direction.
Accordingly, the purified 1+1D channel circuit prepares a 2D isoTNS on environment qudits as shown in \figref{fig:correspondence channel isotns}~(c).
We refer to the $x$ direction of the isoTNS as the spatial direction, and the $y$ direction as the temporal direction.
Each purified channel gives rise to the local isometric tensor.
Each system qudit on the 1D chain becomes the virtual leg of the isoTNS, whereas each environment qudit becomes a physical leg.
The initial state of the channel dynamics specifies the bottom boundary condition of the isoTNS, i.e. the quantum state of the virtual legs at the bottom. 
The bottom row of this isoTNS is commonly referred to as the orthogonality row.
To avoid confusion, we stick to the words ``virtual" and ``physical" when referring to the degrees of freedom.

The isometric tensors in the network shown in \figref{fig:correspondence channel isotns}~(c) are arranged sequentially from the bottom to the top, which can be naturally reorganized into a linear-depth sequential circuit shown in \figref{fig:correspondence channel isotns}~(a).
Specifically, we map each isometric tensor into a unitary gate acting on four qudits
\footnote{Unitary gates on four qudits map to isoTNS tensors with virtual dimension $d$ and physical dimension $d^2$. To map an isoTNS tensor with general physical and virtual dimensions to a unitary gate on 4 qudits, we need to choose a large enough $d$ and enlarge the dimension of virtual (physical) legs to $d$ ($d^2$).}
\begin{equation}\label{Eq:Lshape}
\begin{gathered}
	\includegraphics[width = 1.5cm]{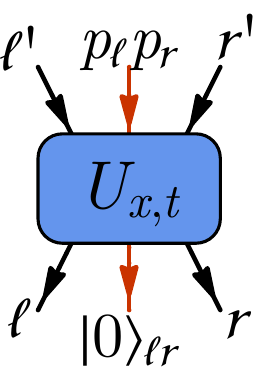}
\end{gathered}
\,=\,
\begin{gathered}
	\includegraphics[width = 1.6cm]{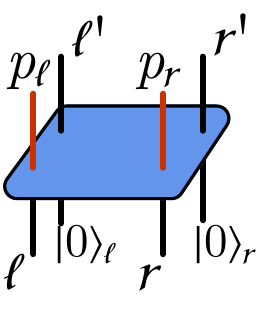}
\end{gathered}\,,
\end{equation}
where the splitting of the Hilbert space for the physical leg is not unique.
The sequential circuit shares the same time direction as the 1+1D channel dynamics.
The dynamics starts with a one-dimensional input state (the quantum state of the bottom virtual legs of the isoTNS).
At each step, the circuit entangles the system with an extra array of qudits in the product states and eventually prepares a 2D state. 
The equivalence between states prepared by sequential circuits of various other architectures and 2D tensor network states has been formulated in Ref.~\cite{CiracSequential2022}.

We close this section with a simple lemma on the relation between the density matrix in the channel dynamics and the entanglement spectrum of the sequential unitary circuit's output state.
\begin{lemma}
Consider a 2D sequential circuit with the input of an arbitrary pure 1D state on the bottom row.
The reduced density matrix of the prepared 2D state below a horizontal cut at $y$ is isospectral to the 1D density matrix of the corresponding channel dynamics at time $t_0=y$. (Note that in order to get a pure physical state without any post selection, we turn the virtual legs into physical legs at the top edge)
\end{lemma}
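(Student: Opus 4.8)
The plan is to reduce the statement to the Schmidt-decomposition duality for the global pure state produced by the sequential circuit. First I would fix notation: let $\calS$ denote the chain of $L$ virtual qudits (the ``system'' of the channel), let $\calE^{(t)} = \bigotimes_x \calE^{\mathrm{env}}_{x,t}$ be the fresh environment qudits emitted at time step $t$ (the physical/red legs of row $y=t$), and let $U_t = \prod_x U_{x,t}$ be the corresponding layer of Stinespring-dilated isometries, each fed an ancilla in $\ket{0}$. By definition of the dilation, the channel state at time $t_0$ is $\rho(t_0) = \mathrm{Tr}_{\calE^{(1)}\cdots\calE^{(t_0)}}\!\big[\,\ket{\Psi_{t_0}}\!\bra{\Psi_{t_0}}\,\big]$, where $\ket{\Psi_{t_0}} = U_{t_0}\cdots U_1\big(\ket{\phi_0}_{\calS}\otimes\ket{0\cdots 0}_{\calE^{(1)}\cdots\calE^{(t_0)}}\big)$ and $\ket{\phi_0}$ is the pure resource on the orthogonality row. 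In the isoTNS picture $\ket{\Psi_{t_0}}$ is precisely the wave function on the physical legs of rows $1,\dots,t_0$ together with the top virtual legs, which we convert to physical legs as in the lemma's parenthetical remark; its support below the cut $y=t_0$ is $\calE^{(1)}\cdots\calE^{(t_0)}$ and its support above the cut is the top virtual register $\calS$.

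The key step handles the layers above the cut. Writing the full circuit as $\ket{\Psi} = V\big(\ket{\Psi_{t_0}}\otimes\ket{0\cdots 0}_{\calE^{(t_0+1)}\cdots}\big)$ with $V = \cdots U_{t_0+2}\,U_{t_0+1}$, causality of the sequential circuit guarantees that $V$ acts only on $\calS$ and on the later environments $\calE^{(t_0+1)},\calE^{(t_0+2)},\dots$, and never again on $\calE^{(1)},\dots,\calE^{(t_0)}$. But the region above the cut is exactly $\calS_{\mathrm{top}}\cup\calE^{(t_0+1)}\cup\calE^{(t_0+2)}\cup\cdots$, i.e.\ the entire support of $V$. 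Hence in forming $\rho_{y<t_0}$ we trace out all of $V$'s support, so $V$ and $V^{\dagger}$ cancel by cyclicity of the partial trace (the leftover $\ket{0}$ environments contributing a trivial scalar), and $\rho_{y<t_0} = \mathrm{Tr}_{\calS}\big[\,\ket{\Psi_{t_0}}\!\bra{\Psi_{t_0}}\,\big]$. Diagrammatically this is the familiar collapse of an isometric network: once its physical leg is traced against its conjugate, each tensor above the cut contracts to the identity via its arrow/isometry condition $T^{\dagger}T=I$, sweeping the whole region above the cut down to the identity on the virtual legs at $y=t_0$.

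It now suffices to note that $\rho_{y<t_0}$ and $\rho(t_0)$ are the two reduced density matrices of the \emph{same} pure state $\ket{\Psi_{t_0}}$ across the bipartition $\big(\calE^{(1)}\cdots\calE^{(t_0)}\big)$ versus $\calS$. By the Schmidt decomposition they share the same list of nonzero eigenvalues with multiplicities, i.e.\ they are isospectral, which is the claim.

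The main difficulty, such as it is, is purely bookkeeping rather than conceptual. One must (i) fix the convention matching a row of isoTNS tensors to a channel time step (an off-by-one depending on whether the row is read as before or after applying $U_t$), (ii) verify that the environment qudits above the cut genuinely lie outside the forward light cone of rows $\le t_0$, so that the cancellation of $V$ is exact rather than approximate, and (iii) check that converting the top virtual legs to physical legs yields the honest global pure state $\ket{\Psi}$ and not a post-selected one. The entire content of the lemma is the isometry/unitarity of the gates together with Schmidt duality; if the orthogonality row were mixed one would adjoin a purifying ancilla at the bottom, but the lemma assumes a pure resource, so this is unnecessary.
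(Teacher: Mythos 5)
Your proof is correct and follows essentially the same route as the paper: identify the state at time $t_0$ as a pure state on (emitted environments) $\otimes$ (virtual register), observe that all gates after $t_0$ are supported entirely above the cut and hence drop out of the partial trace, and conclude isospectrality of the two reduced density matrices by Schmidt duality. Your bookkeeping of the light-cone/causality point is just a more explicit rendering of the paper's one-line remark that $U_2$ acts only on the virtual legs and the upper half of the physical system.
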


\begin{proof}
Let $\ket{\psi_0}_V$ be the initial state of the 1D channel circuit. Let $A$ and $B$ be the physical legs in the lower and upper half-plane separated by the time slice at $t_0$, $U_1$ and $U_2$ the concatenation of all unitaries applied before and after $t_0$ in the circuit.
The sequential circuit up to time $t_0$ prepares an entangled state between $A$ and the 1D array of virtual legs $V$:
\begin{equation}
    \ket{\psi}_{A,V} = U_1 \ket{0}_A \otimes \ket{\psi_0}_V = \sum_i \sqrt{\lambda_i} \ket{i}_A \otimes \ket{i}_V
\end{equation}
where we used the Schmidt decomposition in the second step. The virtual reduced density matrix $\rho_V = \sum_i \lambda_i \ket{i} \bra{i}_V$ is the density matrix of the system in the channel dynamics. It has the same spectrum as the reduced density matrix $\rho_A$.
The unitary gates after $t_0$, $U_2$, act only on the virtual legs and the upper half of the physical system and therefore do not change $\rho_A$. This leads to the isospectral relation $\rho_A\sim\rho_V$. 
Since the final state of the sequential circuit is a pure state of $A$ and $B$, we have $\rho_B\sim\rho_A\sim\rho_V$.
\end{proof}

The correspondence rules out the preparation of chiral states using sequential circuits whose corresponding 1D quantum channel exhibits rapid mixing.
Consider preparing chiral states on a cylinder with an open boundary condition in the $y$ direction.
Such chiral states must have gapless edge modes that mediate power-law correlations on the top and bottom edges.
An initial state may be chosen to mimic the correlation of the bottom edge, but if the corresponding channel reaches a unique steady state at a time $t$ sublinear in the system size, the output of the sequential circuit at $y> t$ will be independent of the initial state, hence the same as that prepared from a trivial product state. In \secref{sec:interacting} we prove that sequential circuits are unable to generate the required entanglement from a trivial product state. Therefore, sequential circuits whose corresponding quantum channel reaches a unique steady state in sublinear time cannot prepare a chiral state from any 1D input state.

We note that the above argument does not rule out the possibility of preparing chiral states on a torus, i.e., maintaining the entanglement between the bottom and the top edge throughout the evolution.
This approach requires the channel to preserve information during the evolution and is adopted by Ref.~\cite{Chen:2024fvu}.
Yet, in Sec.~\ref{sec:free fermion no go}, we prove a no-go theorem for Gaussian fermions independent of the boundary conditions and with no assumption on the equilibrium behaviors of the corresponding Gaussian channel.

\subsection{Correspondence in Gaussian fermion systems}\label{subsec:GaussianChannelKitaevChain}

In this section, we detail the correspondence in Gaussian fermion systems.
We consider the Gaussian quantum channel acting on a 0D system and formulate the state of the environment as a Gaussian matrix product state (MPS).
Results in this section apply to translation-invariant Gaussian fermion channels in higher dimensions, e.g., those studied in Sec.~\ref{sec:free fermion no go}; there, correlation in each momentum sector corresponds to that in a 0D Gaussian channel dynamics.
Drawing on tools from~\cite{Bravyifermionlinearoptics}, we characterize the steady-state correlation of a Gaussian quantum channel (\thmref{thm:steadystate}).
We further discuss its implications on the boundary properties, bulk reduced density matrix and correlation length of the Gaussian MPS (\thmref{thm:initialstateindependence} and \ref{thm:correlationlength}).
The key difference between Gaussian fermion MPS and general MPS is the lack of cat-state-like correlation, which allows separating bulk properties from boundary contributions.
We note that these results are of independent interest, besides their application to the proof of the no-go theorem in Sec.~\ref{sec:free fermion no go}.

Consider a system of $2n$ Majorana fermions $c_1,\dots,c_{2n}$ with the anti-commutation relation $\{c_j, c_l\} = 2 \delta_{jl}$. A Gaussian state $\rho$ of this system obeys Wick's theorem and can be fully characterized by the two-point correlation matrix
\begin{equation}
    \Gamma_{jl} \equiv \frac{i}{2} \Tr \big( \rho[c_j,c_l] \big)\,.
\end{equation}
The $2n\times 2n$ matrix $\Gamma$ is real antisymmetric, and $\Gamma\Gamma^T\le I$ with the equality satisfied for pure states.

A generic Gaussian fermion quantum channel $\calN$ preserves Gaussianity of quantum states, namely the output state remains fully characterized by the correlation matrix.
The channel acts as a linear map on the correlation matrix~\cite{Bravyifermionlinearoptics}
\begin{equation}
\label{eq:gaussian channel linear equation}
    \Gamma \mapsto \calN[\Gamma] = A + B \Gamma B^T\,,
\end{equation}
where $A$ is real anti-symmetric, $B$ is real.
The channel $\calN$ being a completely positive trace preserving (CPTP) map requires $A,B$ to satisfy the inequality,
\begin{equation}
    \Lambda^T \Lambda \leq I\,,\quad \Lambda \equiv \begin{pmatrix} A & B \\ -B^T & 0 \end{pmatrix}\,.
    \label{eq:CPTP condition}
\end{equation}
In particular, we have $B^T B \leq I$, namely the spectral norm (the maximum singular value of $B$) is less than or equal to 1, and $B^T B = I$ if the channel is an isometric embedding.

The Gaussian channel has an extra structure if the input and output Hilbert spaces are the same.
In this case, $B$ is a square matrix with the same dimension as $A$. The subspace that has unit-norm eigenvalues of $B$ is preserved under repeated actions of the channel, while the remaining degrees of freedom are dissipative with a biased noise described by $A$.
This intuition is formulated precisely in the lemma below.
\begin{lemma}
\label{lemma:akbkeigenvec}
Let $\ket{v}$ be a right eigenvector of $B$ with a unit-norm eigenvalue $b$, i.e. $B\ket{v} = b \ket{v}$. 
Then, the eigenvalue $b$ does not have a nontrivial Jordan block, $|v\>$ is orthogonal to other generalized eigenvectors of $B$, and $A|v\> = 0$.
\end{lemma}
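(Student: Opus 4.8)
The plan is to leverage the CPTP constraint \eqnref{eq:CPTP condition}, $\Lambda^T\Lambda \le I$, evaluated on carefully chosen vectors built from $\ket{v}$. First I would deal with the Jordan-block issue. Suppose, for contradiction, that $b$ is defective, so there is a generalized eigenvector $\ket{w}$ with $B\ket{w} = b\ket{w} + \ket{v}$. Since $\abs{b}=1$ and $B^TB \le I$, I expect a standard argument: write $\norm{B\ket{w}}^2 = \abs{b}^2\norm{\ket{w}}^2 + 2\Re(\bar b\langle v|w\rangle) + \norm{\ket{v}}^2 \le \norm{\ket{w}}^2$, which forces $2\Re(\bar b\langle v|w\rangle) + \norm{\ket{v}}^2 \le 0$. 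Iterating $B^k\ket{w} = b^k\ket{w} + k b^{k-1}\ket{v}$ and applying $\norm{B^k\ket{w}} \le \norm{\ket{w}}$ for all $k$ gives a contradiction because the $k b^{k-1}\ket{v}$ term grows, unless $\ket{v}=0$. This simultaneously shows $b$ has no nontrivial Jordan block. Orthogonality of $\ket{v}$ to the other generalized eigenvectors of $B$ then follows either from the same contractivity estimate applied across different eigenvalues, or more cleanly from the observation that $B^TB\le I$ together with $\norm{B\ket{v}} = \norm{\ket{v}}$ means $B$ acts as a partial isometry on $\ket{v}$, so $B^TB\ket{v} = \ket{v}$ and $B^T\ket{v} = \bar b \ket{v}$ (i.e., $\ket{v}$ is also a \emph{left} eigenvector); a left eigenvector of $B$ with eigenvalue $\bar b$ is automatically orthogonal to all generalized right eigenvectors with eigenvalue $\ne b$.

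The cleanest route is probably to establish $B^T\ket{v} = \bar b\ket{v}$ first. From $B^TB \le I$ and $\langle v| B^TB|v\rangle = \abs{b}^2\langle v|v\rangle = \langle v|v\rangle$, the vector $\ket{v}$ saturates the operator inequality $B^TB\le I$, hence lies in the kernel of $I - B^TB$, giving $B^TB\ket{v}=\ket{v}$; then $B^T\ket{v} = B^T B \cdot (b^{-1}\ket{v}) = \bar b\,\ket{v}$ since $b^{-1}=\bar b$. With both $B\ket{v}=b\ket{v}$ and $B^T\ket{v}=\bar b\ket{v}$ in hand, the Jordan-block and orthogonality claims are immediate: if $\ket{w}$ is any generalized eigenvector for eigenvalue $b'$, then $\bar b\langle v|w\rangle = \langle B^Tv|w\rangle$-type manipulations, or more directly $\langle v|(B-b')\ket{w}$ telescoping, force $\langle v|w\rangle=0$ when $b'\ne b$, and within the $b$-eigenspace a nontrivial Jordan chain would contradict $\langle v | B | w\rangle = \bar b^{-1}\langle v|w\rangle$... — the self-adjoint-like behavior of $B$ on $\ket{v}$ rules it out.

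For the last claim $A\ket{v}=0$, I would plug the block vector $\binom{\ket{v}}{\ket{0}}$ — or rather a normalized two-component vector — into $\Lambda^T\Lambda \le I$. Writing $\Lambda = \begin{pmatrix} A & B \\ -B^T & 0\end{pmatrix}$, a good test vector is $\ket{\xi} = \binom{\alpha\ket{v}}{\beta\ket{v'}}$ where $\ket{v'}$ is chosen so that $B\ket{v'}$ is proportional to $\ket{v}$ — e.g., since $B^T\ket{v}=\bar b\ket{v}$, take $\ket{v'}=\ket{v}$ itself, then $\Lambda\ket{\xi} = \binom{\alpha A\ket{v} + \beta b\ket{v}}{-\bar b\alpha\ket{v}}$. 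Computing $\norm{\Lambda\ket{\xi}}^2 \le \norm{\ket{\xi}}^2 = (\abs{\alpha}^2+\abs{\beta}^2)\langle v|v\rangle$, one gets $\abs{\alpha}^2\norm{A\ket{v}}^2 + 2\Re(\bar\alpha\beta b\langle A v|v\rangle) + (\abs{\beta}^2 + \abs{\alpha}^2)\langle v|v\rangle \le (\abs{\alpha}^2+\abs{\beta}^2)\langle v|v\rangle$ (using antisymmetry of $A$ to see $\langle v|A|v\rangle$ is purely imaginary, and that the $\abs{\beta}^2\langle v|v\rangle$ terms cancel), which reduces to $\abs{\alpha}^2\norm{A\ket{v}}^2 + 2\Re(\bar\alpha\beta b\langle Av|v\rangle) \le 0$ for all $\alpha,\beta$; choosing $\beta$ to make the cross term vanish forces $\norm{A\ket{v}}^2\le 0$, i.e. $A\ket{v}=0$.

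The main obstacle I anticipate is bookkeeping over $\mathbb{C}$ versus $\mathbb{R}$: $B$ is a real matrix but its unit-modulus eigenvalue $b$ and eigenvector $\ket{v}$ are generally complex, and $A$ real antisymmetric has purely imaginary eigenvalues, so one must be careful that the inner products and the inequality $\Lambda^T\Lambda\le I$ (originally a real statement) are correctly complexified — the clean fix is to note $\Lambda^T\Lambda \le I$ over $\mathbb{R}$ implies $\Lambda^\dagger\Lambda \le I$ over $\mathbb{C}$, after which all the above manipulations are legitimate. A secondary subtlety is handling the case $b = \pm 1$ (real) versus genuinely complex $b$ uniformly; the argument via $B^T\ket{v}=\bar b\ket{v}$ covers both, so I would present that as the backbone and relegate the contractivity-iteration argument to a remark if needed.
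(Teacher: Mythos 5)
Your proposal is correct and follows essentially the same route as the paper: saturation of $B^TB\le I$ gives $B^TB\ket{v}=\ket{v}$ and hence $B^T\ket{v}=\bar b\ket{v}$, from which the Jordan-block and orthogonality claims follow, and testing the CPTP inequality on $\bigl(\ket{v},0\bigr)^T$ (equivalently using the block $A^TA+BB^T\le I$) yields $A\ket{v}=0$. The iteration argument and the general $(\alpha,\beta)$ test vector are unnecessary detours, but the backbone you identify as the ``cleanest route'' is exactly the paper's proof.
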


\begin{proof}
The proof only uses the CPTP condition in \eqnref{eq:CPTP condition}.
By definition, we have $\<v|B^T B|v\> = 1$.
Since $B^T B\leq I$, the equation above can be satisfied only when $B^T B |v\> = |v\>$. Thus we have
\begin{equation}
    B^T |v\> = b^*|v\>\ \Rightarrow\ \<v|B = b\<v|
\end{equation}
Namely, $\bra{v}$ is a left eigenvector of eigenvalue $b$. 
Let $\ket{u}$ be an arbitrary vector orthogonal to $\ket{v}$,
we have $\<v|B|u\> = b\<v|u\> = 0$, i.e. $B$ acts within the orthogonal complement of $|v\>$. This implies there cannot be non-trivial Jordan blocks for unit-norm eigenvalues, and $|v\>$ is orthogonal to other generalized eigenvectors of $B$.
Using $A^T A + B B^T\leq I$, we have
\begin{equation}
    \braket{v|A^T A + B B^T|v} \leq 1 \ \Rightarrow\ 
    \braket{v|A^T A|v} = 0.
\end{equation}
Since $A$ is real anti-symmetric, we have $A|v\> = 0$ and $\<v|A = 0$.
\end{proof}

Lemma~\ref{lemma:akbkeigenvec} allows us to decompose the Hilbert space of single-particle modes into the mutually orthogonal preserved $V_u$ and dissipative modes $V_d$, i.e. $V = V_u \oplus V_d$.
In particular, $V_u = \mathrm{span}\{|v\> \,|\, B|v\> = b|v\>, |b| = 1\}$ is spanned by the eigenvectors of $B$ associated with unit-norm eigenvalues, and $V_d$ is the orthogonal complement of $V_u$.

Lemma~\ref{lemma:akbkeigenvec} also has an implication on the dynamics of dissipative and preserved modes. The fact that $V_u$ and $V_d$ are invariant subspaces of $A$ and $B$ implies $A = (0, 0; 0, A_d)$ and $B = (U, 0; 0, B_d)$, where $U$ is real orthogonal (therefore, unitary) and any eigenvalue of $B_d$ has a norm less than 1. 
If we repeatedly apply the same channel $\calN$, the preserved modes undergo unitary evolution, and the dissipative modes converge exponentially in time to a unique steady state. 

\begin{thm}\label{thm:steadystate}
Consider a Gaussian fermion channel $\calN$ in Eq.~\eqref{eq:gaussian channel linear equation} acting on the space of $2n$ Majorana fermion modes $V = V_u\oplus V_d$.
After applying the channel $t$ times, the correlation matrix approaches a block diagonal form
\begin{align}
\oprnorm{\calN^t[\Gamma^{(0)}] - \Gamma^{(\rs)}} \leq e^{-t\ln(1/r)+\calO(\ln t)}, \quad \Gamma^{(\rs)} \equiv \begin{pmatrix} U^t\Gamma_{u}^{(0)}(U^\dag)^t & 0\\ 0&  \Gamma_{d}^{(\rs)}\end{pmatrix}
\label{Eq:steadystatecorrelationmatrix},
\end{align}
where $\Gamma_{u}^{(0)}$ is the correlation matrix of the initial state in the preserved subspace $V_u$, $\Gamma_{d}^{(\rs)}$ is the steady-state correlation matrix in the dissipative subspace $V_d$ that depends only on the channel $\calN$, and $r<1$ is the spectral radius of $B_d$.
\end{thm}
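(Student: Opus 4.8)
The plan is to exploit the block structure $A = \mathrm{diag}(0, A_d)$ and $B = \mathrm{diag}(U, B_d)$ established just before the theorem statement, so that the channel acts independently on $V_u$ and $V_d$. On $V_u$, since $A$ vanishes there and $B|_{V_u} = U$ is orthogonal, \eqnref{eq:gaussian channel linear equation} gives $\Gamma_u \mapsto U \Gamma_u U^T$ exactly, so after $t$ steps the block is precisely $U^t \Gamma_u^{(0)} (U^\dagger)^t$ with no error at all. All the work is therefore in the dissipative block $V_d$, where I must show that iterating $\Gamma_d \mapsto A_d + B_d \Gamma_d B_d^T$ converges exponentially to a fixed point at rate governed by the spectral radius $r$ of $B_d$.

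First I would unroll the recursion in the dissipative block: $\Gamma_d^{(t)} = B_d^t \Gamma_d^{(0)} (B_d^T)^t + \sum_{k=0}^{t-1} B_d^k A_d (B_d^T)^k$. Since every eigenvalue of $B_d$ has modulus strictly less than $1$, the geometric-type series $\sum_{k=0}^\infty B_d^k A_d (B_d^T)^k$ converges absolutely; I would define $\Gamma_d^{(\rs)}$ to be this limit, which manifestly depends only on $A_d, B_d$, i.e. only on the channel. The error is then $\Gamma_d^{(t)} - \Gamma_d^{(\rs)} = B_d^t \Gamma_d^{(0)} (B_d^T)^t - \sum_{k=t}^\infty B_d^k A_d (B_d^T)^k$, so $\oprnorm{\Gamma_d^{(t)} - \Gamma_d^{(\rs)}} \le \oprnorm{B_d^t}^2 \oprnorm{\Gamma_d^{(0)}} + \oprnorm{A_d} \sum_{k=t}^\infty \oprnorm{B_d^k}^2$. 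Everything reduces to bounding $\oprnorm{B_d^k}$.

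The one genuine subtlety — and the step I expect to be the main obstacle — is that $B_d$ need not be diagonalizable: it may have nontrivial Jordan blocks at eigenvalues of modulus close to $r$, so $\oprnorm{B_d^k}$ is not simply $r^k$ but can carry a polynomial-in-$k$ prefactor. The standard Gelfand-type bound gives, for any $\epsilon>0$, a constant $C_\epsilon$ with $\oprnorm{B_d^k}\le C_\epsilon (r+\epsilon)^k$; more sharply, writing $B_d$ in Jordan form with largest Jordan block of size $m$, one has $\oprnorm{B_d^k} \le C\, k^{m-1} r^k$ for large $k$. Either way, $\oprnorm{B_d^k}^2 \le C' k^{2m-2} r^{2k}$, and both the leading term and the tail sum $\sum_{k\ge t} k^{2m-2} r^{2k}$ are bounded by $e^{-2t\ln(1/r) + \calO(\ln t)}$; rescaling, $\oprnorm{\Gamma_d^{(t)} - \Gamma_d^{(\rs)}}\le e^{-t\ln(1/r)+\calO(\ln t)}$, where the $\calO(\ln t)$ absorbs the polynomial prefactor (factor $2$ on the exponent is also harmless, being $\ge \ln(1/r)$).

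Finally I would assemble the two blocks: since the full correlation matrix is block-diagonal under the channel and the $V_u$ block is exact, $\oprnorm{\calN^t[\Gamma^{(0)}] - \Gamma^{(\rs)}}$ equals $\oprnorm{\Gamma_d^{(t)} - \Gamma_d^{(\rs)}}$, which is the claimed bound. I would also remark that $\Gamma^{(\rs)}$ so constructed is a legitimate correlation matrix — real antisymmetric (inherited termwise, since $A_d$ is antisymmetric and conjugation by $B_d^k$ preserves antisymmetry) and satisfying $\Gamma^{(\rs)}(\Gamma^{(\rs)})^T \le I$ by taking the limit of the CPTP constraint, so it indeed describes a physical steady state. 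The only thing to be careful about in writing this up is keeping the Jordan-block bookkeeping honest while noting it only affects the subleading $\calO(\ln t)$, which is exactly what the theorem allows.
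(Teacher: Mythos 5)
Your treatment of the two diagonal blocks is essentially the paper's: unroll the recursion in $V_d$, define $\Gamma_d^{(\rs)}=\sum_{k\ge 0}B_d^kA_d(B_d^T)^k$, and control $\oprnorm{B_d^k}$ via the Jordan form with the polynomial prefactor absorbed into $\calO(\ln t)$. However, there is a genuine gap: you assert that ``the full correlation matrix is block-diagonal under the channel,'' so that the total error equals the error in the $V_d$ block alone. This is false for a generic initial state. The splitting $V=V_u\oplus V_d$ is determined by the channel, not by the state, so the initial correlation matrix generically has a nonzero off-diagonal block $\Gamma_{ud}^{(0)}$ coupling preserved and dissipative modes. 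Block-diagonality of $A$ and $B$ only implies that this block evolves to $U^t\Gamma_{ud}^{(0)}(B_d^T)^t$; it does not make it vanish.

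This omission matters quantitatively, not just formally. The off-diagonal block carries a single power of $B_d^t$ and therefore decays only as $e^{-t\ln(1/r)+\calO(\ln t)}$, whereas the two diagonal error terms you compute decay as $e^{-2t\ln(1/r)+\calO(\ln t)}$. The off-diagonal term is thus the \emph{dominant} error and is precisely what sets the rate stated in the theorem; your observation that your exponent comes out a factor of two better than claimed should have been a red flag that a contribution was missing. The fix is short: bound $\oprnorm{\calN^t[\Gamma^{(0)}]-\Gamma^{(\rs)}}$ by the sum of the block norms, including the term $2\oprnorm{U^t\Gamma_{ud}^{(0)}(B_d^T)^t}\le 2\oprnorm{B_d^t}\le e^{-t\ln(1/r)+\calO(\ln t)}$, after which your argument coincides with the paper's proof. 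Note also that this cross correlation is not physically negligible: it encodes entanglement between preserved and dissipative modes, which the paper later invokes when discussing entanglement between the top and bottom edges of the prepared state.
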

\begin{proof}
Applying the channel $\calN$ to a Gaussian state $t$ times gives rise to
\begin{align}
    \mathcal{N}^t[\Gamma_0] = A{(t)} + B^t\Gamma_0 (B^T)^{t}\,,\quad A{(t)}\equiv \sum_{s = 0}^{t-1}B^sA(B^T)^s\,.
\end{align}
In terms of the preserved and dissipative modes, the correlation matrix takes the form
\begin{align}
    \calN^t\left[\Gamma^{(0)}\right] = \begin{pmatrix} U^t \Gamma_{u}^{(0)} (U^\dagger)^t & U^t \Gamma_{ud}^{(0)}(B_d^T)^t \\ -B_d^t(\Gamma_{ud}^{(0)})^T(U^\dagger)^t & B_d^t \Gamma_{d}^{(0)} (B_d^T)^t + \sum_{s = 0}^{t-1} B_d^s A_d (B_d^T)^s
    \end{pmatrix}
\end{align}
where $\Gamma^{(0)} = (\Gamma_{u}^{(0)}, \Gamma_{ud}^{(0)}; -\Gamma_{ud}^{(0),T}, \Gamma_{d}^{(0)})$ is the initial correlation matrix.

The evolved Gaussian state converges to a block-diagonal correlation matrix $\Gamma^{(\rs)}$ in Eq.~\eqref{Eq:steadystatecorrelationmatrix} as $t$ increases
\begin{align}
\oprnorm{\calN^t\left[\Gamma^{(0)}\right] - \Gamma^{(\rs)}} &\leq 2 \oprnorm{U^t \Gamma_{ud}^{(0)} (B_d^T)^t} + \oprnorm{B_d^t \Gamma_{d}^{(0)} (B_d^T)^t} + \sum_{s = t}^{\infty} \oprnorm{B_d^s A_d (B_d^T)^s} \nonumber \\
&\leq 2e^{-t\ln(1/r)+\calO(\ln t)} + e^{-2t\ln(1/r)+\calO(\ln t)} + e^{-2t\ln(1/r)+\calO(\ln t)} = e^{-t\ln(1/r)+\calO(\ln t)},
\end{align}
where $\Gamma_{d}^{(\rs)} = \sum_{s = 0}^\infty B_d^s A_d (B_d^T)^s$, $r<1$ is the spectral radius of $B_d$ (the largest norm of its eigenvalues).
We use the upper bound on the spectral norms, $\oprnormsm{\Gamma_{ud}^{(0)}}$, $\oprnormsm{\Gamma_{d}^{(0)}}$, $\oprnormsm{A_d} \le 1$, and $\lVert B_d^t \rVert \le e^{-t\ln(1/r)+\calO(\ln t)}$~\footnote{To bound the spectral norm $\lVert B_d^t \rVert$, we express $B_d = Q J Q^{-1}$ in terms of its Jordan normal form $J$ and a similarity transformation $Q$.
The Jordan form $J = D + N$, where $D$ is the diagonal part given by the eigenvalues of $B$, and $N$ is a nilpotent matrix with only non-vanishing entries on the first off-diagonal taking unit values (i.e. $N^{\dim(B_d)} = 0$). 
The spectral norm then has an upper bound, for $t \ge \dim(B_d)-1$,
\begin{align}
\lVert B_d^t \rVert &\leq \oprnorm{Q}\lVert J^t \rVert\oprnorm{Q^{-1}} \leq \oprnorm{Q}\oprnorm{Q^{-1}}\sum_{s = 0}^{t}\binom{t}{s} \lVert D^{t-s}\rVert \cdot \lVert N^{s} \rVert = \oprnorm{Q}\oprnorm{Q^{-1}} \sum_{s = 0}^{\dim(B_d)-1} \binom{t}{s} \lVert D^{t-s}\rVert \\
&\leq \oprnorm{Q}\oprnorm{Q^{-1}} \; t^{2n} \; r^{t-2n} \leq e^{-t\ln(1/r)+\calO(\ln t)}, \label{eq:spectral_norm_matrix_power}
\end{align}
where $r < 1$ is the spectral radius of $B_d$ (i.e. the norm of the largest eigenvalue), and $2n \ge \dim(B_d)$ is the total number of Majorana modes.}.
\end{proof}

\begin{figure}
\centering
\begin{tikzpicture}
\node at (0,0) {\includegraphics[width=0.35\textwidth]{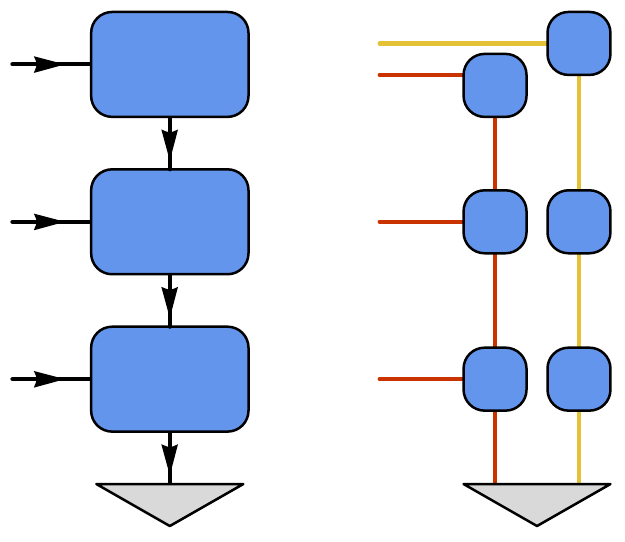}};    
\node[scale=2] at (0.1,0) {$=$};
\node[] at (-1.35,0.4) {$\calN_{V\mapsto VP}$};
\node[] at (-2.8,-2.3) {initial state};
\node[] at (0.7,-1.8) {dissipative};
\node[] at (3.5,-1.8) {preserved};
\end{tikzpicture}

\caption{Illustration of Gaussian fermion matrix product state. We interpret each tensor (blue box) as an isometric map from its bottom virtual leg to its physical and top virtual leg. This isometric map corresponds to a quantum channel from the bottom virtual leg to the top virtual leg of each tensor. Red (yellow) lines represent dissipative (preserved) modes of the quantum channel. Only the dissipative modes (red lines) are coupled to the physical legs in the bulk. The boundary condition of the lowest tensor (green triangle) corresponds to the initial state of the quantum channel.}
\label{Fig:GFMPS}
\end{figure}

By lifting each channel $\calN$ to an isometric map acting on an enlarged system with auxiliary degrees of freedom, we obtain the correspondence between the Gaussian fermion channel dynamics in 0D and a 1D Gaussian fermion MPS.
Each MPS tensor represents the isometric map from the bottom virtual leg to the top virtual leg and the physical leg (as shown in \figref{Fig:GFMPS}), which induces a linear map on the correlation matrix,
\begin{equation}
    \calN_{V \mapsto VP}[\Gamma_V] = \begin{pmatrix} A_{P} & A_{PV}  \\ -A_{PV}^T  & A \end{pmatrix} + 
\begin{pmatrix} B_{P}  \\ B \end{pmatrix}
\Gamma_{V} 
\begin{pmatrix} B_{P}^T  & B^T \end{pmatrix}
\end{equation}
The channel being isometric requires
\begin{equation}
\label{eq:tensorcorrelationmatrix}
    \Lambda^T\Lambda=I\,,\quad \text{where }\Lambda = \begin{pmatrix} A_{P} & A_{PV} & B_{P} \\ -A_{PV}^T & A & B\\
    -B_{P}^T & -B^T & 0
    \end{pmatrix}\,.
\end{equation}
We use $D$ to fully specify the MPS tensor, and the relation between $A,B$ and $D$ to specify the channel-state correspondence for Gaussian fermion systems (see Appendix~\ref{app:GfTNS} and Ref.~\cite{wu2025alternatinggaussianfermionicisometric,PhysRevA.81.052338,PhysRevB.107.125128} for an alternative interpretation of $D$).

We note in passing that the correspondence relates the parity of preserved Majorana modes in Gaussian channel dynamics to the topological index of the MPS.
As a concrete example, the ground states of the 1D Majorana chain in the topological and trivial phases correspond to Gaussian channels with odd and even numbers of preserved modes, respectively (see \appref{app:KitaevChain}).

In the rest of this section, we discuss the implications of \thmref{thm:steadystate} on the bulk and boundary properties of the Gaussian MPS, in particular, its correlation functions and entanglement spectrum.
\begin{thm}\label{thm:initialstateindependence}
Consider a Gaussian fermion MPS in the isometric form. 
The correlation matrix $\Gamma_P$ of a physical subsystem that is distance $y$ from the bottom boundary is given by
\begin{align}
    \oprnorm{\Gamma_P - \Gamma_{P,\mathrm{bulk}}} \leq e^{-y\ln(1/r)+\calO(\ln y)},
\end{align}
where $0<r<1$ is the spectral radius of $B_d$, and $\Gamma_{P,\text{bulk}}$ is a correlation matrix that depends on the channel associated with the MPS tensor but not on the bottom boundary condition.
\end{thm}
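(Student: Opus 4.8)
The plan is to transport the convergence statement of \thmref{thm:steadystate} from the virtual leg to the physical leg. The crucial point is that the part of the virtual state that does \emph{not} converge --- the unitarily rotating preserved block $U^t\Gamma_u^{(0)}(U^\dagger)^t$ --- is decoupled from the physical leg by the isometry condition \eqnref{eq:tensorcorrelationmatrix}, so the physical correlations only ever see the boundary-independent dissipative block.

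\textbf{Step 1 (the virtual input to the subsystem).} First I would observe that tracing out the physical legs of all tensors below height $y$ turns $\calN_{V\mapsto VP}$ into the induced map on the top-virtual correlation matrix, which from \eqnref{eq:tensorcorrelationmatrix} is exactly the channel $\calN = A + B(\cdot)B^T$. Hence the virtual leg feeding the subsystem carries the correlation matrix $\calN^{y-1}[\Gamma^{(0)}]$, and \thmref{thm:steadystate} places it within $e^{-y\ln(1/r)+\calO(\ln y)}$ of the block-diagonal $\Gamma^{(\rs)}$ in operator norm (the constant shift from $y-1$ to $y$ is harmless, $\calO(1)\subset\calO(\ln y)$). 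Its dissipative block $\Gamma_d^{(\rs)}$ is fixed by the tensor, while the preserved block $U^{y-1}\Gamma_u^{(0)}(U^\dagger)^{y-1}$ still remembers the boundary.

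\textbf{Step 2 (the physical leg sees only the dissipative block).} Reading off the third block-column of $\Lambda$ in \eqnref{eq:tensorcorrelationmatrix} gives the isometry identity $B_P^TB_P + B^TB = I$ on the bottom-virtual space, whence $\oprnormsm{B_P}\le 1$. For $|v\>\in V_u$, Lemma~\ref{lemma:akbkeigenvec} gives $B^TB|v\> = |v\>$, so $B_P^TB_P|v\> = 0$ and therefore $B_P|v\> = 0$; i.e. $B_P = B_P\Pi_d$ with $\Pi_d$ the projector onto $V_d$. Then the reduced physical correlation matrix $\Gamma_P = A_P + B_P\,\calN^{y-1}[\Gamma^{(0)}]\,B_P^T$ depends on the virtual state only through its dissipative block. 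Setting the (manifestly boundary-independent) $\Gamma_{P,\text{bulk}} \equiv A_P + B_P\Gamma^{(\rs)}B_P^T = A_P + B_P\Pi_d\Gamma_d^{(\rs)}\Pi_d B_P^T$, I get
\[
\oprnorm{\Gamma_P - \Gamma_{P,\text{bulk}}} = \oprnorm{B_P\big(\calN^{y-1}[\Gamma^{(0)}] - \Gamma^{(\rs)}\big)B_P^T} \le \oprnormsm{B_P}^2\,\oprnorm{\calN^{y-1}[\Gamma^{(0)}] - \Gamma^{(\rs)}} \le e^{-y\ln(1/r)+\calO(\ln y)}.
\]

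\textbf{Step 3 (multi-row subsystems) and the main obstacle.} If the subsystem spans several rows $y\le y'\le y+k$, the diagonal (single-row) blocks are bounded as above, and an inter-row cross block between rows $y_1<y_2$ has the schematic form $B_P^{(y_2)}B^{\,y_2-y_1}(\cdots)(B_P^{(y_1)})^T$; since $B = U\oplus B_d$ preserves the $V_u/V_d$ splitting while $B_P$ annihilates $V_u$ on both ends, only the dissipative sector contributes, so the cross-correlations also converge to their translationally defined bulk values at the same rate $r$. I expect the only non-routine ingredient is the decoupling $B_P|_{V_u}=0$ of Step 2 --- this is exactly the statement illustrated in \figref{Fig:GFMPS} that "only the dissipative modes are coupled to the physical legs in the bulk" --- with everything else being bookkeeping inherited from \thmref{thm:steadystate}.
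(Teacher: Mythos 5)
Your proposal is correct and follows essentially the same route as the paper: apply \thmref{thm:steadystate} to the virtual state at height $y$, then use the isometry identity ($B_P^TB_P + B^TB = I$, or its $\ell$-row composite $(B^T)^\ell B^\ell + \tilde{B}_P^T\tilde{B}_P=I$) together with Lemma~\ref{lemma:akbkeigenvec} to conclude that the physical coupling annihilates the preserved subspace, so the physical legs see only the boundary-independent dissipative block. The only (cosmetic) difference is that the paper folds the entire multi-row subsystem into a single composite isometric map $\tilde{\calN}_{V\mapsto PV}$ and applies the argument once, whereas you treat single rows first and then the inter-row cross blocks separately.
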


\begin{proof}
The Gaussian MPS in the subregion from $y$ to  $y+ \ell$ defines an isometric map $\tilde{\calN}_{V \mapsto PV}$ from the bottom virtual leg to the physical legs and the top virtual leg
\begin{align}
    \tilde{\calN}_{V \mapsto PV}[\Gamma_V] = \begin{pmatrix} \tilde A_{P} & \tilde{A}_{PV}  \\ \tilde{A}_{VP}  & A{(\ell)} \end{pmatrix} + 
\begin{pmatrix} \tilde{B}_{P}  \\ B^\ell \end{pmatrix}
\Gamma_{V} 
\begin{pmatrix} \tilde{B}_{P}^T  & (B^T)^\ell \end{pmatrix} .
\end{align}

According to \thmref{thm:steadystate}, the correlation matrix at the bottom virtual bond is given by
\begin{align}
    \oprnorm{\Gamma_V - \Gamma^{(\rs)}} \leq e^{-y\ln(1/r)+\calO(\ln y)}, \quad \Gamma^{(\rs)} = \begin{pmatrix} U^{y}\,\Gamma_{u}^{(0)}\,(U^\dag)^{y} & 0\\ 0&  \Gamma_{d}^{(\rs)}\end{pmatrix}
\end{align}
where $0< r < 1$ is the spectral radius of $B_d$.

Since $B^\ell$ acts as a real orthogonal matrix in the preserved subspace, and $(B^T)^\ell B^\ell + \tilde{B}_P^T \tilde{B}_P = I$, we have $\tilde{B}_P = (0, \tilde{B}_{P,d})$ acting trivially in the preserved subspace.
Hence, the correlation matrix on the physical legs $\Gamma_P = \tilde{A}_{P} + \tilde{B}_P \Gamma_V \tilde{B}_P^T$ can be approximated by $\Gamma_{P,\text{bulk}} \equiv \tilde{A}_{P} + \tilde{B}_{P,d} \Gamma_{d}^{(\rs)} \tilde{B}_{P,d}^T$, which is independent of the state at the bottom boundary, i.e.
\begin{align}
\oprnorm{\Gamma_P - \Gamma_{P,\text{bulk}}} = \oprnorm{\tilde{B}_P \left(\Gamma_V - \Gamma^{(\rs)}\right)\tilde{B}_P^T} \leq e^{-y\ln(1/r)+\calO(\ln y)}.
\end{align}
\end{proof}

The spectral norm of $\Gamma_P - \Gamma_{P,\mathrm{bulk}}$ sets an upper bound on each matrix element, i.e., the difference between a two-point correlation function in two Gaussian states.
Theorem~\ref{thm:initialstateindependence} suggests that, away from the bottom boundary, the physical correlation function is fully determined by the MPS tensor. 
We note that this is a special property for Gaussian fermion MPS. In general cases, only injective MPS fully specifies the correlation function from individual tensors.
In a non-injective MPS representing a cat state, changing the boundary condition can convert it into a trivial product state~\cite{Cirac:2020obd}. 
In what follows, we further determine the correlation length from the MPS tensor.

\begin{thm}\label{thm:correlationlength}
    For a Gaussian fermion matrix product state corresponding to the channel in Eq.~\eqref{eq:gaussian channel linear equation}, the correlation length is $\xi \leq 1/\ln(1/r)$, where $r$ is the spectral radius of $B_d$.
\end{thm}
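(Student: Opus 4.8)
The plan is to compute the connected two-point correlation function of two local physical operators separated by a large distance and show it decays with rate at least $\ln(1/r)$. Since the state is Gaussian, it suffices to bound matrix elements of the physical correlation matrix $\Gamma_P$ between two well-separated regions $P_1$ and $P_2$ along the spatial ($x$) direction; equivalently, in a translation-invariant setting one can work in each momentum sector. First I would recall from the proof of \thmref{thm:initialstateindependence} that $\Gamma_{P,\mathrm{bulk}} = \tilde A_P + \tilde B_{P,d}\,\Gamma_d^{(\rs)}\,\tilde B_{P,d}^T$, and that all the dependence on the dissipative subspace enters through powers of $B_d$ (both through $\tilde B_{P,d}$, which involves $B_d^s$ for $s$ up to the region size, and through $\Gamma_d^{(\rs)} = \sum_{s\ge 0} B_d^s A_d (B_d^T)^s$).

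The key step is the following: the term $\tilde A_P$ in $\tilde{\calN}_{V\mapsto PV}$ comes from the accumulated noise of the channels strictly inside the slab, and for two physical sites at spatial separation $R$, the contribution to $(\tilde A_P)_{P_1 P_2}$ requires a product of at least $\sim R$ consecutive off-diagonal (spatial-hopping) blocks of the single-tensor matrices, hence is suppressed like $r^{\Theta(R)}$ by the same Jordan-form estimate used in the footnote of \thmref{thm:steadystate} (any eigenvalue of $B_d$ has norm $\le r<1$, and the preserved block $U$ does not couple to physical legs). Similarly, the cross term $\tilde B_{P,d}\,\Gamma_d^{(\rs)}\,\tilde B_{P,d}^T$ factors through $B_d^{s_1} A_d (B_d^T)^{s_2}$ with $s_1+s_2 \gtrsim R$, again giving $r^{\Theta(R)}$. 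Assembling these bounds, $\|(\Gamma_{P,\mathrm{bulk}})_{P_1 P_2}\| \le e^{-R\ln(1/r) + \calO(\ln R)}$, which by definition of the correlation length yields $\xi \le 1/\ln(1/r)$. I would phrase the geometric decomposition cleanly by noting that the isometric tensor is compactly supported, so a finite power of the transfer-type map must be composed to connect sites a distance $R$ apart, and the only non-unitary pieces available carry a factor bounded by $r$ each.

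The main obstacle I anticipate is being careful that the preserved modes genuinely cannot mediate long-range correlations: $U$ is orthogonal and hence does not decay, so one must use the structural fact from \lemref{lemma:akbkeigenvec} and \thmref{thm:initialstateindependence} that $\tilde B_P$ acts trivially on $V_u$ (since $(B^T)^\ell B^\ell + \tilde B_P^T\tilde B_P = I$ forces $\tilde B_{P,u}=0$), so the physical legs only ever see the dissipative block. A secondary technical point is handling the non-normality of $B_d$ (nontrivial Jordan blocks), which only inflates the bound by the polynomial $\calO(\ln R)$ prefactor, exactly as in \eqnref{eq:spectral_norm_matrix_power}; this is routine given the footnote already in the text. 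One should also state precisely what ``correlation length'' means here — the decay rate of $\sup$ over local operators of the connected correlator — so that the inequality $\xi \le 1/\ln(1/r)$ is unambiguous; I would add one sentence fixing that convention before invoking the bound.
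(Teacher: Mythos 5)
Your argument is essentially the paper's: the off-diagonal block of the physical correlation matrix between two sites at separation $R$ along the MPS chain factors through $(B_d^T)^{R-1}$, because $B_P$ (equivalently $\tilde{B}_P$) annihilates the preserved block, and the Jordan-form estimate in Eq.~\eqref{eq:spectral_norm_matrix_power} then gives $e^{-R\ln(1/r)+\calO(\ln R)}$, hence $\xi\le 1/\ln(1/r)$. One correction of framing: this bound controls decay along the MPS-chain direction --- the channel's \emph{time} direction, i.e.\ the $y$ direction of the 2D isoTNS --- not the spatial $x$ direction as you state; the $x$-direction exponential decay is Cor.~\ref{cor:exponentialdecayphysical} and rests on the analyticity of $\Gamma_{k,d}^{(\rs)}$ in $k$, a different mechanism (indeed the present bound becomes vacuous, $\xi\to\infty$, as $r\to 1$, which is exactly how the isoTNS can sustain algebraic correlations in the temporal direction).
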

\begin{proof}
The correlation matrix on physical site $i$ and $j$ is given by
\begin{align}
    \Gamma_{ij} &= \calN_{V \mapsto P} \circ \calN^{|j-i|-1} \circ \calN_{V \mapsto VP}[\Gamma^{(\rs)}] \nonumber \\
    &= \begin{pmatrix}
        A_P + B_P\Gamma^{(\rs)} B_P^T & (A_{PV}+B_P \Gamma^{(\rs)} B^T)(B^T)^{|j-i|-1} B_P^T \\
        B_P B^{|j-i|-1} (-A_{PV}^T - B\Gamma^{(\rs)} B_P^T) & A_P + B_P\left(A{(|j-i|)}+B^{|j-i|}\Gamma^{(\rs)}(B^T)^{|j-i|}\right)B_P^T
    \end{pmatrix}
\end{align}
The two-point correlation function between any two Majorana modes $c_{i,\alpha}$ and $c_{j,\beta}$ on sites $i$ and $j$ (matrix element of the off-diagonal block) has an upper bound
\begin{align}
    |\langle i c_{i,\alpha}c_{j,\beta}\rangle| &\le \left\Vert (A_{PV}+B_P \Gamma^{(\rs)} B^T)(B^T)^{|j-i|-1} B_P^T\right\Vert = \left\Vert (A_{PV}+B_P \Gamma^{(\rs)} B^T)\begin{pmatrix}
        0 \\
        (B_d^T)^{|j-i|-1}B_{P,d}^T
    \end{pmatrix}\right\Vert \nonumber \\
    &\le e^{-|j-i|\ln(1/r)+\calO(\ln |j-i|)}.
\end{align}
Here, we use the spectral norm $\oprnormsm{B_d^{|j-i|}}$ in Eq.~\eqref{eq:spectral_norm_matrix_power} and the facts that $A_{PV}$, $B_P$, and $\Gamma^{(\rs)}$ all have a spectral norm that is less than one. 
This gives an upper bound on the correlation length $\xi \leq 1/\ln(1/r)$.
\end{proof}

The correspondence has further implications on the entanglement spectrum of the Gaussian MPS in the bulk, which plays an important role in establishing the no-go theorem in Sec.~\ref{sec:free fermion no go}.
Unlike correlation functions, the entanglement spectrum in the bulk can depend on the bottom boundary condition. 
This is manifest from the channel perspective.
Starting from a given initial state of the Gaussian channel (the boundary vector in \figref{Fig:GFMPS}), information in the dissipative modes (red lines in \figref{Fig:GFMPS}) leaks into the physical legs, determining the physical state near the bottom boundary.
In contrast, the information in the preserved modes (yellow lines in \figref{Fig:GFMPS}) remains in the virtual legs; at the top boundary, these virtual legs become physical degrees of freedom.
If the initial state entangled the dissipative and the preserved modes, the sequentially prepared state would exhibit a finite amount of entanglement between the bottom and top degrees of freedom even in the thermodynamic limit, although the top and the bottom are not coupled directly throughout the preparation~\cite{Chen:2024fvu}.
In this case, the entanglement spectrum for the upper half system receives contributions also from the entanglement between the bottom and top edges.
This motivates the following definition that separates the bulk properties from edge contributions.
\begin{defn}[Bulk entanglement spectrum]
Consider the steady-state correlation matrix in the dissipative subspace $\Gamma_{d}^{(\rs)}$. The bulk entanglement spectrum is defined as $\{\varepsilon_{\alpha,\mathrm{bulk}} = \log\frac{1+\lambda_\alpha}{1-\lambda_\alpha}\}$, where $\{\pm i\lambda_\alpha\}$ with $\lambda_\alpha \ge 0$ are eigenvalues of $\Gamma_{d}^{(\rs)}$.
\end{defn}

The bulk entanglement spectrum defined above is independent of boundary conditions and forms a subset of the full single-particle entanglement spectrum, as is elaborated in Sec.~\ref{sec:free fermion no go}.
For any sequentially prepared state, one can always construct a state with identical bulk correlation, and its entanglement spectrum is given by $\{\varepsilon_{\alpha,\mathrm{bulk}}\}$.
When proving no-go results in Sec.~\ref{subsec:nogo_sequential}, we focus on the bulk entanglement spectrum of states that can be prepared by sequential circuits.

\section{No-go theorem for Gaussian fermion sequential circuits}
\label{sec:free fermion no go}

For Gaussian fermion states, the defining feature of chirality is the presence of chiral modes in the single-particle entanglement spectrum, which resembles the energy spectrum on the physical edge~\cite{Kitaev:2005hzj,Fidkowski:2009entanglement,Dubail:2011fpa}.
In this section, we establish the no-go theorem in two steps: we first show that the entanglement spectrum of any state prepared by a staircase sequential circuit is non-chiral (Sec.~\ref{subsec:nogo_sequential}), and then extend the argument to the output states of more general sequential circuit architectures (Sec.~\ref{sec:othercircuits}).

\subsection{No-go theorem for the staircase sequential circuit}
\label{subsec:nogo_sequential}
We consider the staircase sequential circuit in \figref{fig:correspondence channel isotns}~(a) that corresponds to the brick-wall Gaussian fermion channel circuit in \figref{fig:correspondence channel isotns}~(b).
Here, we assume that each site hosts the same number of Majorana modes, the circuit exhibits translational invariance by one unit cell (i.e. two sites) in space and two steps in time.
We prove that the 2D state prepared by such circuits cannot represent a chiral state.
As the key step, we show that the steady state of the channel circuit is analytic in the momentum space (up to removable singularities), indicating the corresponding sequentially prepared 2D state has a non-chiral entanglement spectrum.

The output state $\rho$ of a translationally invariant channel circuit can be fully characterized by the correlation matrix in the momentum space,
\begin{align}
\Gamma_{\alpha\beta,k} \equiv \frac{i}{2} \Tr(\rho[c_{\alpha,k}, c_{\beta,-k}])\,,\quad 
c_{\alpha,k} \equiv \frac{1}{\sqrt{L}} \sum_{x} e^{-i k x}c_{\alpha,x},
\end{align}
where $x$ denotes the positions of unit cells, $\alpha$, $\beta = 1,2,\cdots, 2n$ label the Majorana mode within each unit cell, and $\{c_{\alpha,k},c_{\beta,-k}\} = \delta_{\alpha,\beta}$.
We note that each mode $c_{\alpha,k}$ for $0< k <\pi$ can be regarded as a complex fermion mode, and $c_{\alpha,-k} = c_{\alpha,k}^\dagger$. 
In what follows, we focus on momenta $0< k <\pi$ unless explicitly specified.
The correlation matrix in the momentum space is not necessarily real and satisfies
\begin{equation}
    \Gamma^\dag_k = -\Gamma_k\,,\quad
    \Gamma^*_k = \Gamma_{-k}
    \,,\quad \Gamma_k \Gamma^\dag_k \le I\,.
\end{equation}
The channel circuit of two steps acts on $\Gamma_k$ as a linear map similar to \eqnref{eq:gaussian channel linear equation},
\begin{equation}
    \calN[\Gamma_k] = A_k + B_k\Gamma_kB_k^\dag.
    \label{Eq:oneperiodchannel}
\end{equation}
Here, $A_k$ and $B_k$ are analytic functions in $k$, which depend on the channel in the circuit\footnote{We here explicitly write down the functions $A_k$ and $B_k$ in Eq.~\eqref{Eq:oneperiodchannel}. 
We denote the two-site local channel in the circuit as $\mathsf{N}_{i,i+1}[\Gamma] = \mathsf{A}_{i,i+1} + \mathsf{B}_{i,i+1}\Gamma\mathsf{B}_{i,i+1}^T$.
Here, we assume the two-site channels are identical and use the subscript to label the sites they act on.
The channels associated with the odd and the even time steps of the circuits take the form $\mathcal{N}_{\text{odd(even)}}[\Gamma] = A_{\text{odd(even)}} + B_{\text{odd(even)}}\Gamma B_{\text{odd(even)}}^T$, where $A_{\text{odd}} = \oplus_{i=0}^{L/2} \mathsf{A}_{2i+1,2i+2}$, $B_{\text{odd}} = \oplus_{i=0}^{L/2} \mathsf{B}_{2i+1,2i+2}$, $A_{\text{even}} = \oplus_{i=0}^{L/2} \mathsf{A}_{2i,2i+1}$, and $B_{\text{even}} = \oplus_{i=0}^{L/2} \mathsf{B}_{2i,2i+1}$.

In the momentum space, the correlation matrix $\Gamma_k$ evolves as
\begin{align}
    \calN_{k,\text{odd}}[\Gamma_k] = \sfA + \sfB \Gamma_k \sfB^T, \quad
    \calN_{k,\text{even}}[\Gamma_k] = U_k \sfA U_k^\dag + U_k \sfB U_k^\dag \Gamma_k U_k \sfB^T U_k^\dag, \quad U_k = \begin{pmatrix} 0 & I_n \\ e^{-ik}I_n & 0 \end{pmatrix}.
\end{align}
Here, $\Gamma_k$ is a $2n \times 2n$ matrix with $n$ being the number of Majorana modes on each site, as each unit cell contains two sites.
The channel associated with two time steps is the composition $\calN_k = \calN_{k,\text{even}}\circ \calN_{k,\text{odd}}$, and the associated $A_k$ and $B_k$ are 
\begin{equation}
\label{eqn:AkBk}
A_k = U_k \sfA U_k^\dag + U_k \sfB U_k^\dag \sfA U_k \sfB^T U_k^\dag,\quad B_k = U_k \sfB U_k^\dag \sfB.
\end{equation}
}.
The channel being a CPTP map requires that $A_k^\dag A_k+B_k B_k^\dag \le I$, and $B_k^\dag B_k \le I$.
For our purposes, it is crucial that the functions $A_k$ and $B_k$ are analytic, while their exact forms are not important.

Lemma~\ref{lemma:akbkeigenvec} and \thmref{thm:steadystate} generalize to the case of complex $A_k$ and $B_k$, which allows us to decompose the majorana modes $V_k$ in each momentum sector as $V_k = V_{k,u}\oplus V_{k,d}$, i.e. a direct sum of the preserved modes $V_{k,u}$ and the dissipative modes $V_{k,d}$.
In the long-time limit, the correlation matrix $\Gamma_k$ approaches a block-diagonal form according to \thmref{thm:steadystate}.

The first step of the proof is to show that the steady-state correlation matrix $\Gamma_{d,\rs}$ in the dissipative subspace is analytic in the momentum $k$.
The correlation matrix in the steady state satisfies
\begin{align}
    \Gamma_{k,d}^{(\rs)} = P_{k,d} A_k +  P_{k,d} B_k \Gamma_{d,k}^{(\rs)} B_k^\dag P_{k,d},
\end{align}
where $P_{k,d}$ is the projector onto the dissipative subspace and commutes with $A_k$, $B_k$.
One can formally write $\Gamma_{k,d}^{(\rs)}$ as
\begin{align}
    \Gamma_{k,d}^{(\rs)} = \calL_k^{-1}[P_{k,d} A_k], \quad \calL_k[\Gamma] \equiv \Gamma -  P_{k,d} B_k \Gamma B_k^\dag P_{k,d}.
\end{align}
It is important that $B_{k,d} \equiv P_{k,d} B_k$ has a spectral radius less than unity, indicating that $\calL_k$ is an invertible map.
Hence, the analyticity of $\Gamma_{k,d}^{(\rs)}$ hinges on the analyticity of the projector $P_{k,d}$.

The projector onto the dissipative subspace is analytic except for at most removable singularities at a finite number of momenta according to the following Lemma.
Here, we consider the projector $P_{k,u}$ onto the preserved subspace, which is the orthogonal complement of $P_{k,d}$.
\begin{lemma}\label{lemma:projectoranalytic}
The dimension of the preserved space $V_{k,u}$ is a constant in $k$ except for a finite number of momenta.
Furthermore, the hermitian projector $P_{k,u}$ onto $V_{k,u}$ can be decomposed into two hermitian projectors $\tilde{P}_{k,u}$ and $P'_{k,u}$, i.e. $P_{k,u} = \tilde{P}_{k,u} + P'_{k,u}$, with the following properties: (1) $P'_{k,u} = 0$ except for a finite number of momenta; (2) $\tilde{P}_{k,u}$ can be extended to an analytic function of $k$ in the vicinity of the real axis.
\end{lemma}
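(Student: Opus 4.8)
The plan is to analyze $B_k$ as an analytic family of matrices and isolate the part of its "unit-modulus eigenspace" that varies analytically from the part that jumps. The key structural input from Lemma~\ref{lemma:akbkeigenvec} (in its complex version) is that whenever $|b|=1$ and $B_k|v\rangle = b|v\rangle$, the vector $|v\rangle$ is also a left eigenvector, so the unit-modulus part of $B_k$ is \emph{normal} on $V_{k,u}$ and $V_{k,u}$ is a spectral subspace carved out by the condition $B_k^\dagger B_k|v\rangle = |v\rangle$. Equivalently, $V_{k,u} = \ker(I - B_k^\dagger B_k)$, and $M_k \equiv B_k^\dagger B_k$ is a real-analytic (indeed, entire in a strip around the real axis, since $A_k,B_k$ are) family of Hermitian matrices with $0 \le M_k \le I$. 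So the whole question reduces to: how does $\ker(I - M_k)$ behave as $k$ varies, for an analytic Hermitian family $M_k$?

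First I would invoke Rellich's/Kato's theorem for analytic Hermitian families: there exist real-analytic eigenvalue functions $\mu_1(k),\dots,\mu_{2n}(k)$ and a real-analytic orthonormal eigenbasis $|e_1(k)\rangle,\dots,|e_{2n}(k)\rangle$ of $M_k$, defined on the real axis (and, by analyticity, in a complex neighborhood of it). For each branch $\mu_j$, either $\mu_j \equiv 1$ identically, or $\mu_j(k) = 1$ only on a discrete set (an analytic function is either constant or has isolated zeros of $1-\mu_j$); on a compact period this discrete set is finite. Let $S \subseteq \{1,\dots,2n\}$ be the branches with $\mu_j \equiv 1$, and define $\tilde P_{k,u} \equiv \sum_{j\in S} |e_j(k)\rangle\langle e_j(k)|$. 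This is a Hermitian projector, real-analytic in $k$ in a neighborhood of the real axis, giving property (2). Then set $P'_{k,u} \equiv P_{k,u} - \tilde P_{k,u} = \sum_{j\notin S,\ \mu_j(k)=1} |e_j(k)\rangle\langle e_j(k)|$, which is a Hermitian projector (sum of orthogonal rank-one projectors not already in $\tilde P_{k,u}$), and it vanishes except at the finitely many $k$ where some non-constant branch hits $1$; this is property (1). The dimension of $V_{k,u}$ equals $|S|$ plus the number of extra branches touching $1$, hence equals the constant $|S|$ away from that finite exceptional set.

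The one point needing care is the claim that $V_{k,u}$ genuinely coincides with $\ker(I-M_k)$ as a \emph{subspace} on which the dynamics is unitary — i.e. that there is no further shrinkage from Jordan-block effects in $B_k$ restricted to this kernel. This is exactly the content of the complex generalization of Lemma~\ref{lemma:akbkeigenvec}: on $\ker(I-B_k^\dagger B_k)$, $B_k$ acts as a unitary (it maps the kernel to itself isometrically and has no nontrivial Jordan structure there), while $A_k$ annihilates it. I would state and prove this generalization explicitly (the proof is verbatim the same as Lemma~\ref{lemma:akbkeigenvec}, replacing transpose by conjugate transpose), and then the decomposition $V_k = V_{k,u}\oplus V_{k,d}$ with $V_{k,d}$ the orthogonal complement is well-defined pointwise, with $B_{k,d} = P_{k,d}B_k$ of spectral radius $<1$ at every $k$. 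The main obstacle is really bookkeeping rather than depth: one must (i) confirm the exceptional set is finite, which follows from compactness of the Brillouin zone together with the identity theorem applied to each $1-\mu_j(k)$, and (ii) make sure the Rellich parametrization, a priori guaranteed only on the real line, extends to a complex strip — this is automatic because $M_k$ is itself the restriction to $\mathbb{R}$ of a holomorphic matrix function, and Kato's analytic perturbation theory then gives holomorphic eigenprojections locally away from crossings, which patch together into the global real-analytic (hence locally holomorphic) frame on a neighborhood of $[0,\pi]$. Once this is in place, $P_{k,d} = I - \tilde P_{k,u} - P'_{k,u}$ is analytic up to removable singularities at the finite exceptional set, and the subsequent step — analyticity of $\Gamma_{k,d}^{(\rs)} = \mathcal{L}_k^{-1}[P_{k,d}A_k]$ via invertibility of $\mathcal{L}_k$ — goes through.
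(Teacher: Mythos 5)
Your reduction hinges on the identification $V_{k,u} = \ker(I - B_k^\dagger B_k)$, and that equality is false in general. The complex version of Lemma~\ref{lemma:akbkeigenvec} gives only one inclusion: every unit-modulus eigenvector of $B_k$ lies in $\ker(I - B_k^\dagger B_k)$. The converse fails because $\ker(I-B^\dagger B)$ is just the set of vectors on which $B$ acts isometrically \emph{once}; it need not be $B$-invariant and can be strictly larger than the span of unit-modulus eigenvectors. A minimal counterexample satisfying the CPTP constraint \eqref{eq:CPTP condition} is $A=0$, $B = \left(\begin{smallmatrix} 0 & 1 \\ 0 & 0\end{smallmatrix}\right)$: here $B$ is nilpotent, so $V_u = \{0\}$, yet $\ker(I-B^\dagger B)$ is one-dimensional. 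If such a $B_k$ occurs over a range of momenta, your $\tilde P_{k,u}$ (built from the branch of $B_k^\dagger B_k$ identically equal to $1$) has rank one for all $k$ while the true $P_{k,u}$ vanishes, so $P'_{k,u} = P_{k,u}-\tilde P_{k,u}$ is not a projector and is nonzero on an entire interval. The step you flag as ``exactly the content of the complex generalization of Lemma~\ref{lemma:akbkeigenvec}'' is therefore precisely where the argument breaks: that lemma does not say $B_k$ acts unitarily on all of $\ker(I-B_k^\dagger B_k)$.

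The gap is repairable within your framework: the correct Hermitian characterization is $V_{k,u} = \ker\bigl(I - (B_k^{N})^\dagger B_k^{N}\bigr)$ for any fixed $N \ge 2n+1$, since the decreasing chain of subspaces $\{v: \lVert B_k^m v\rVert = \lVert v\rVert\}$ stabilizes after at most $\dim V$ steps to a $B_k$-invariant subspace on which $B_k$ is unitary. The family $(B_k^{N})^\dagger B_k^{N}$ is still analytic and Hermitian on the real axis, so your Rellich/identity-theorem argument then goes through essentially unchanged and yields the decomposition into an identically-unit branch sum $\tilde P_{k,u}$ and a finitely supported $P'_{k,u}$. Note that this fixed route is genuinely different from the paper's proof, which never forms a Hermitian family at all: it tracks the eigenvalues $b_k$ of $B_k$ as Puiseux branches of an algebraic function and uses the constraint $|b_k|\le 1$ on the real axis to show each branch either has unit norm identically or only at isolated momenta. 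The eigenvalue-branch argument handles the non-normal part of $B_k$ automatically, which is exactly the structure your Hermitian reduction loses unless the power $N$ is inserted.
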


We leave the proof in Appendix~\ref{app:proofLemProjectorAnalytic}. 
This lemma implies that the preserved fermion modes decompose into continuous bands, the projector onto which is an analytic function $\tilde{P}_{k,u}$, and a finite number of discrete modes at special momenta captured by $P'_{k,u}$.
The analyticity of $\tilde{P}_{k,u}$ implies that the Hilbert space of the preserved modes in the continuous bands are spanned by exponentially localized Wannier orbitals in real space\footnote{
As an example, we consider a channel circuit made of two-site channels that preserve the fermion modes on the left and propagate them to the right.
These preserved modes are described by the strictly local Wannier orbitals.
It may come as a surprise that the preserved modes in this completely local channel dynamics can have a non-trivial GNVW index~\cite{Kitaev:2005hzj,Gross:2012ygy}.
We also note that only the preserved bands together may have a description in terms of the local Wannier orbitals. 
One can envision a channel with two preserved modes of eigenvalues $b(k)=\pm e^{ik/2}$. 
In this case, one cannot separate the two bands such that the projector onto each one is analytic in $k$; therefore, there is no local Wannier orbitals for each band.
}.

Next, we show that the correlation matrix $\Gamma_{k,d}^{(\rs)}$ determines the entanglement spectrum of the 2D state prepared in the corresponding sequential circuit.
Considering the reduced density matrix $\rho$ of the region above a horizontal cut in the 2D system, the many-body entanglement Hamiltonian is
\begin{align}
H_E \equiv -\log(\rho) =\frac{1}{2}\sum_{k} \sum_{\alpha,\beta} h_{\alpha\beta,k} c_{\alpha,-k} c_{\beta,k} + \text{const.}\,, 
\end{align}
where $h_k$ is the single-particle entanglement Hamiltonian, directly related to the correlation matrix $\Gamma_k$ in the subregion
\footnote{The relation between entanglement spectrum and the correlation matrix can be easily derived using the eigenbasis of the entanglement Hamiltonian. If a complex eigenmode $a_\alpha = c_1 + ic_2$ has an ``entanglement energy'' of $\varepsilon_\alpha$, the average occupation number $i\<c_1 c_2\> = e^{-\varepsilon_\alpha}/(1+e^{-\varepsilon_\alpha})$. The corresponding eigenvalue of the correlation matrix $i\lambda_\alpha = \pm (i\<c_1 c_2\> - i\<c_2 c_1^\dag\>)/2$. Therefore $\varepsilon_{\alpha} = \log\frac{1+\lambda_\alpha}{1-\lambda_\alpha}$.
}
\begin{equation}
    h_k = \log \frac{I - i\Gamma_k}{I + i \Gamma_k}\,.
    \label{Eq:entHam}
\end{equation}
The single-particle entanglement Hamiltonian satisfies $h^\dag_k = h_k$ and $h^*_k = -h_{-k}$, and its eigenvalues, i.e. the entanglement spectrum, obey $\varepsilon_k = -\varepsilon_{-k}$.
Thus, only the eigenmodes for $k\in[0,\pi]$ are independent.

In a sequentially prepared state, the entanglement spectrum associated with a horizontal cut is the same as the spectrum of the evolved density matrix in the channel dynamics.
Thus, one can replace $\Gamma_k$ in \eqnref{Eq:entHam} with the correlation matrix in the channel dynamics.
Note that the entanglement spectrum has contributions from the bulk and the boundary as explained at the end of Sec.~\ref{subsec:GaussianChannelKitaevChain}.
The bulk entanglement spectrum, pertaining to the bulk correlation in the sequentially prepared state, is governed by a single-particle entanglement Hamiltonian determined by $\Gamma_{k,d}^{(\rs)}$,
\begin{equation}
    h_\text{bulk,k} \equiv \log \frac{I - i\Gamma_{k,d}^{(\rs)}}{I + i \Gamma_{k,d}^{(\rs)}}\,.
    \label{Eq:bulkentHam}
\end{equation}
Since $\Gamma_{k,d}^{(\rs)}$ is analytic up to removable singularities, the eigenvalues of $\Gamma_{k,d}^{(\rs)}$ and also the bulk entanglement spectrum must be continuous up to removable singularities\footnote{The entanglement spectrum may diverge symmetrically at a certain momentum $k_0$: $\lim_{k\rightarrow k_0^+}\epsilon_k= \lim_{k\rightarrow k_0^-}\epsilon_k = \pm\infty$ when $\Gamma_{k_0,d}^{(\rs)}\rightarrow \pm 1$. Physically, it means that one mode is completely occupied/empty. We regard this type of divergence as a removable singularity.}.
In particular, such a bulk entanglement spectrum is non-chiral, because it cannot change discontinuously from a positive to a negative value.
\begin{thm}
\label{thm:nogo}
The state prepared by a translationally invariant Gaussian fermion sequential circuit has a continuous bulk entanglement spectrum up to removable singularities in the thermodynamic limit. In particular, the bulk entanglement spectrum is non-chiral.
\end{thm}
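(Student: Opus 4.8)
The plan is to assemble the theorem from the ingredients already established in this section, so the ``proof'' is really a careful chaining of earlier results. First I would recall the channel-state correspondence and Lemma~1: the entanglement spectrum of the sequentially prepared 2D state across a horizontal cut at height $y$ equals the spectrum of the density matrix $\rho(y)$ of the associated 1D Gaussian channel, hence is encoded in the momentum-space correlation matrix $\Gamma_k$ evolved $y$ times. By \thmref{thm:steadystate} (in its complex, momentum-resolved form), $\Gamma_k$ splits into a preserved block that evolves unitarily and a dissipative block that converges to $\Gamma_{k,d}^{(\rs)}$ exponentially fast; the bulk entanglement spectrum is by definition the part governed by $\Gamma_{k,d}^{(\rs)}$ through \eqnref{Eq:bulkentHam}. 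So it suffices to show $\Gamma_{k,d}^{(\rs)}$ is analytic in $k$ up to removable singularities, and then argue that this forces the spectrum to be non-chiral.

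Next I would establish the analyticity of $\Gamma_{k,d}^{(\rs)}$. Writing $\Gamma_{k,d}^{(\rs)} = \calL_k^{-1}[P_{k,d}A_k]$ with $\calL_k[\Gamma] = \Gamma - B_{k,d}\Gamma B_{k,d}^\dagger$, the map $\calL_k$ is invertible because $B_{k,d} = P_{k,d}B_k$ has spectral radius $r<1$, and $\calL_k^{-1}$ can be written as the norm-convergent Neumann series $\sum_{s\ge0} B_{k,d}^s(\cdot)(B_{k,d}^\dagger)^s$. Since $A_k$ and $B_k$ are analytic by construction (the explicit formulas in the footnote show they are finite products of $U_k$, which is entire, and constant matrices), the only possible obstruction to analyticity of $\Gamma_{k,d}^{(\rs)}$ is the projector $P_{k,d}$. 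Here I invoke Lemma~\ref{lemma:projectoranalytic}: $P_{k,u} = I - P_{k,d}$ decomposes as $\tilde P_{k,u} + P'_{k,u}$ where $\tilde P_{k,u}$ extends analytically to a neighborhood of the real axis and $P'_{k,u}$ is supported on finitely many momenta. Away from that finite set, $P_{k,d} = I - \tilde P_{k,u}$ is analytic, the Neumann series has a uniform geometric bound in a complex strip, and standard results on locally uniform limits of analytic functions give analyticity of $\Gamma_{k,d}^{(\rs)}$ there; the finitely many exceptional momenta are at worst isolated singularities, which (being points where the eigenvalues stay bounded in $[-1,1]$, or diverge symmetrically as noted in the footnote) are removable in the relevant sense.

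Finally I would conclude non-chirality. The eigenvalues $\lambda_\alpha(k)$ of $\Gamma_{k,d}^{(\rs)}$, and hence the bulk single-particle entanglement energies $\varepsilon_{\alpha,\mathrm{bulk}}(k) = \log\frac{1+\lambda_\alpha(k)}{1-\lambda_\alpha(k)}$ obtained from \eqnref{Eq:bulkentHam}, are continuous functions of $k\in[0,\pi]$ up to removable singularities — eigenvalues of an analytic Hermitian-type family are continuous, and can be grouped into continuous branches. A chiral entanglement spectrum, by contrast, requires a branch that winds, i.e. flows monotonically from $+\infty$ to $-\infty$ (or the reverse) as $k$ traverses the Brillouin zone, which necessarily involves a sign-changing discontinuity that cannot occur for a continuous function; the only divergences allowed are the symmetric $\pm\infty$ ones corresponding to a fully occupied or empty mode, which do not carry chirality. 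Hence the bulk entanglement spectrum is non-chiral.

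I expect the main obstacle to be the careful handling of the exceptional momenta from Lemma~\ref{lemma:projectoranalytic} — in particular, making precise in what sense the singularities of $\Gamma_{k,d}^{(\rs)}$ and of the derived entanglement spectrum are ``removable,'' and ensuring that jumps in $\dim V_{k,u}$ at those points do not secretly hide a chiral crossing. The rest is bookkeeping on top of \thmref{thm:steadystate} and Lemma~\ref{lemma:projectoranalytic}.
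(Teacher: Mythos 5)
Your proposal follows essentially the same route as the paper: the momentum-space version of \thmref{thm:steadystate} to isolate $\Gamma_{k,d}^{(\rs)}$, the representation $\Gamma_{k,d}^{(\rs)}=\calL_k^{-1}[P_{k,d}A_k]$ with invertibility from the spectral radius of $B_{k,d}$ being below one, Lemma~\ref{lemma:projectoranalytic} to control the analyticity of the projector, boundedness of the correlation matrix to classify the exceptional momenta as removable, and continuity of the resulting eigenvalue branches to exclude a chiral crossing. The extra details you supply (the Neumann series for $\calL_k^{-1}$ and uniform convergence in a complex strip) are consistent elaborations of the paper's argument rather than a different approach.
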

The proof of this theorem uses only the analyticity of $A_k$ and $B_k$. It applies to the case where the sequential circuit is not strictly local in space, e.g. when each $U_{x,t}$ in Eq.~\eqref{eq:sequential circuit} has exponential tails in the $x$ direction.

\begin{figure}[t]
\centering
\includegraphics[width=0.8\textwidth]{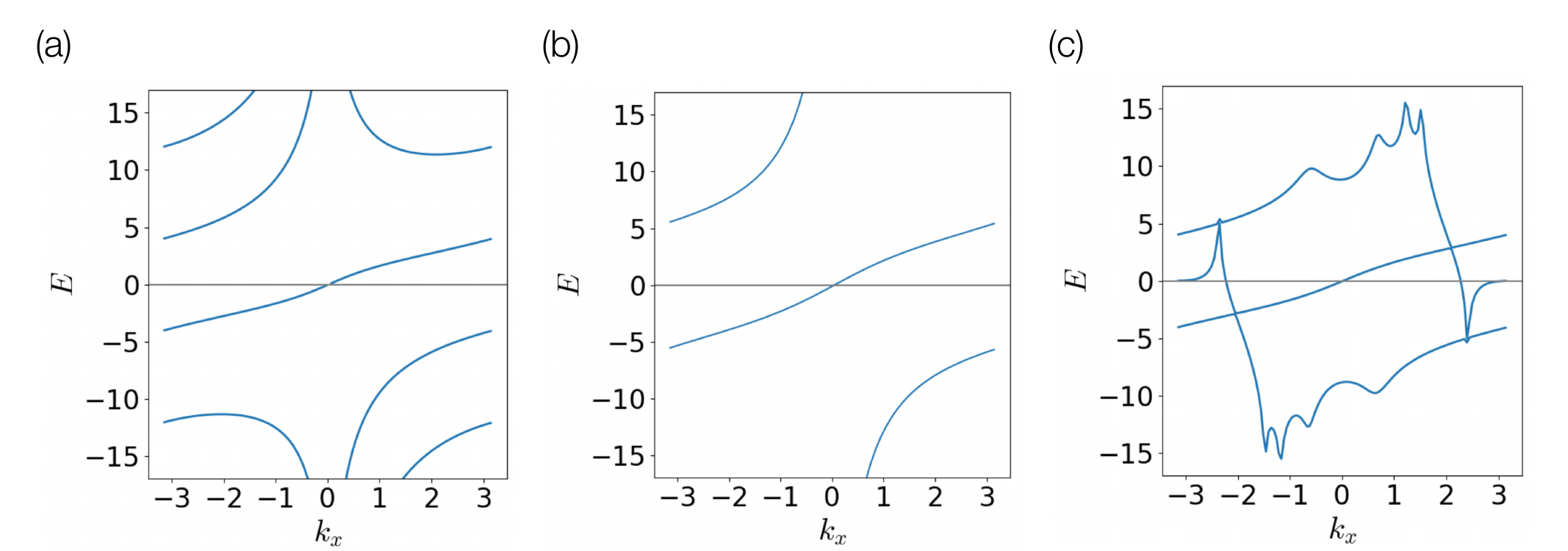}
\hspace{10pt}
\caption{Single-particle entanglement spectrum of the ground state of a p+ip superconductor and its tensor network approximations. (a) Entanglement spectrum of the ground state of the Hamiltonian in Eq.~\eqref{Eq:pipHamiltonian}. (b) Entanglement spectrum of a chiral tensor network state~\cite{Dubail:2013pda}. The entanglement spectrum is discontinuous at $k_x = 0$. (c) Entanglement spectrum of an isoTNS approximation of the ground state. The isoTNS entanglement spectrum mimics the chiral mode. However, the continuity of isoTNS entanglement spectrum requires extra anti-chiral modes, leading to a trivial entanglement spectrum.}
\label{Fig:entanglementspec}
\end{figure}

To gain intuition of this result, we demonstrate the difference between the entanglement spectra in a chiral state and its approximations by general 2D tensor network states and 2D isoTNS in a concrete example.
Fig.~\ref{Fig:entanglementspec}(a) shows the entanglement spectrum of the ground state of a $p+ip$ superconductor with the following mean-field Hamiltonian
\begin{align}
\label{Eq:pipHamiltonian}
    H = \sum_{\mathbf{r}}(-a_{\mathbf{r}}^\dag a_{\mathbf{r} + \hat{x}} - a_{\mathbf{r}}^\dag a_{\mathbf{r} + \hat{y}}
    -a_{\mathbf{r}}a_{\mathbf{r} + \hat{x}} -i  a_{\mathbf{r}}a_{\mathbf{r} + \hat{y}}) + h.c. + \sum_{\mathbf{r}}2a_{\mathbf{r}}^\dag a_{\mathbf{r}}\,,
\end{align}
where $a_{\mathbf{r}},a_{\mathbf{r}}^\dag$ are the annihilation and creation operators of the complex fermion at site $\mathbf{r}$.
The single-particle entanglement spectrum contains $2y$ branches (where $y$ is the location of the cut) and has a chiral mode passing zero at $k = 0$.
When the cut is away from the bottom and top boundaries, energy eigenmodes far from the cut are almost completely occupied or completely empty up to exponential tails, giving rise to an entanglement spectrum linearly proportional to the distance from the cut.
In the thermodynamic limit, one branch changes discontinuously from $+\infty$ to $-\infty$ at $k=0$ (not visible in the figure), corresponding to the occupation of the gapless chiral edge mode at $k=0$ on the bottom and top edges (see Appendix~\ref{app:entanglementspectrum} for details).

Fig.~\ref{Fig:entanglementspec}(b) shows the entanglement spectrum of a chiral tensor network state that represents a $p+ip$ superconductor with algebraically decaying correlation in the bulk~\cite{Dubail:2013pda}. Its entanglement spectrum also has a chiral mode passing $0$ at $k=0$.
However, the entanglement spectrum of a tensor network state always has a finite number of branches, limited by the number of virtual modes on each bond. 
In this case, having a chiral entanglement spectrum requires a discontinuity in the spectrum. 
Note that the entanglement spectrum jumps from $+\infty$ to $-\infty$ at $k = 0$.
In Appendix~\ref{app:entanglementspectrum}, we discuss the relation between the chiral entanglement spectrum and Chern number in the presence of algebraically decaying correlation in the bulk.

Fig.~\ref{Fig:entanglementspec}(c) shows the bulk entanglement spectrum of the isoTNS optimized to approximate the ground state of the Hamiltonian in Eq.~\eqref{Eq:pipHamiltonian} on a $36 \times 36$ lattice.
The spectrum reproduces the middle branch of Fig.~\ref{Fig:entanglementspec}(a), but has extra anti-chiral modes, as required by the continuity of the entanglement spectrum.
This gives rise to a non-chiral entanglement spectrum,
We note that the entanglement spectrum of isoTNS may have cusps in the form of $(k-k_0)^{1/n},\, n\in\mathbb{Z}$ as shown in Fig.~\ref{Fig:entanglementspec}(c).
Here, for the convenience of numerical simulation, the isoTNS we used is rotated by 45 degrees compared to Fig.~\ref{fig:correspondence channel isotns}(b).
Our results still apply to this case, as we explain in Sec.~\ref{sec:othercircuits}.

To close this section, we discuss the implications of Lemma~\ref{lemma:projectoranalytic} on correlation functions in the steady state of channel dynamics and in the 2D state.
First, the correlation matrix $\Gamma_{k,d}^{(\rs)}$ being analytic with at most removable singularities implies the following corollary.
\begin{cor}
In a translationally invariant 1D local Gaussian fermion channel, the correlation functions between dissipative modes decay exponentially in the steady state in the thermodynamic limit.\label{thm:exponentialdecay}
\end{cor}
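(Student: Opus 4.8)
The plan is to derive Corollary~\ref{thm:exponentialdecay} directly from the analyticity of $\Gamma_{k,d}^{(\rs)}$ established via Lemma~\ref{lemma:projectoranalytic}, combined with the standard fact that Fourier coefficients of an analytic periodic function decay exponentially. First I would note that the real-space two-point correlation between dissipative Majorana modes at sites $x$ and $x'$ is, up to the appropriate projections onto $V_{k,d}$, the Fourier coefficient
\begin{equation}
    \left(\Gamma_{d}^{(\rs)}\right)_{\alpha\beta,\,x-x'} = \frac{1}{2\pi}\int_{-\pi}^{\pi} dk\; e^{ik(x-x')}\, \left(P_{k,d}\,\Gamma_{k,d}^{(\rs)}\,P_{k,d}\right)_{\alpha\beta}\,.
\end{equation}
So the claim reduces to showing that $k \mapsto P_{k,d}\,\Gamma_{k,d}^{(\rs)}\,P_{k,d}$ is (the boundary value of) a function analytic in a strip $|\mathrm{Im}\, k| < \delta$ around the real axis, with at worst removable singularities at finitely many real momenta; then contour-shifting the integral by $\delta$ yields the bound $|(\Gamma_{d}^{(\rs)})_{x-x'}| \le C e^{-\delta |x-x'|}$, which is exactly exponential decay with correlation length $\xi \le 1/\delta$.

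The key steps, in order, are: (1) invoke Lemma~\ref{lemma:projectoranalytic} to write $P_{k,u} = \tilde P_{k,u} + P'_{k,u}$ with $\tilde P_{k,u}$ analytic near the real axis and $P'_{k,u}$ supported at finitely many momenta, hence $P_{k,d} = I - P_{k,u}$ is analytic up to removable singularities at those finitely many momenta; (2) recall from the text that $\Gamma_{k,d}^{(\rs)} = \calL_k^{-1}[P_{k,d}A_k]$ where $\calL_k[\Gamma] = \Gamma - P_{k,d}B_k\Gamma B_k^\dagger P_{k,d}$, and that since $B_{k,d} = P_{k,d}B_k$ has spectral radius bounded away from $1$ uniformly (at least away from the special momenta, and in a complex neighborhood by continuity), $\calL_k$ is invertible with inverse given by the convergent Neumann series $\sum_{s\ge 0} B_{k,d}^s(\cdot)(B_{k,d}^\dagger)^s$; (3) conclude that $\Gamma_{k,d}^{(\rs)}$, being built from $A_k$, $B_k$ (entire/analytic by construction of the circuit) and $P_{k,d}$ (analytic up to removable singularities) through composition, inversion of a nonsingular analytic operator, and the convergent series, is itself analytic in a strip around the real axis except for removable singularities at the finite set of special momenta; (4) since removable singularities do not obstruct contour deformation, shift the contour of the Fourier integral into the complex plane and pick up the exponential suppression factor $e^{-\delta|x-x'|}$, possibly at the cost of polynomial-in-$|x-x'|$ prefactors if the analytic continuation has poles that must be skirted — but the text's framing ("up to removable singularities") indicates no genuine poles, so the decay is cleanly exponential. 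This also dovetails with Theorem~\ref{thm:correlationlength}, which already gives $\xi \le 1/\ln(1/r)$ by a more direct operator-norm argument; the present corollary is the momentum-space/analyticity restatement emphasizing that it is the dissipative sector alone that decays.

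The main obstacle I anticipate is controlling the behavior near the finitely many special momenta where $P_{k,d}$ can jump in rank (bands touching the unit circle) — one must verify that the contributions of $P'_{k,u}$ and any $(k-k_0)^{1/n}$-type cusps mentioned in the text genuinely integrate to something decaying no slower than exponentially, i.e. that these are integrable singularities whose Fourier tails are at worst power-law and are dominated by the exponential bound coming from the analytic bulk part. Concretely, a branch-point cusp $(k-k_0)^{1/n}$ contributes a Fourier tail of order $|x-x'|^{-1-1/n}$, which is power-law, not exponential; so strictly speaking the cleanest route is to not go through a pointwise contour shift but instead reprove the statement by translating Theorem~\ref{thm:initialstateindependence} or Theorem~\ref{thm:correlationlength} into the momentum-sector language: in each momentum sector the dissipative block evolves under a $0$D Gaussian channel with $\|B_{k,d}^t\| \le e^{-t\ln(1/r)+\calO(\ln t)}$ uniformly in $k$ (using the uniform spectral-radius bound $r<1$ over the compact Brillouin zone away from the measure-zero special set, and a limiting argument there), and then the real-space correlation is $\le \int dk\, \|B_{k,d}\|\cdot(\text{bounded}) \le e^{-|x-x'|\ln(1/r) + \calO(\ln|x-x'|)}$ directly from the off-diagonal-block estimate in the proof of Theorem~\ref{thm:correlationlength}. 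I would present the short analyticity argument as the conceptual statement and fall back on this uniform operator-norm estimate to rule out the cusp loophole, making the exponential decay rigorous.
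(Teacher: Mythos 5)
Your main line of argument --- Lemma~\ref{lemma:projectoranalytic} gives analyticity of $P_{k,d}$ away from finitely many momenta, $\Gamma_{k,d}^{(\rs)} = \calL_k^{-1}[P_{k,d}A_k]$ inherits that analyticity, and the Fourier transform of a function analytic in a strip decays exponentially --- is exactly the paper's proof. What the paper adds, and what you are missing, is the one observation that disposes of the special momenta: for real $k$, $\Gamma_{k,d}^{(\rs)}$ is a physical correlation matrix and hence has bounded matrix elements, so the isolated singularities of its analytic continuation are removable; a removable singularity contributes only $\calO(1/L)$ to the Fourier sum and nothing in the thermodynamic limit.

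Your handling of those special momenta has two problems. First, the $(k-k_0)^{1/n}$ cusps you worry about live in the \emph{eigenvalues} of $B_k$ (and hence in the entanglement spectrum), not in the \emph{matrix elements} of $\Gamma_{k,d}^{(\rs)}$; the corollary Fourier-transforms matrix elements, and an analytic matrix family can perfectly well have eigenvalues with algebraic branch points, so the cusp is not actually a loophole in the contour-shift argument. Second, and more seriously, the fallback you propose to close that (non-)loophole is itself broken: there is in general no uniform spectral-radius bound $r<1$ over the Brillouin zone minus the special set, because when a discrete preserved mode sits at $k_0$ the dissipative spectral radius $r_k\to 1$ as $k\to k_0$ --- the paper uses exactly this fact to explain why the temporal correlation length of an isoTNS can be infinite. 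Hence the Neumann series $\sum_s B_{k,d}^s(\cdot)(B_{k,d}^\dagger)^s$ and the operator-norm estimate from Theorem~\ref{thm:correlationlength} degenerate precisely where you invoke them, and no compactness or ``limiting argument'' can restore uniformity. The correct repair is the boundedness-implies-removability step above: keep your steps (1)--(4), drop the operator-norm fallback, and insert that observation.
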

\begin{proof}
By Lemma~\ref{lemma:projectoranalytic}, the projector onto dissipative modes $P_{k,d}$ is analytic in the vicinity of the real axis except for at most a finite number of momenta where $P_{k,u}\neq\tilde{P}_{k,u}$.
Since $\Gamma_{k,d}^{(\rs)} = \mathcal{L}_k^{-1}[A_k]$ represents a correlation matrix for real $k$, its matrix elements are bounded.
This implies that $\Gamma_{k,d}^{(\rs)}$ is analytic up to at most removable singularities.
Fourier transforming $\Gamma_{k,d}^{(\rs)}$ to real space, the removable singularity gives $\calO(1/L)$ contribution, which is negligible in the thermodynamic limit, and the analytic part leads to an exponentially decaying correlation function.\footnote{When using numerically optimized isoTNS to approximate a gapped chiral state in a finite-size system, the fermion two-point correlation can asymptote to an $\calO(1/L)$ value~\cite{wu2025alternatinggaussianfermionicisometric}.} 
\end{proof}

We note that dissipative modes are defined in the momentum space and are in general not local in real space. 
We further note that for any 1D local translation-invariant Gaussian fermion channel, one can choose an initial state such that correlation functions of all local operators in the steady state decay exponentially. 
In particular, taking the maximally mixed state as the initial state, the correlation function in the steady state $\Gamma^{(\rs)}_k = (0,0;0,\Gamma_{k,d}^{(\rs)})$ is analytic in $k$, therefore, decays exponentially in space.

We note that, if the channel dynamics indeed generated a 2D chiral state, the correlation matrix $\Gamma_{k,d}^{(\rs)}$ would have a chiral spectrum, i.e. with discontinuities in the momentum space.
This would indicate the correlation function in the real space to decay algebraically in the form $1/x$ with $x$ being the separation.

Second, deep in the bulk of the 2D isoTNS, the correlation functions in physical state between operators on the same row are determined by the correaltion between dissipative modes in the steady state, giving rise to the following corollary.
\begin{cor}
\label{cor:exponentialdecayphysical}
    The physical correlation function of a 2D Gaussian fermion isoTNS decays exponentially along the spatial direction.
\end{cor}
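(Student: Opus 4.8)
\textbf{Proof proposal for Corollary~\ref{cor:exponentialdecayphysical}.}

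The plan is to combine Theorem~\ref{thm:initialstateindependence} (which identifies the bulk physical correlation matrix $\Gamma_{P,\mathrm{bulk}}$ as boundary-independent and built from the steady-state dissipative correlation $\Gamma_d^{(\rs)}$) with Corollary~\ref{thm:exponentialdecay} (exponential decay of steady-state correlations among dissipative modes), working momentum sector by momentum sector as in Theorem~\ref{thm:correlationlength}. Concretely, deep in the bulk — say the operators sit at distance $y \gg \xi$ from both the top and bottom boundaries — the physical correlation function between two Majorana operators $c_{i,\alpha}$, $c_{j,\beta}$ on the same row is, up to an error $e^{-y\ln(1/r)+\calO(\ln y)}$ that I send to zero in the thermodynamic limit, equal to the corresponding matrix element of the bulk correlation matrix. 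First I would write this bulk two-point function exactly as in the proof of Theorem~\ref{thm:correlationlength}: the row segment between sites $i$ and $j$ acts on the virtual correlation matrix, which in the bulk is $\Gamma_k^{(\rs)}=\mathrm{diag}(U^y\Gamma_u^{(0)}(U^\dagger)^y,\ \Gamma_{k,d}^{(\rs)})$, and the off-diagonal (connected) block carries a factor $(B^T)^{|j-i|-1}$ sandwiched between $B_P$-type tensors.

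The key point — the same one exploited in Theorem~\ref{thm:correlationlength} — is that the preserved block of $B$ acts as a unitary, so that the physical $B_P$ tensors project it out: $\tilde B_P = (0,\tilde B_{P,d})$ acts trivially on $V_u$. Hence the preserved modes, which are precisely the ones whose projector $P_{k,u}$ can fail to be analytic and which could in principle carry the boundary-to-boundary entanglement, never feed into the same-row physical correlator in the bulk. What survives is entirely governed by $B_d^{|j-i|}$ acting on $\Gamma_{k,d}^{(\rs)}$ and on $A_{PV}$, $A_P$, all of which have spectral norm $\le 1$. The bound $\lVert B_d^{|j-i|}\rVert \le e^{-|j-i|\ln(1/r)+\calO(\ln|j-i|)}$ from Eq.~\eqref{eq:spectral_norm_matrix_power} then gives exponential decay of every matrix element with correlation length $\xi \le 1/\ln(1/r)$, uniformly in $k$, and the Fourier sum over $k$ only costs a polynomial-in-$L$ prefactor that does not affect the exponential rate.

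The one thing that needs care — and I expect it to be the main obstacle — is disentangling the two sources of the large-$y$ error from the decay in $|j-i|$. Theorem~\ref{thm:initialstateindependence} controls $\Gamma_P$ for a single physical site (or a bounded block) at distance $y$; here I need the analogous statement for a pair of sites at separation $|j-i|$, both at distance $\gtrsim y$ from the boundaries, and I must ensure the approximation error stays $e^{-y\ln(1/r)+\calO(\ln y)}$ while simultaneously extracting the $e^{-|j-i|\ln(1/r)}$ decay of the genuine bulk correlator. This is handled by first replacing $\Gamma_V$ by $\Gamma^{(\rs)}$ (error controlled by Theorem~\ref{thm:steadystate}, exponentially small in $y$) and then running the Theorem~\ref{thm:correlationlength} computation on the idealized bulk state; the error terms add rather than multiply, so in the thermodynamic limit $L\to\infty$ with $y\to\infty$ the bulk correlator is exact and decays exponentially in $|j-i|$. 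A secondary subtlety is that the preserved block contributes an oscillatory but non-decaying piece $U^y\Gamma_u^{(0)}(U^\dagger)^y$ to the diagonal of $\Gamma_V^{(\rs)}$; one must check it genuinely drops out of the \emph{connected} same-row correlator (it does, by the $\tilde B_P=(0,\tilde B_{P,d})$ argument above), so that it cannot resurrect a power-law tail through the boundary-edge entanglement channel discussed at the end of Sec.~\ref{subsec:GaussianChannelKitaevChain}.
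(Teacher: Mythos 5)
Your proposal assembles the right ingredients (Theorem~\ref{thm:initialstateindependence} for boundary-independence, the observation that $\tilde B_P=(0,\tilde B_{P,d})$ removes the preserved block) but then applies the wrong decay mechanism, because it conflates the two directions of the isoTNS. The factor $(B^T)^{|j-i|-1}$ and the bound $\lVert B_d^{t}\rVert\le e^{-t\ln(1/r)+\calO(\ln t)}$ from Theorem~\ref{thm:correlationlength} govern separation along the \emph{channel-time} direction, i.e.\ the $y$ direction of the 2D isoTNS (in the 1D-MPS setting of Sec.~\ref{subsec:GaussianChannelKitaevChain} the chain direction \emph{is} the channel's time direction, which is why powers of $B$ appear there). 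Two physical operators ``on the same row'' are emitted at the \emph{same} time step; no power of $B$ sits between them. Their correlator is a single local isometry applied to the steady-state virtual correlation matrix, so its $x$-dependence is the real-space profile of $\Gamma^{(\rs)}$, i.e.\ the Fourier transform over $k$ of $\Gamma_{k,d}^{(\rs)}$. The exponential decay therefore comes from the analyticity of $\Gamma_{k,d}^{(\rs)}$ in $k$ near the real axis up to removable singularities (Lemma~\ref{lemma:projectoranalytic} together with $\Gamma_{k,d}^{(\rs)}=\calL_k^{-1}[P_{k,d}A_k]$, i.e.\ Corollary~\ref{thm:exponentialdecay}), combined with the fact that the physical legs couple only to the dissipative modes. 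You cite Corollary~\ref{thm:exponentialdecay} in your opening sentence but never actually use it; the body of your argument substitutes the $B_d$-power bound for it.

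This is not a cosmetic issue, because your claim that the bound holds ``uniformly in $k$'' is false in general: the spectral radius $r_k$ of $B_{k,d}$ can approach $1$ at isolated momenta, which is precisely why the paper notes, immediately after the corollary, that the \emph{temporal} correlation of an isoTNS may decay only algebraically. The mechanism you invoke therefore cannot produce a uniform exponential rate; it is the momentum-space analyticity, not any uniform spectral gap of $B_{k,d}$, that protects the spatial direction. To repair the proof, replace the $B_d^{|j-i|}$ step by: $\Gamma_{P,\mathrm{bulk},k}=\tilde A_{P,k}+\tilde B_{P,d,k}\,\Gamma_{k,d}^{(\rs)}\,\tilde B_{P,d,k}^\dag$ is analytic in $k$ up to removable singularities, hence its Fourier transform decays exponentially in $x$; the boundary error $e^{-y\ln(1/r)+\calO(\ln y)}$ from Theorem~\ref{thm:initialstateindependence} is then an additive correction that vanishes deep in the bulk, exactly as you describe in your final paragraph.
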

In contrast, a 2D Gaussian fermion isoTNS may have algebraically decaying correlation in the temporal direction.
When the corresponding channel has a discrete preserved mode at momentum $k_0$, the eigenvalue $b_k$ has a norm arbitrarily close to unity as $k\rightarrow k_0$; by \thmref{thm:correlationlength}, the correlation length in the time direction is infinite. 
This asymmetry between the temporal direction and spatial direction is also observed for interacting isoTNS~\cite{wu2025alternatinggaussianfermionicisometric,Liu:2023gvd}. 
\secref{sec:othercircuits} discusses the correlation functions along other directions in the isoTNS and in other types of isometric tensor network states.

\subsection{Generalization to other sequential circuit architectures}
\label{sec:othercircuits}

We have established the no-go theorem for a specific sequential circuit architecture (\thmref{thm:nogo}). 
Our result implies that the correlation functions decay exponentially in the prepared 2D state (isoTNS) along the $x$-direction (Cor.~\ref{cor:exponentialdecayphysical}, illustrated in Fig.~\ref{Fig:isoTNSspacelikealtisoTNS}(a)).
In this section, we first consider other sequential circuit architectures that sequentially generate the same isoTNS, but along different directions.
We show that the correlation function decays exponentially along other directions and the entanglement spectrum along different cuts in the same isoTNS are non-chiral (Fig.~\ref{Fig:isoTNSspacelikealtisoTNS}(b-c)). 
We then generalize our results to a sequential circuit that generates an isoTNS with a different isometric structure (Fig.~\ref{Fig:isoTNSspacelikealtisoTNS}(d)).
To generate such an isoTNS over $L\times L$ sites, the sequential circuit is of depth $\calO(L^2)$.

\begin{figure}[t]
\centering
\includegraphics[width=0.7\textwidth]{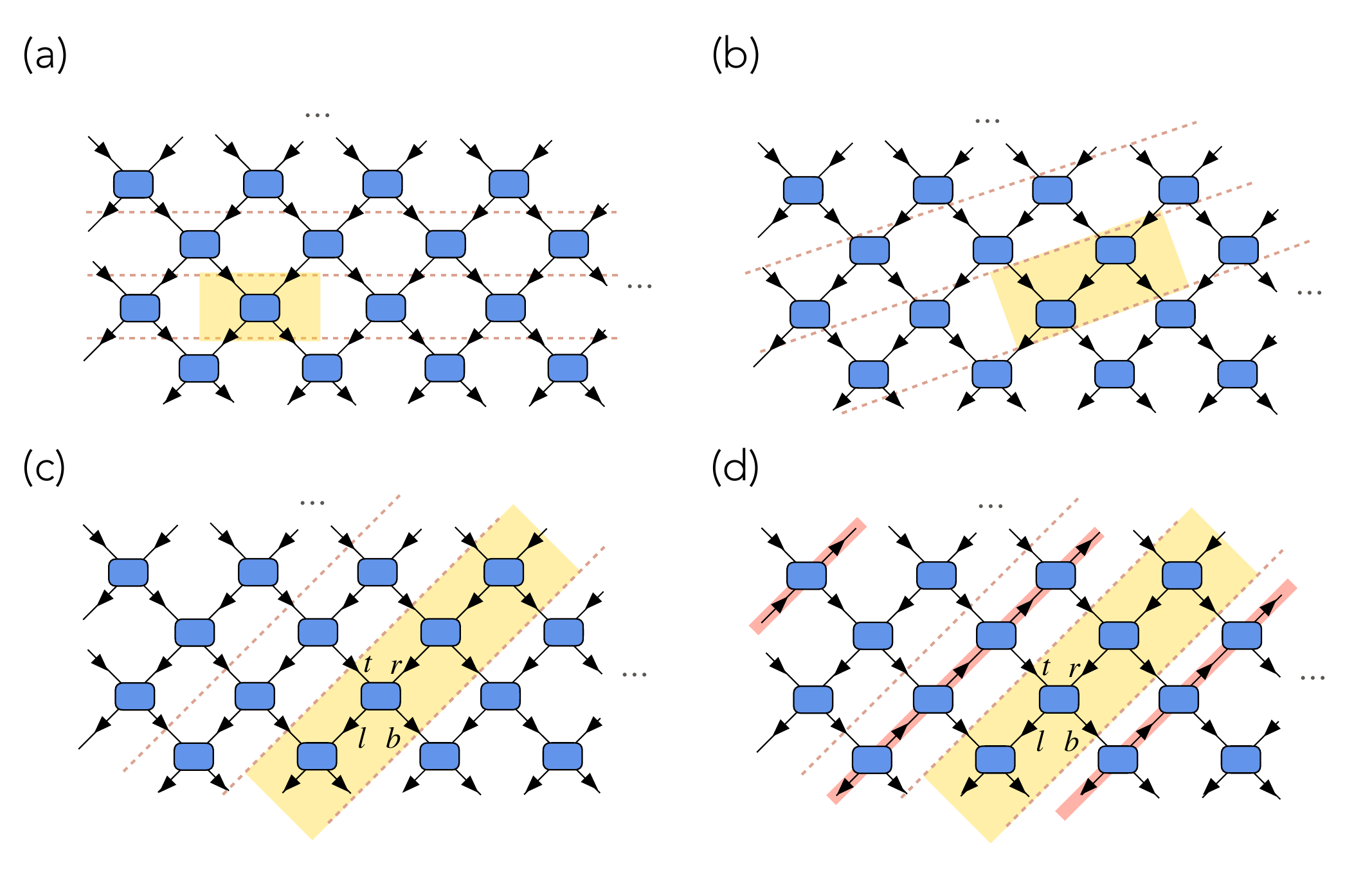}
\hspace{10pt}
\caption{(a-c) virtual states along the space direction, a space-like direction, and the light direction in isoTNS. For any space-like direction, the evolution of the virtual state from one cut to the cut above is given by a finite-depth local channel, where elementary channels in different unit cells (yellow-shaded area) can be applied in parallel. For the light-like direction, the evolution of a virtual state to the cut above is given by a sequential channel. In panel (c), we use $l,r,t,b$ to label the left, right, top, and bottom virtual legs of each tensor. The yellow shaded channels act sequentially from the left to the right, and they compose into a channel from all `$b$' legs along the cut to all `$t$' legs along the cut above. (d) alt-isoTNS: isometric arrows along the red lines are reversed relative to panel (c). The resulting alternating-isoTNS corresponds to a depth-$L^2$ sequential circuit. The evolution of the virtual state from each cut to the cut above is given by a sequential channel, but the order in which local channels act alternates between even and odd steps.}
\label{Fig:isoTNSspacelikealtisoTNS}
\end{figure}

To begin, we consider sequentially generating the isoTNS in Fig.~\ref{fig:correspondence channel isotns}(c) along a time-like direction (as shown in Fig.~\ref{Fig:isoTNSspacelikealtisoTNS}(b)).
The evolution of the virtual state on one space-like cut for one step can be described by a local channel. 
In particular, if the cut has a rational slope, the local channel is translationally invariant (with a unit cell marked by the yellow-shaded area). 
The results in Sec.~\ref{subsec:nogo_sequential} still apply to these circuits.
In particular, the bulk entanglement spectrums along these spacelike cuts are non-chiral (\thmref{thm:nogo}), and the physical correlation function of the isoTNS decays exponentially along these directions (Cor.~\ref{cor:exponentialdecayphysical}).

The sequential generation of the isoTNS along the light-like direction in Fig.~\ref{Fig:isoTNSspacelikealtisoTNS}(c) is characterized by a sequential channel, where the local channels in the yellow-shaded region must be applied sequentially from the left to the right. 
Nonetheless, this Gaussian fermion channel is translation-invariant, therefore, described by $\mathcal{N}[\Gamma_k] = A_k + B_k\Gamma_kB_k^\dag$, where $\Gamma_k$ is the correlation matrix of the virtual state along a cut in the light-like direction.
Crucially, $A_k$ and $B_k$ are given by rational trigonometric polynomials of $k$ as shown in Appendix~\ref{app:GfTNS}.
Since $A_k$ and $B_k$ are bounded along the real axis, they are analytic in the vicinity of the real axis except at removable singularities.
\thmref{thm:nogo} and Cor.~\ref{cor:exponentialdecayphysical} still apply to this case, as the proofs only rely on the analyticity of $A_k$ and $B_k$.

Our results in Sec.~\ref{subsec:nogo_sequential} also apply to the alt-isoTNS~\cite{wu2025alternatinggaussianfermionicisometric}, which has an alternative isometric structure (shown in Fig.~\ref{Fig:isoTNSspacelikealtisoTNS}(d)) with the arrows pointing to alternating directions along the $(1,1)$ direction.
The alt-isoTNS can be generated by sequentially applying local unitary gates in a snake order (following the arrows in the reverse direction), which can be arranged in a depth-$L^2$ circuit.
In this case, there is still a channel-state correspondence; one can regard the $(1,-1)$ direction as the temporal direction, and the $(1,1)$ direction as the spatial direction.
Each time step involves $L$ local channels in a sequential order and evolves the virtual state by a sequential channel $\mathcal{N}[\Gamma_k] = A_k + B_k\Gamma_kB_k^\dag$, with different $A_k$ and $B_k$ for even and odd time steps. 
Nonetheless, $A_k$ and $B_k$ are analytic, and we can combine two time steps as one period and apply our no-go result to this case.
Thus, we conclude that the physical correlation function along the space direction of alt-isoTNS decays exponentially, and alt-isoTNS cannot represent chiral states in the thermodynamic limit.

\section{No-go theorem for interacting sequential circuits}
\label{sec:interacting}

In generic interacting systems, tripartite entanglement measures probe the chiral edge mode and, more generally, un-gappable edges in topological states~\cite{Zou:2020bly,Siva:2021cgo}.
In this section, we prove a no-go theorem by showing that the state prepared by sequential circuits cannot host the tripartite entanglement of a chiral state due to the constraint of causality.
Our theorem makes mild assumptions on the circuit geometry and initial state and does not rely on spacetime translational invariance.

The chiral state---or, more generally, any state with an ungappable edge--- supports gapless degrees of freedom localized on the upper and lower edges, which leave signatures in the tripartite entanglement measures.
To probe this entanglement structure, we consider an $L_x \times L_y$ cylinder with periodic boundary conditions in the $x$ direction and open boundaries in $y$ and partition the cylinder into three rectangular regions $A$, $B$, and $C$, each of size $L_x/3\times L_y$ (illustrated in Fig.~\ref{fig:triangle state}). 
As a key entanglement feature, the \emph{Markov gap} $h(A:B)$ takes the value $\frac{c}{3}\log 2$, which depends on the central charge $c$ of conformal field theory (CFT) describing the un-gappable edge of the state.
This implies that the wave function $\psi_{ABC}$ of a chiral state cannot be a ``sum of triangle states'' defined below~\cite{Zou:2020bly,Siva:2021cgo}.
\begin{defn}[Triangle state]
    A pure state $\psi_{ABC}$ is a triangle state if there exists decomposition $\mathcal{H}_A =  \mathcal{H}_{A_0}\oplus(\mathcal{H}_{A_L}\otimes \mathcal{H}_{A_R})$, $\mathcal{H}_B = \mathcal{H}_{B_0}\oplus(\mathcal{H}_{B_L}\otimes \mathcal{H}_{B_R})$, and $\mathcal{H}_C = \mathcal{H}_{C_0}\oplus(\mathcal{H}_{C_L}\otimes \mathcal{H}_{C_R})$ such that
    \begin{align}
    \label{eq:triangle state}
        \psi_{ABC} = \psi_{A_R B_L}\otimes \psi_{B_R C_L}\otimes \psi_{C_R A_L},
    \end{align}
    where $\psi_{A_R B_L}\in \mathcal{H}_{A_R}\otimes\mathcal{H}_{B_L}$, $\psi_{B_R A_L}\in \mathcal{H}_{B_R}\otimes\mathcal{H}_{A_L}$, and $\psi_{C_R A_L}\in \mathcal{H}_{C_R}\otimes\mathcal{H}_{A_L}$.
\end{defn}

\begin{defn}[Sum of triangle states]
    A pure state $\psi_{ABC}$ is a sum of triangle state if there exists decomposition
    $\mathcal{H}_A = \mathcal{H}_{A_0}\oplus(\oplus_j\mathcal{H}_{A^j_L}\otimes \mathcal{H}_{A^j_R})$, $\mathcal{H}_B = \mathcal{H}_{B_0}\oplus(\oplus_j\mathcal{H}_{B^j_L}\otimes \mathcal{H}_{B^j_R})$, and $\mathcal{H}_C = \mathcal{H}_{C_0}\oplus(\oplus_j\mathcal{H}_{C^j_L}\otimes \mathcal{H}_{C^j_R})$ such that
    \begin{align}
    \label{eq:sum of triangle states}
        \psi_{ABC} = \sum_j \sqrt{p_j}\psi_{A^j_R B^j_L}\otimes \psi_{B^j_R C^j_L}\otimes \psi_{C^j_R A^j_L},
    \end{align}
    where $\sum_j p_j = 1$, $\psi_{A^j_R B^j_L}\in \mathcal{H}_{A^j_R}\otimes\mathcal{H}_{B^j_L}$, $\psi_{B^j_R A^j_L}\in \mathcal{H}_{B^j_R}\otimes\mathcal{H}_{A^j_L}$, and $\psi_{C^j_R A^j_L}\in \mathcal{H}_{C^j_R}\otimes\mathcal{H}_{A^j_L}$.
\end{defn}

We now consider a 2D sequential circuit on the $L_x\times L_y$ cylinder (as in Fig.~\ref{fig:correspondence channel isotns}(c)). 
When the circuit depth satisfies $L_y<L_x/6$, the causality of the circuit leads to the following no-go theorem.
\begin{thm}
Let $\psi_{ABC}$ on the $L_x\times L_y$ cylinder be a state prepared by a sequential circuit of depth $L_y<L_x/6$.
\begin{enumerate}
\item If the initial state is a trivial product state, then $\psi_{ABC}$ is a triangle state.
\item If the initial state on the bottom edge is a cat state $\sum_{i}|ii\dots i\rangle$ (with the bulk initialized in a product state), then $\psi_{ABC}$ is a sum of triangle states.
\end{enumerate}
In either case, the output state cannot have an ungappable edge.
    \label{thm:gabhab}
\end{thm}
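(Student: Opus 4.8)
The plan is to exploit the light-cone structure of the sequential circuit directly. Fix the three regions $A$, $B$, $C$ of width $L_x/3$ each, and label the two vertical interfaces between them. Because each gate is $2$-local and the circuit has depth $L_y$, a region grows its causal influence by at most $1$ site per time step in each horizontal direction; after $L_y$ steps the "dependency cone" of any vertical column has horizontal half-width $\le L_y$. The hypothesis $L_y<L_x/6$ guarantees that the forward light-cone of one interface and the forward light-cone of another interface are separated by a nonempty buffer of columns, i.e.\ the two interfaces never communicate. Concretely, I would decompose the unitary $\mathcal U$ implementing the circuit into a product $\mathcal U = \mathcal U_{\mathrm{left}}\,\mathcal U_{\mathrm{mid}}\,\mathcal U_{\mathrm{right}}$ of three commuting factors, where $\mathcal U_{\mathrm{left}}$ collects all gates whose support lies entirely within the cone of the $C|A$ interface, $\mathcal U_{\mathrm{right}}$ those entirely within the cone of the $A|B$ interface, and $\mathcal U_{\mathrm{mid}}$ those within the cone of the $B|C$ interface; the separation condition ensures every gate falls into exactly one group and that the three groups act on pairwise-disjoint sets of qubits. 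Each factor, acting on the relevant slab of initial-state qubits, produces a bipartite state straddling only one interface.

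For part (1), the initial state is a product state, so after applying the three commuting factors the output is literally a tensor product $\psi_{C_R A_L}\otimes\psi_{A_R B_L}\otimes\psi_{B_R C_L}$, where the subscripts record which side of which interface each factor's output lands in, and the "$0$" sectors $\mathcal H_{A_0},\mathcal H_{B_0},\mathcal H_{C_0}$ absorb the remaining product-state qubits that no gate touched. Matching this to the definition of a triangle state is then bookkeeping: one checks that $\mathcal H_A$ decomposes as $\mathcal H_{A_0}\oplus(\mathcal H_{A_L}\otimes\mathcal H_{A_R})$ because the qubits of $A$ that lie in the $C|A$ cone and those in the $A|B$ cone are disjoint (again by $L_y<L_x/6$), and similarly for $B$ and $C$. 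This gives exactly Eq.~\eqref{eq:triangle state}.

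For part (2), the only change is the bottom-edge resource state $\sum_i |ii\cdots i\rangle$. Insert a resolution of identity in the $\{|i\rangle\}$ basis at one point of the bottom edge; conditioned on outcome $i$ the bottom edge is again a product state, so part (1) applies and yields a triangle state $\psi^{(i)}_{ABC}$ with branch weight $p_i$. Summing over $i$ puts $\psi_{ABC}$ into the form of Eq.~\eqref{eq:sum of triangle states}, a sum of triangle states. (Some care: the cat-state index $i$ must be threaded consistently into each of the three factors' outputs — one should cut the cat state at a location lying in, say, the $A|B$ cone and carry the label through $\mathcal U_{\mathrm{right}}$ only, which is legitimate since the other two factors act on disjoint qubits.) Finally, the last sentence follows from the cited results~\cite{Zou:2020bly,Siva:2021cgo}: a (sum of) triangle state has Markov gap $h(A:B)=0$, whereas an ungappable edge forces $h(A:B)=\tfrac c3\log 2>0$, a contradiction.

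The main obstacle I anticipate is \emph{not} the light-cone counting, which is routine, but making the three-way factorization $\mathcal U=\mathcal U_{\mathrm{left}}\mathcal U_{\mathrm{mid}}\mathcal U_{\mathrm{right}}$ genuinely clean on the cylinder: one must verify that in the "staircase" architecture of Fig.~\ref{fig:correspondence channel isotns}(c) every elementary gate's spacetime support is contained in exactly one interface cone, which requires pinning down the precise shape of the cones (they are tilted because of the sequential ordering) and confirming the factor $6$ in $L_y<L_x/6$ is the right constant — the three interfaces partition the circumference into thirds, and each contributes a cone of half-width $L_y$ that must not overlap its neighbor, giving $2L_y<L_x/3$. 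Handling the periodic $x$-direction (so that "left" and "right" wrap around) and the fate of qubits that sit in no cone at all (they remain in their initial product state and go into the $A_0,B_0,C_0$ sectors) are the details that need the most care.
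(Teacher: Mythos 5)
Your overall strategy---isolate the neighborhoods of the three interfaces using the light-cone structure, use $L_y<L_x/6$ (i.e. $2L_y<L_x/3$) to make them disjoint, and read off a (sum of) triangle state(s)---is the same as the paper's, and your counting of the factor of $6$ and your handling of the cat-state input and the Markov-gap conclusion are correct. However, the central claim $\mathcal{U}=\mathcal{U}_{\mathrm{left}}\,\mathcal{U}_{\mathrm{mid}}\,\mathcal{U}_{\mathrm{right}}$ is false as written, and the error is not cosmetic. The three interface cones have total width at most $6L_y<L_x$, so they do not cover the circuit: every gate whose support lies in the interior of $A$, $B$, or $C$, outside all three cones, belongs to none of your factors. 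It is not true that ``every gate falls into exactly one group,'' and the qubits outside the cones are not ``qubits that no gate touched''---in a depth-$L_y$ sequential circuit essentially every bulk qubit is acted on, and the interior gates generate a large amount of entanglement within each region. They cannot be shunted into the direct summands $\mathcal{H}_{A_0},\mathcal{H}_{B_0},\mathcal{H}_{C_0}$ as untouched tensor factors.

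The paper repairs exactly this by cutting the circuit into \emph{six} regions rather than three: the three interface cones, whose gates are applied first and prepare the triangle state on the physical legs inside the cones together with the virtual legs cut by the cone boundaries, and three residual unitaries $U_A\otimes U_B\otimes U_C$ collecting all remaining gates, each supported entirely within one of $A$, $B$, $C$ and applied afterwards. Since a product of unitaries acting on $A$, $B$, $C$ separately is a local basis change in each region, it preserves the triangle-state (and sum-of-triangle-states) structure; this observation is the actual content of the proof and is missing from your argument. A secondary point: for the reordering ``all cone gates first, then $U_A\otimes U_B\otimes U_C$'' to be legal, the cones must be taken as the \emph{backward} light cones of the interface worldlines (apex at the top row, base of width $2L_y$ on the bottom edge), which are closed under causal past; with the forward cones of the bottom interface points that you describe, a cone gate at time $t$ can overlap with an earlier gate lying just outside the cone, and the required commutation fails.
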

\begin{proof}
It is clearest to state the proof using the isoTNS representation of the sequential circuit.
Since $L_y < L_x/6$, we can cut the sequential circuit into six regions along the light cone as shown in \figref{fig:triangle state}.
We define the concatenation of all unitary gates in the region $B$ and above the light cone by $U_B$, similarly $U_A$ and $U_C$ for regions $A$ and $C$.
And we examine the state before and after applying the gates above the light cone. 
Let $\calH_{B_L}$ ($\calH_{B_R})$ denote the Hilbert space that comprises physical legs below the left (right) light cone in $B$ and virtual legs cut by the left (right) light cone, similarly for region $A$ and $C$.
It is easy to see that the state prepared by unitaries below the light cones is a triangle state $\psi_{A_R,B_L}\otimes\psi_{B_R,C_L}\otimes\psi_{C_R,A_L}$, where the subscripts denotes the associated Hilbert space. 
The final state prepared by the sequential circuit is
\begin{align}
    \psi_{ABC} = U_A\otimes U_B\otimes U_C[\psi_{A_R,B_L}\otimes\psi_{B_R,C_L}\otimes\psi_{C_R,A_L}\otimes |0\dots0\rangle],
\end{align}
where $|0\dots0\rangle$ represents the input states of $U_A$, $U_B$ and $U_C$ that is not along the cut.
This final state is related to the triangle state by changing the basis in $A$, $B$, and $C$ separately. Thus, it is still a triangle state.

\begin{figure}[t]
\centering
\includegraphics[width=0.85\textwidth]{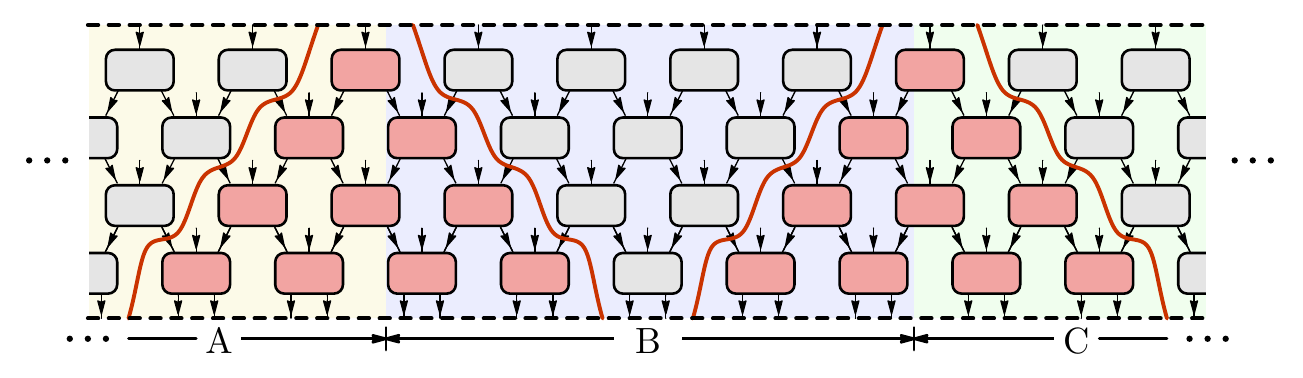}
\caption{Sequential circuit on a ring, $L_y< L_x/6$ with the periodic boundary condition in the $x$ direction. The cylinder is divided into three blocks of size $L_x/3\times L_y$; the left half of $A$ and the right half of $C$ are omitted in the figure. The grey-colored gates above the red cuts form three clusters $U_A$, $U_B$, and $U_C$. The pink-colored gates below the red cuts prepare a triangle state from a product state or a sum of triangle states from a cat state.}
\label{fig:triangle state}
\end{figure}

If the input state is a cat state at the bottom edge, the final state reads
\begin{align}
    \psi_{ABC} = U_A\otimes U_B\otimes U_C[\sum_i\psi_{A_R,B_L}(i)\otimes\psi_{B_R,C_L}(i)\otimes\psi_{C_R,A_L}(i)\otimes|ii\dots i\rangle_{ABC}\otimes |0\dots0\rangle],
    \label{eq:cylindersumoftriangle}
\end{align}
where $|ii\dots i\rangle_{ABC} = |ii\dots i\rangle_{A}\otimes |ii\dots i\rangle_{B}\otimes|ii\dots i\rangle_{C}$ represents the input state of $U_A$, $U_B$, and $U_C$ on the bottom edge.
Identify $|ii\dots i\rangle_A\otimes \mathcal{H}_{A_L}\otimes\mathcal{H}_{A_R}$ as $ \mathcal{H}_{A^i_L}\otimes\mathcal{H}_{A^i_R}$, and similarly for $B$ and $C$, the state before the action of $U_A$, $U_B$, and $U_C$ is a sum of triangle states. The action of $U_A\otimes U_B\otimes U_C$ preserves this property.
\end{proof}

The no-go theorem can be further extended to a broad class of initial states, including those obtained by finite-depth local unitary circuits acting on a cat state.
The proof is straightforward as one can consider the finite-depth local unitary that prepares the initial state as a part of the sequential circuit.
We note that such initial states in the thermodynamic limit can approximate any matrix product state (MPS) of finite bond dimension.
Even if the initial state is in the form of MPS, we still expect that the state prepared by a sequential circuit has a vanishing Markov gap $h(A:B)$ and is therefore non-chiral.
This is because $h(A:B)$ is continuous under small changes in the wave function~\cite{Dutta:2019gen,Zou:2020bly}. Taken together these arguments lead to the no-go theorem 2 stated in \secref{sec:intro}.

Note that the 1D brick wall circuit is a special case of the 2D sequential circuit, leading to the following corollary.
\begin{cor}
\label{cor:nogo1DCFT}
    A 1D brick wall circuit on a ring of length $L$, with a depth less than $L/6$, cannot prepare the ground state of a conformal field theory from a trivial product state.
\end{cor}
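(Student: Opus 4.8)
The plan is to obtain Corollary~\ref{cor:nogo1DCFT} as a direct specialization of \thmref{thm:gabhab}, so no new machinery is needed; the work is just in identifying a 1D brick wall circuit of depth $L_y$ on a ring of length $L$ with a 2D sequential circuit of depth $L_y<L_x/6$ on an $L_x\times L_y$ cylinder with $L_x=L$. First I would observe that the spacetime history of a 1D brick wall circuit—qudits on a ring indexed by $x$, evolving under layers indexed by the time step $y=1,\dots,L_y$—is literally a 2D sequential circuit in the sense of \eqnref{eq:sequential circuit}: each layer is a product of two-local commuting gates, the bottom row $y=0$ carries the 1D input state, and the ``prepared 2D state'' is the full spacetime tensor network. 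Equivalently, one may simply apply \thmref{thm:gabhab} to the final 1D output state viewed as the $y=L_y$ slice, but it is cleanest to apply it to the 2D isoTNS picture, since the causal-cone argument in the proof of \thmref{thm:gabhab} operates on exactly that structure.

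Next I would spell out the consequence. Taking the initial state to be a trivial product state on the ring puts us in case 1 of \thmref{thm:gabhab}: since the depth $L_y<L/6=L_x/6$, the output $\psi_{ABC}$ on the tripartition into three arcs of length $L/3$ is a triangle state, hence (as noted after the theorem, via the Markov gap $h(A:B)=\frac{c}{3}\log 2$ for a $c\neq 0$ CFT edge, and more basically since a triangle state has vanishing Markov gap) cannot have an ungappable edge. The only remaining point is to translate ``ungappable edge'' into ``not the ground state of a CFT.'' For this I would invoke the established fact~\cite{Zou:2020bly,Siva:2021cgo} that the ground state of a 1D conformal field theory on a ring, tripartitioned into three adjacent intervals $A,B,C$, has Markov gap $h(A:B)=\frac{c}{3}\log 2>0$ for $c>0$, whereas any (sum of) triangle state has $h(A:B)=0$; these are incompatible, so the circuit output cannot equal—nor even approximate, by continuity of $h(A:B)$ under small perturbations—a CFT ground state.

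The main obstacle, such as it is, is bookkeeping rather than mathematics: one must be careful that the three regions $A,B,C$ of size $L/3$ and the six light-cone cuts used in the proof of \thmref{thm:gabhab} are well-defined on the ring, which requires precisely $6L_y<L_x$, i.e. the stated depth bound $L_y<L/6$; and one must make sure the ``$x$-periodic, $y$-open'' cylinder geometry of \thmref{thm:gabhab} matches the 1D-circuit setup (periodic in space, with an initial and final time slice), which it does. A secondary subtlety is whether one wants an exact statement (the output is literally a triangle state, so literally not a CFT ground state) or an approximate one; the exact version follows immediately, and the robust version follows from continuity of the Markov gap as already remarked in the surrounding text. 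I would therefore keep the proof of the corollary to two or three sentences, citing the reduction to \thmref{thm:gabhab} and the CFT Markov-gap value.
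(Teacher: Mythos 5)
Your proposal is correct and takes essentially the same route as the paper, which derives the corollary as an immediate specialization of Theorem~\ref{thm:gabhab}: a depth-$L_y<L/6$ brick-wall circuit on a ring is a special case of the 2D sequential circuit, so its output from a product state is a triangle state, which has vanishing Markov gap and therefore cannot be a CFT ground state (whose Markov gap is $\frac{c}{3}\log 2>0$). The paper leaves these steps implicit in a one-line remark, and your write-up simply makes them explicit.
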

Corollary~\ref{cor:nogo1DCFT} is consistent with the observation that the quench rate must scale as $\sim 1/L$ in order to reach a CFT ground state from a trivial product state~\cite{Zurek:2005kod}.

\section{Discussion}
\label{sec:discussion}

In this work, we examine the capability of 2D sequential circuits by leveraging the correspondence between the sequential circuit, 1D quantum channel dynamics, and isoTNS.
For Gaussian fermion systems, we prove a no-go theorem for the preparation of chiral topological states with a local 2D translationally invariant sequential circuit.
For generic interacting systems, we prove that sequential circuits with certain constraints cannot generate non-trivial tripartite entanglement, which is required to support chiral edge modes.

We remark that Ref.~\cite{Chen:2024fvu} does provide explicit protocols to prepare chiral states using sequential Gaussian fermion evolution.
Their results do not contradict ours due to the following subtle but crucial differences.
Their first protocol works with degrees of freedom that are continuous in one spatial direction and, therefore, cannot be formulated in tensor networks defined in discrete space.
Their second protocol prepares a chiral state on the lattice using a sequence of adiabatic evolution generated by a gapped Hamiltonian.
The evolution in each step can be recast as a \emph{finite-time} quasi-adiabatic evolution generated by a quasi-local Hamiltonian that has sub-exponential tails in both directions~\cite{Hastings:2010vzr}.
Such an evolution is not captured by our sequential circuits with strict locality in the sequential direction; it maps to 1D open system dynamics that is not Markovian.
Our no-go theorem implies that this protocol fails if one truncates the tail in the sequential direction and simultaneously only allows exponential tails in the other direction.
Identifying the minimal necessary non-locality is left as an open question.

It is interesting to further explore the possibility of preparing chiral states using generic interacting sequential circuits.
Our reasoning suggests that the candidate sequential circuit must correspond to 1D channel dynamics that either have a critical relaxation or have multiple steady states.
In the former scenario, it is likely that the prepared 2D state has an algebraically decaying correlation function~\cite{Dubail:2013pda}.
Our no-go results concern the long-range behavior of correlation functions of a TNS with a finite bond dimension. 
From the perspective of numerical studies of TNS, it would be interesting to investigate how bond dimensions of a TNS scale with the system size in order to achieve a desired accuracy for a chiral topological phase.

Our proof of the no-go results hinges on the channel-state correspondence, which deserves further study in its own right.
For example, a previous work utilizes the correspondence between isoTNS and channel dynamics to investigate the complexity of tensor network algorithms~\cite{Malz:2024val}. 
It is interesting to invoke critical quantum/classical channel dynamics to construct states with high complexity~\cite{Gopalakrishnan:2023kdp}.
It is also interesting to resort to the entanglement structure of two-dimensional TNS to constrain the capabilities of the channel dynamics. 
In particular, the presence of preserved information in the channel dynamics seems to be of special interest across this correspondence. 
On the state side, it is likely related to having non-trivial topological order.
On the dynamics side, it is naturally related to the question of passive/self-correcting memory. It is interesting to leverage this correspondence to explore self-correcting classical memory in one dimension and quantum memory in two dimensions.

\section*{Acknowledgment}
ZD acknowledges helpful discussions with Michael Zaletel and Xiao-Liang Qi.
RF and ZD acknowledge the hospitality of Kavli Institute for Theoretical Physics during their visit.
YW is supported by a start-up grant from IOP-CAS.
R.F. is supported by the Gordon and Betty Moore Foundation (Grant GBMF8688).
Y.B. is supported by grant GBMF7392 from the Gordon and Betty Moore Foundation to University of California, Santa Barbara Kavli Institute for Theoretical Physics (KITP).
This research was supported in part by grant NSF PHY-2309135 to the University of California, Santa Barbara Kavli Institute for Theoretical Physics (KITP).

\bigskip 

\emph{Note added.---}During the completion of the manuscript, we became aware of an independent
work using the correspondence to address related questions~\cite{toappear}. We thank them for coordinating the submission.

\appendix

\section{Topology of Gaussian fermion matrix product states}
\label{app:KitaevChain}
The 0D Gaussian channel dynamics that preserve an even and an odd number of Majorana modes exhibit a physical distinction, corresponding to 1D MPS in the trivial and topological phase, respectively.

As a key observation, an even number of preserved modes can form a many-body Hilbert space, undergoing a unitary evolution, while an odd number of preserved Majorana modes must combine with an additional Majorana mode $\gamma_0$ to form an integer-dimensional subspace. 
Since $\gamma_0$ is \textit{not} a preserved mode, the correlation functions between $\gamma_0$ and other modes will decay exponentially. 
Repeated actions of the channel give a fermionic version of the dephasing channel, leading to the following corollary of \thmref{thm:steadystate}.
\begin{cor}
\label{cor:evenoddmajorana}
In the case that the preserved space $V_u$ contains an even number of modes, the output state of the channel dynamics in the long-time limit is given by
\begin{align}
    \lim_{t\rightarrow\infty} \mathcal{N}^t[\rho] = \mathcal{U}^t\Tr_{\mathcal{H}_d}(\rho)(\mathcal{U}^\dag)^t\otimes\rho^{(\rs)}_d\,,
\end{align}
where $\mathcal{H}_u$ ($\mathcal{H}_d$) denotes the many-body Hilbert space corresponding to single-particle fermion modes in $V_u$ ($V_d$), $\Tr_{\mathcal{H}_d}(\cdot)$ denotes the partial trace over the subsystem containing single-particle modes in $V_d$, and $\mathcal{U}$ is the unitary evolution on $\mathcal{H}_u$ induced by the single-particle unitary $U$ in \thmref{thm:steadystate}.
Here, $\rho$ is the many-body density matrix of the initial state, $\rho^{(\rs)}_d$ is the many-body density matrix of the steady state of the dissipative modes.

In the case that $V_u$ contains an odd number of modes, there exists an Majorana mode $\gamma_0$ in $V_d$ such that
\begin{align}
\label{eqapp:oddmajoranasteadystate}
    \lim_{t\rightarrow\infty} \mathcal{N}^t[\rho] = \mathcal{U}^t\frac{\Tr_{\mathcal{H}_d}(\rho) + \gamma_0\Tr_{\mathcal{H}_d}(\rho)\gamma_0}{2}(\mathcal{U}^\dag)^t\otimes\rho^{(\rs)}_d\,,
\end{align}
where $\mathcal{H}_u$ ($\mathcal{H}_d$) be the many-body Hilbert space corresponding to filling Majorana modes in $V_u$ plus $\gamma_0$ (Majorana modes in $V_d$ excluding $\gamma_0$), and $[\mathcal{U},\gamma_0] = 0$.
  
\end{cor}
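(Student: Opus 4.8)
\textbf{Proof proposal for Corollary~\ref{cor:evenoddmajorana}.}

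The plan is to derive the many-body statement from the single-particle convergence result of \thmref{thm:steadystate} by exploiting Gaussianity: since all states involved are Gaussian (the channel preserves Gaussianity, the initial state can be taken Gaussian after first reducing to the purely-dissipative case, and the limit is Gaussian), it suffices to match correlation matrices, and the block-diagonal form $\Gamma^{(\rs)} = U^t\Gamma_u^{(0)}(U^\dag)^t \oplus \Gamma_d^{(\rs)}$ from \eqnref{Eq:steadystatecorrelationmatrix} already encodes a tensor-product structure between $\mathcal{H}_u$ and $\mathcal{H}_d$. So the first step is: because the off-diagonal block $\Gamma_{ud}$ decays to zero, in the $t\to\infty$ limit the state factorizes as $\rho_u(t)\otimes\rho_d^{(\rs)}$, with $\rho_d^{(\rs)}$ the (unique, channel-determined) Gaussian steady state on the dissipative modes and $\rho_u(t)$ the Gaussian state on $\mathcal{H}_u$ with correlation matrix $U^t\Gamma_u^{(0)}(U^\dag)^t$. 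For general (non-Gaussian) $\rho$ one first notes the argument only needs the two-point functions to determine the $\mathcal{H}_d$ factor and the reduced state on $\mathcal{H}_u$; the full many-body reduced state on $\mathcal{H}_u$ is untouched by the dissipative dynamics once $t$ is large because the coupling terms $A_{ud}$, $B_{ud}$ vanish on $V_u$ (Lemma~\ref{lemma:akbkeigenvec}), so the induced evolution on $\mathcal{H}_u$ is exactly the unitary $\mathcal{U}$ lifted from $U$. This gives $\lim_t \mathcal{N}^t[\rho] = \mathcal{U}^t \Tr_{\mathcal{H}_d}(\rho)(\mathcal{U}^\dag)^t \otimes \rho_d^{(\rs)}$ whenever $\mathcal{H}_u$ is a genuine tensor factor, i.e. $\dim V_u$ is even.

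Next, I would handle the parity obstruction, which is the conceptual heart of the statement. When $\dim V_u$ is odd, the $V_u$ modes alone do \emph{not} span a subspace of integer qubit dimension, so there is no honest tensor factorization $\mathcal{H} = \mathcal{H}_u\otimes\mathcal{H}_d$ respecting fermion parity; one must borrow a single Majorana $\gamma_0$ from $V_d$. The key step here is to identify $\gamma_0$: since $V_d$ has odd dimension too, the restriction $\Gamma_d^{(\rs)}$ of the steady-state correlation matrix to $V_d$ is an odd-dimensional real antisymmetric matrix, hence singular, so it has a zero eigenvector; I take $\gamma_0$ to be (a real normalized representative of) this kernel direction, i.e. the dissipative Majorana mode that is \emph{maximally dephased} — its two-point functions with everything vanish in the steady state. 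Then I regroup: $\mathcal{H}_u$ in the statement is the many-body space built from the $\dim V_u$ preserved modes \emph{together with} $\gamma_0$ (now even-dimensional, hence a legitimate tensor factor), and $\mathcal{H}_d$ is built from $V_d\setminus\{\gamma_0\}$. On this regrouped $\mathcal{H}_u$, the state in the limit is not the full reduced density matrix $\Tr_{\mathcal{H}_d}(\rho)$ but its image under the $\gamma_0$-dephasing channel $X \mapsto \tfrac12(X + \gamma_0 X \gamma_0)$, because $\gamma_0$ is dissipative and gets completely decohered; I must also check that the unitary $\mathcal{U}$ lifted from $U$ on the genuinely preserved modes commutes with $\gamma_0$ (immediate, since $U$ acts within $V_u$ and $\gamma_0 \in V_d$, and $V_u \perp V_d$). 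This yields \eqnref{eqapp:oddmajoranasteadystate}.

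The main obstacle I anticipate is making the factorization and the identification of $\gamma_0$ rigorous at the \emph{many-body} (not just correlation-matrix) level, because fermionic tensor-factorizations are subtle: a set of single-particle Majorana modes only generates an operator subalgebra isomorphic to a full matrix algebra when there is an even number of them, and the "tensor product" of fermionic subsystems carries graded (Koszul) signs. I would address this by (i) working throughout with the even subalgebras and using the standard fact that an even-parity set of $2m$ Majoranas generates a $2^m$-dimensional factor commuting with its complement, (ii) arguing that the dissipative part of the dynamics, having vanishing coupling to $V_u$ by Lemma~\ref{lemma:akbkeigenvec}, acts as the identity on the $\mathcal{H}_u$-algebra so that the reduced dynamics there is exactly $\mathcal{U}$-conjugation, and (iii) invoking that a Gaussian channel with a strictly-contracting single-particle part on a subspace has a unique fixed state on that subspace (from \thmref{thm:steadystate}, $\Gamma_d^{(\rs)} = \sum_{s\ge0} B_d^s A_d (B_d^T)^s$ is the unique solution), upgraded to the many-body statement by Gaussianity plus the observation that $\gamma_0$, lying in the kernel of $\Gamma_d^{(\rs)}$, is maximally mixed and decoupled. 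The rest is bookkeeping of which modes land in which factor.
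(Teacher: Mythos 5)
Your proposal is correct and follows essentially the same route as the paper: the even case via the block-diagonal steady-state correlation matrix, and the odd case by taking $\gamma_0$ from the kernel of the odd-dimensional real antisymmetric $\Gamma_d^{(\rs)}$, regrouping it with $V_u$, and representing its complete decoherence by the dephasing channel $\rho\mapsto\tfrac12(\rho+\gamma_0\rho\gamma_0)$. Your additional care about graded tensor factorizations and parity-even algebras only fleshes out details the paper leaves implicit.
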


\begin{proof}
In the case that $V_u$ has an even dimension, $V_d$ also has an even dimension as the total number of Majorana modes is even.
For the preserved modes, the single-particle unitary evolution $U^t$ in \thmref{thm:steadystate} induces a many-body unitary $\mathcal{U}^t$ in $\calH_u$. 
For the dissipative modes, the steady state correlation matrix determines the many-body density matrix $\rho_d^{(\rs)}$ in the subspace $\mathcal{H}_d$.

In the case that $V_d$ has an odd dimension, the real-antisymmetric correlation matrix $\Gamma^{(\rs)}_d$ has at least one zero eigenvector, corresponding to a Majorana mode $\gamma_0$. 
Thus, we can define a many-body Hilbert space $\mathcal{H}_d$ corresponding to filling the even number of Majorana modes in $V_d$ excluding $\gamma_0$; $\Gamma^{(\rs)}_d$ restricting to this space gives a many-body state $\rho^{(\rs)}_d$ in $\mathcal{H}_d$.
We define the complementary many-body Hilbert space as $\mathcal{H}_u$, which is associated with filling $\gamma_0$ and Majorana modes in $V_u$.
The unitary evolution $\mathcal{U}$ in the many-body space $\mathcal{H}_u$ stems from the unitary transformation of these single-particle modes, $(U, 0; 0, 1)$, where $U$ is the unitary acting on $V_u$ as in \thmref{thm:steadystate}.
We note that the correlation function between $\gamma_0$ and any other Majorana mode approaches zero under repeated action of the channel.
This effect can be captured by the fermionic dephasing channel $\rho\rightarrow (\rho+\gamma_0\rho\gamma_0)/2$ for any parity-even density matrix $\rho$.
\end{proof}

\begin{figure}
\centering
\includegraphics[width=0.45\textwidth]{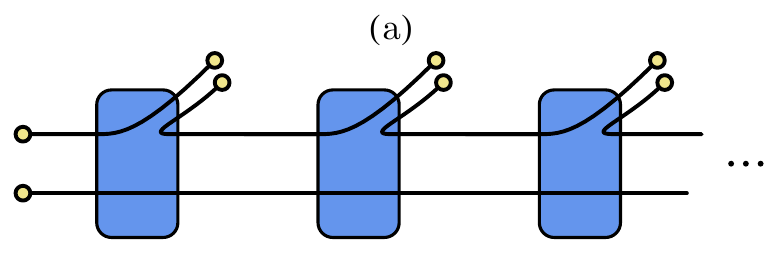}
\hspace{10pt}
\includegraphics[width=0.15\textwidth]{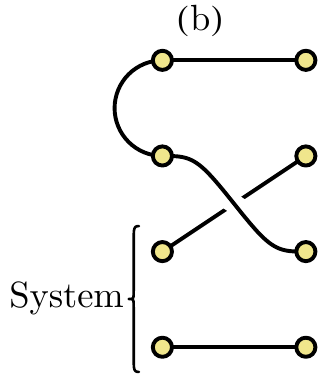}
\caption{(a) Matrix product state representation of the Kitaev Chain. Each yellow dot represents a Majorana mode and each block represents a Gaussian fermion tensor. This tensor can be interpreted as a Gaussian fermion channel that preserves one Majorana mode. Time direction goes from the left to the right. (b) The unitary circuit that realizes the quantum channel in (a).}
\label{Fig:Kitaevchain}
\end{figure}

Under the channel-state correspondence, the two cases, with an even and an odd number of preserved Majorana modes, map to the trivial phase and the topological phase (i.e. the Kitaev chain), respectively.
The preserved Majorana modes become the Majorana modes on the physical boundary.
The simplest example is illustrated in Fig.~\ref{Fig:Kitaevchain}, where each virtual leg contains two Majorana modes; one Majorana mode is preserved, while the other Majorana mode is dissipated into the environment.
This channel can be realized by the unitary gate (Fig.~\ref{Fig:Kitaevchain}(b)) that swaps a Majorana fermion in the system with a Majorana fermion in the environment.
It prepares a fixed-point Kitaev chain in the environment.

The fact that the many-body entanglement spectrum of a Kitaev chain is the same in the sectors with even and odd parity~\cite{Mortier:2024qtg} follows directly from \thmref{thm:steadystate}.
By the channel-state correspondence, the entanglement spectrum is identical to the spectrum of the steady-state density matrix given by Eq.~\eqref{eqapp:oddmajoranasteadystate}.
Since the map $\rho\rightarrow\gamma_0\rho\gamma_0$ swaps the density matrix in the parity-even and the parity-odd sector and the right hand side of Eq.~\eqref{eqapp:oddmajoranasteadystate} is invariant under this map, the spectrum must be the same in the two sectors.
 
Under Jordan-Wigner transformation, the channel $\rho\rightarrow (\rho + \gamma_0\rho\gamma_0)/2$ in $\mathcal{H}_u$ (Eq.~\eqref{eqapp:oddmajoranasteadystate}) maps to a dephasing channel $\rho\rightarrow (\rho + \sigma_x\rho\sigma_x)/2$. 
This bosonic dephasing channel preserves one bit of classical information, the eigenvalue of $\sigma_x$, and $\log(\text{dim}(\mathcal{H}_u)) -1$ bit of quantum information.
This preserved classical information is associated with the $\mathbb{Z}_2$ symmetry of the tensor, and it is known to be related to cat states in 1D and topological states in 2D~\cite{Schuch:2010hfh,Lootens:2020mso}.

\section{Gaussian fermion isometric tensor network states}
\label{app:GfTNS}

In the main text, we interpret each tensor in a Gaussian fermion MPS (or isoTNS in 2D) as an isometric map from the bottom virtual leg to the top virtual and physical legs (as in Fig~\ref{Fig:GFMPS}).
In this appendix, we provide an alternative yet equivalent interpretation~\cite{wu2025alternatinggaussianfermionicisometric,PhysRevA.81.052338,PhysRevB.107.125128}, in which each tensor is regarded as a pure state in the Hilbert space defined by the Majorana modes on its physical and all virtual legs, and tensor contraction over virtual legs is imposed by projecting the Majorana modes connected by a virtual leg to a reference state.
We use this formalism to derive the channel dynamics corresponding to the 2D isoTNS along the direction in Fig.~\ref{Fig:isoTNSspacelikealtisoTNS}(c).

To begin, we consider the Gaussian fermion MPS in Fig.~\ref{Fig:GFMPS}.
Each tensor at location $y$ can be represented as a pure state $\ket{\varphi_y}$ over $n_P + 2n_V$ Majorana modes, where $n_P$ and $n_V = 2n$ are the number of Majorana modes on the physical and the top/bottom virtual legs, respectively.
One can specify this pure state by an anti-symmetric correlation matrix $\Lambda$ in the form
\begin{align}
    \Lambda = \begin{pmatrix} \Lambda_{P} & \Lambda_{PV_t} & \Lambda_{PV_b} \\ -\Lambda_{PV_t}^T & \Lambda_{V_t} & \Lambda_{V_tV_b}\\
    -\Lambda_{PV_b}^T & -\Lambda_{V_tV_b}^T & \Lambda_{V_b}
    \end{pmatrix} = \begin{pmatrix} A_{P} & A_{PV} & B_{P} \\ -A_{PV}^T & A & B\\
    -B_{P}^T & -B^T & 0
    \end{pmatrix}\,.
\end{align}
where the subscripts $P, V_t, V_b$ for $\Lambda$ labels the correlation of Majoranas modes on the physical, the top virtual, and the bottom virtual legs, respectively.
Here, $\Lambda^T \Lambda = I$ for pure states, and the isometric condition requires $\Lambda_{V_b} = 0$, i.e. the reduced density matrix in the Hilbert space of the bottom leg is maximally mixed\cite{wu2025alternatinggaussianfermionicisometric}.
In what follows, we show that this correlation matrix $\Lambda$ is exactly what is identified in Eq.~\eqref{eq:tensorcorrelationmatrix} and fully determines the channel $\calN_{V \mapsto VP}$ associated with this isometric tensor.

In this new interpretation of Gaussian fermion MPS, we obtain the quantum state $\ket{\Phi_P}$ over physical Majorana modes by projecting the virtual Majoranas onto a definite state, 
\begin{align}
    \ket{\Phi_P} = \bra{\psi_0}\prod_b \bra{\psi_b^{\text{ref}}} \prod_{y}\ket{\varphi_y},
\end{align}
where the Majorana modes on the bottom row are projected onto the initial state $\ket{\psi_0}$ of the channel dynamics, and Majorana modes on the virtual legs connected by a bond $b$ are projected onto a reference state $\ket{\psi_b^{\text{ref}}}$.
The reference state has definite even parity $ic_{1,j}c_{2,j} = 0$ for $j = 1, 2,\cdots, n_V$, where $c_{1(2),j}$ are Majorana modes on the two virtual legs connected by the bond.

Within this formulation, we can reproduce the output state of the channel dynamics at time step $y$ by contracting all the tensors below the cut at location $y$.
Let $\Gamma_y$ be the correlation matrix on the virtual bonds at step y.
By contracting with the tensors in the next row, we obtain the evolved correlation matrix on the physical and the virtual legs\cite{Bravyifermionlinearoptics},
\begin{align}
    \Gamma_y \mapsto \begin{pmatrix} \Lambda_{P} & \Lambda_{PV_t} \\ -\Lambda_{PV_t}^T & \Lambda_{V_t}
    \end{pmatrix} + \begin{pmatrix}  \Lambda_{PV_b} \\ \Lambda_{V_tV_b}
    \end{pmatrix} \Gamma_y \begin{pmatrix} 
    \Lambda_{PV_b}^T & \Lambda_{V_tV_b}^T
    \end{pmatrix},
\end{align}
which reproduces the channel dynamics, and the correlation matrix $\Lambda$ is given by that in Eq.~\eqref{eq:tensorcorrelationmatrix}.

We now work with this formalism to derive the dynamics generated by the 2D isoTNS along the light-like direction as in Fig.~\ref{Fig:isoTNSspacelikealtisoTNS}(c).
The state $\Gamma_k$ in the channel dynamics lives on the virtual legs along the cut, and its evolution is generated by contracting the state with tensors between two steps, which takes a general form $\mathcal{N}[\Gamma_k] = A_k + B_k\Gamma_kB_k^\dag$. 
Our goal here is to determine $A_k$ and $B_k$.

First, we denote the correlation matrix associated with each individual tensor as
\begin{align}
    \Lambda = \begin{pmatrix} \Lambda_{P} & \Lambda_{PV_r} & \Lambda_{PV_t}& \Lambda_{PV_l} & \Lambda_{PV_b} \\ 
    -\Lambda_{PV_r}^T & \Lambda_{V_r} & \Lambda_{V_rV_t} & \Lambda_{V_rV_l} & \Lambda_{V_rV_b}\\
    -\Lambda_{PV_t}^T & -\Lambda_{V_rV_t}^T & \Lambda_{V_t} & \Lambda_{V_tV_l} & \Lambda_{V_tV_b}\\
    -\Lambda_{PV_l}^T & -\Lambda_{V_rV_l}^T & -\Lambda_{V_tV_l}^T & 0 & 0\\
    -\Lambda_{PV_b}^T & -\Lambda_{V_rV_b}^T & -\Lambda_{V_tV_b}^T & 0 & 0\\
    \end{pmatrix}
\end{align}
where the subscript $V_{l,r,t,b}$ denotes the virtual legs on the left ($l$), the right ($r$), the top ($t$), and the bottom ($b$), as illustrated in Fig.~\ref{Fig:isoTNSspacelikealtisoTNS}.
The tensor can also be interpreted as an isometric map from the virtual legs $l$, $b$ to the physical leg and the virtual legs $t$ and $r$.

The key step is to determine the correlation matrix of the Majorana modes on the top and the bottom virtual legs, after contracting the left ($l$) and right ($r$) virtual legs of each tensor with their neighbors along the cut.
This correlation matrix in the momentum space can be derived using the Grassmann integral~\cite{Bravyifermionlinearoptics},
\begin{equation}
\begin{aligned}
    &\begin{pmatrix}
        \Lambda_{V_t,k} & \Lambda_{V_tV_b,k}\\-\Lambda_{V_tV_b,k}^\dag & 0
    \end{pmatrix}
     \\
    &\qquad\qquad =
    \begin{pmatrix}
        \Lambda_{V_t} & \Lambda_{V_tV_b}\\-\Lambda_{V_tV_b}^T & 0
    \end{pmatrix}+
    \begin{pmatrix}
        \Lambda_{V_tV_r} & \Lambda_{V_tV_l}\\\Lambda_{V_bV_r} & 0
    \end{pmatrix}
    \frac{1}{
    \begin{pmatrix}
        \Lambda_{V_r} & \Lambda_{V_rV_l}\\-\Lambda_{V_rV_l}^T & 0
    \end{pmatrix}
   - \Lambda^\text{ref}_k }
   \begin{pmatrix}
        \Lambda_{V_tV_r}^T & \Lambda_{V_bV_r}^T\\\Lambda_{V_tV_l}^T & 0
    \end{pmatrix},
\end{aligned}
\end{equation}
where $\Lambda_k^\text{ref}$ is the correlation matrix of the reference state of the $l, r$ bonds in momentum space:
\begin{align}
    \Lambda^\text{ref}_k = \begin{pmatrix}
        0 & -e^{ik}I_{n}\\e^{-ik}I_{n} & 0
    \end{pmatrix}
\end{align}
The reference state is defined over $2n$ Majorana modes in each unit cell (in the Hilbert space of two virtual legs).
This correlation matrix determines the channel in the momentum space, $A_k = \Lambda_{V_t,k}$ and $B_k = \Lambda_{V_tV_b,k}$.

Further explicit calculation leads to a closed form expression for $\Lambda_{V_t,k}$ and $\Lambda_{V_tV_b,k}$, 
\begin{equation}
\begin{aligned}
    \frac{1}{
    \begin{pmatrix}
        \Lambda_{V_r} & \Lambda_{V_rV_l}\\-\Lambda_{V_rV_l}^T & 0
    \end{pmatrix}
   - \Lambda^\text{ref}_k } &= 
   \begin{pmatrix}
       0 & \frac{1}{e^{-ik} - \Lambda_{V_rV_l}^T}\\ -\frac{1}{e^{ik} - \Lambda_{V_rV_l}} &
       \frac{1}{e^{ik} - \Lambda_{V_rV_l}}\Lambda_{V_r}\frac{1}{e^{-ik} - \Lambda_{V_rV_l}^T}
   \end{pmatrix}\equiv \begin{pmatrix}
       0 & Q_k\\-Q_k^\dag & W_k
   \end{pmatrix},
\end{aligned}
\end{equation}
thus
\begin{equation}
\begin{aligned}
    A_k &= \Lambda_{V_t,k} = \Lambda_{V_t} + \Lambda_{V_tV_l}W_k\Lambda_{V_tV_l}^T + \Lambda_{V_tV_r}Q_k\Lambda_{V_tV_l}^T - \Lambda_{V_tV_l}Q_k^\dag \Lambda_{V_tV_r}^T\\
    B_k &= \Lambda_{V_tV_b,k} = \Lambda_{V_tV_b} - \Lambda_{V_tV_l}Q_k^\dag \Lambda_{V_bV_r}^T,
\end{aligned}
\end{equation}
Importantly, $\Lambda_{V_t,k}$ and $\Lambda_{V_tV_b,k}$ are rational trigonometric polynomials; they are analytic functions of $k$ up to at most removable singularities.

\section{Proof of Lemma~\ref{lemma:projectoranalytic}}\label{app:proofLemProjectorAnalytic}

\begin{taggedlemma}{3.1}
The dimension of the preserved space $V_{k,u}$ is a constant in $k$ except for a finite number of momenta.
In particular, the hermitian projector $P_{k,u}$ onto $V_{k,u}$ can be composed into two hermitian projectors $\tilde{P}_{k,u}$ and $P'_{k,u}$, i.e. $P_{k,u} = \tilde{P}_{k,u} + P'_{k,u}$, with the following properties: (1) $P'_{k,u} = 0$ except for a finite number of momenta; (2) $\tilde{P}_{k,u}$ can be extended to an analytic function of $k$ in the vicinity of the real axis.
\end{taggedlemma}
\begin{proof}
The eigenvalue $b_k$ of the matrix $B_k$ satisfies the algebraic equation $\det(B_k- b_kI) = 0$ whose coefficients are analytic on the complex $k$-plane.
Such eigenvalues are also (branches of) analytic functions of $k$ with at most algebraic singularities~\cite{kato2013perturbation}.
Specifically, in the neighborhood of a real momentum $k_0$, the most singular behavior of a set of eigenvalues take the form of the following Puiseux series
\begin{equation}
	b_{k,h} = b_{k_0} + \sum_{n=1}^{+\infty} a_n e^{\frac{2\pi in}{p}h} (k-k_0)^{n/p}\,,\quad p\in\mathbb{Z_+}\,,\, h = 1,\dots,p\,,
\end{equation}
where an eigenvalue $b_{k_0}$ splits into $p$ branches that gives rise to $p$ eigenvalues. 
For nonzero $b_{k_0}$, we have
\begin{align}
    \log(b_{k,h}) = \log(b_{k_0}) + \sum_{n=1}^{+\infty} a'_n e^{\frac{2\pi in}{p}h} (k-k_0)^{n/p}\,.
    \label{eq:series expansion of logb}
\end{align}
We are interested in the behavior of these eigenvalues when $b_{k_0}$ has a unit norm.
Depending on the coefficients $a_n$, there are two situations, one where $b_k$ has a constant norm across an open interval near $k_0$ and the other where $b_k$ has a unit norm\ only at $k_0$. More concretely, we have
\begin{itemize}
\item The expansion \eqref{eq:series expansion of logb} has nonzero real part on the real axis, i.e., $a'_n\neq 0$ for $n/p\not\in \mathbb{Z}$ or $\Re(a'_n)\neq 0$ for $n/p\in \mathbb{Z}$. In this case, $b_{k,h}$ does not have a unit norm near $k_0$

If the leading violation is for a non-integer power $n/p \notin \bbZ$ or an odd integer power, we must have $|b_{k,h}|>1$ for some $h$ and a real $k$; this is forbidden by the condition $B_k^\dag B_k\le I$ for real $k$.
Thus, the leading violation must be at an even integer power, implying that $|b_k| = 1$ only at $k=k_0$ (for real $k$). This type of norm-1 eigenvalue can only appear at discrete momenta.

\item The expansion \eqref{eq:series expansion of logb} is purely imaginary on the real axis. It means that the coefficients should satisfy the condition $a_n' = 0$ for $n/p\not\in \mathbb{Z}$ and $\Re(a'_n) = 0$ for $n/p\in \mathbb{Z}$. Below, we first show that the set of momenta where this condition is satisfied is an open set, and then we show that it is either empty or the entire real axis itself. In this case, the norm-1 eigenvalues $b_k$ are analytic functions of $k$, and they form continuous bands\footnote{Changing the momentum adiabatically from $-\pi$ to $\pi$ may induce a nontrivial permutation of type-I norm-1 eigenvalues. For example $b_{k,\pm} = \pm e^{ik/2}$.}.

To show it is an open set, note that \eqref{eq:series expansion of logb} has a finite convergence radius so that these $b_k$'s have a constant unit norm and no branch cut in an open interval near $k_0$ on the real axis. 
For any other point in this open interval, we can write down a similar Puiseux series expansion and the coefficients should also satisfy the above condition. Therefore, the momenta where this condition is satisfied must form an open set on the real axis. 

To show that the open set is either empty or the entire real axis, we first recall that an open set on the real axis must be a countable union of disjoint open intervals. We show that having an interval $(a,b)$ with finite $a$ or finite $b$ leads to contradiction, implying that the open set is either empty or the entire real axis.
If such an endpoint exists, by definition, it violates the condition for $a_n'$.
Our analysis of the previous situation says that the violation must occur at an even integer power, which contradicts the fact that $|b_k|$ has a unit norm on at least one side of this point. As a result, there cannot be any endpoint of the interval.
\end{itemize}
We define $P'_{k,u}$ and $\tilde{P}_{k,u}$ as the \textit{the sum} of projectors for norm-1 eigenvalues in the two situations, respectively.
By definition, $P'_{k,u}$ is nonzero only at discrete momenta. 
The proof of Lemma~\ref{lemma:akbkeigenvec} can be generalized to $B_k$ in the momentum space to show that $\tilde{P}_{k,u}$ is a Hermitian projector for real $k$; therefore it is analytic near the real axis~\cite{kato2013perturbation}, and $P_{k,u} = \tilde{P}_{k,u} + P'_{k,u}$.
\end{proof}

\section{Entanglement spectrum of chiral states}
\label{app:entanglementspectrum}

In this section, we review the implication of having a nonzero bulk Chern number on the single-particle entanglement spectrum in free fermion systems. 
We show that the Chern number can always be detected either by the discontinuous jump of the entanglement spectrum between $\pm\infty$ or by the number of chiral modes in the entanglement spectrum.
Normally, the discussion on this subject is for ground states of local Hamiltonians~\cite{Fidkowski:2009entanglement}. 
Here we consider a slightly general class of states and do not assume a parent Hamiltonian.
In particular, we discuss tensor network states with algebraically decaying correlation functions.

Consider a two-dimensional translationally invariant free-fermion system that does not necessarily have a definite particle number.
It is then more convenient to use the Majorana operator $c_{l=1,2,\ldots}$ to describe the fermion modes.
Any Slater determinant state $\ket{\Psi}$ is fully characterized by the two-point correlation function, also called the spectral projector, $2P_{kl} = \braket{\Psi|c_l c_k|\Psi}$.
In particular, it defines a spectral Chern number $\nu(P) \in \bbZ$ that is given by the TKNN formula in the momentum space
\begin{equation}
    \nu(P) = \frac{1}{2\pi i} \int \Tr \tilde{P} \big( \frac{\partial \tilde{P}}{\partial q_x} \frac{\partial \tilde{P}}{\partial q_y} - \frac{\partial \tilde{P}}{\partial q_y} \frac{\partial \tilde{P}}{\partial q_x} \big)  dq_x dq_y
    \label{eq:TKNN}
\end{equation}
where $\tilde{P}(q_x,q_y)$ is the Fourier transformation of $P_{kl}$.
In the following, we first review the connection between \eqnref{eq:TKNN} and the number of Majorana edge modes by following Appendix~B of \cite{Kitaev:2005hzj}, and then we explain its implication on the single-particle entanglement spectrum. 

Let us put the system on an $L_x \times L_y$ cylinder with the $x$-direction being periodic and $y$-direction left open.
With translational invariance in the $x$-direction, we consider the spectral projector in each momentum sector
\begin{equation}
    P(q_x) = \sum_{\text{occupied}} \ket{\psi_n(q_x)} \bra{\psi_n(q_x)}\,,\quad q_x \in [0,2\pi)\,,
\end{equation}
where the summation runs over all the occupied single-particle states.
Without loss of generality, we focus on the chiral modes on the upper edge, if exists, by considering the restricted spectral projector $P_A(q_x) \equiv \Pi_A P(q_x) \Pi_A$. Here $\Pi_A$ is a real-space projector for the upper half-cylinder, i.e.
\begin{equation}
    (\Pi_A)_{jl} = f(j) \delta_{jl}\,,\quad f(j) = \begin{cases} 1 & j \in A \\ 0 & j \notin A \end{cases}\,.
\end{equation} 
Suppose there is a single chiral edge mode. As the momentum $q_x$ continuously vary from $0$ to $2\pi$, we expect $P_A(q_x)$ to change discontinuously by $\ket{\psi} \bra{\psi}$ when a single-particle edge state $\ket{\psi}$ is suddenly occupied/unoccupied.
This gives a sudden jump of $\Tr(P_A(q_x))$ by 1.
Since $\Tr(P_A(q_x))$ is a periodic function of $q_x$, the sudden jump must be compensated by a smooth variation.
Physically, the smooth variation originates from the deformation of the single-particle wave functions of the occupied modes.
Based on this argument, the number of the edge modes is given by
\begin{equation}
    \nu_{\text{edge}}(P) = - \int \Tr \Big(\frac{dP_A(q_x)}{dq_x}\Big)_{\text{smooth}} dq_x\,,
    \label{eq:nu edge}
\end{equation}
where we only include the smooth variation of the restricted spectral projector.
One can then follow the manipulation in the Appendix.~B of \cite{Kitaev:2005hzj} to show
\begin{equation}
    \nu(P) = \nu_{\text{edge}}(P)\,,\quad \text{given }|P_{kl}(q_x)| \leq \frac{A}{|k-l|^{\alpha}}\,,\,\alpha > 1\,,
    \label{eq:nu equal nu edge}
\end{equation}
where $|k-l|$ is the distance between the two sites $k$ and $l$, $A$ is a positive constant. 
Namely we have the equality between \eqnref{eq:TKNN} and \eqref{eq:nu edge} if $P(q_x)$ is a quasi-diagonal matrix.
Thus, we can safely use \eqnref{eq:nu edge} as the definition of the spectral Chern number as long as the bulk-correlation function for each $q_x$, $P_{kl}(q_x)$, decays faster than $1/|k-l|$.

To understand the implication of \eqnref{eq:nu edge} on the entanglement spectrum, let us recall that the single-particle entanglement Hamiltonian $K_A$ is related to the restricted spectral projector $P_A$ by
\begin{equation}
    K_A = \log \frac{1 - P_A}{P_A}
\end{equation}
The eigenvalues of $P_A(q_x)$ are bounded between 0 and 1, with the lower and upper bound being mapped to $+\infty$ and $-\infty$ respectively.
The eigenstates of $P_A$ with the zero or unit eigenvalue correspond to either fully empty or fully occupied states, which live at the $\pm \infty$ of the entanglement spectrum and do not contribute to the entanglement.
Suppose that there is a single chiral edge mode that causes a discontinuous change in $P_A$.
In terms of the spectrum of $P_A(q_x)$, the discontinuity manifests as a sudden change in the number of zero and unit eigenvalues. 
The spectrum then must also contain a smooth part that chirally traverses between $0$ and $1$ to compensate the sudden change.
Translating these observations into the entanglement spectrum, the sudden change is mapped to $\pm \infty$ and is invisible, while the smooth part is mapped to a chiral branch (see \figref{Fig:entanglementspec}~(a)). Note that the momentum is defined in a finite Brillouin zone and therefore having a chiral branch implies the presence of discontinuity in the entanglement spectrum.

Tensor network states, especially those with algebraically decaying correlation functions, may not have exponentially localized ``chiral edge modes". Nonetheless, Eq.~\eqref{eq:nu equal nu edge} still holds and these states do have a well-defined Chern number if the correlation function decays fast enough. 
Furthermore, the Chern number can always be detected either by the discontinuous jump of the entanglement spectrum between $\pm\infty$ or by the number of chiral modes in the entanglement spectrum (see \figref{Fig:entanglementspec}~(b)).
In the main text, we prove that the entanglement spectrum in a free-fermion isoTNS must be an analytical function, excluding the possibility of having a chiral branch (see \figref{Fig:entanglementspec}~(c)).

\bibliography{ref.bib}

\providecommand{\href}[2]{#2}\begingroup\raggedright\begin{thebibliography}{10}

\bibitem{Semeghini:2021wls}
G.~Semeghini et~al., \emph{{Probing topological spin liquids on a programmable quantum simulator}}, \href{https://doi.org/10.1126/science.abi8794}{\emph{Science} {\bfseries 374} (2021) abi8794} [\href{https://arxiv.org/abs/2104.04119}{{\ttfamily 2104.04119}}].

\bibitem{Andersen:2022xmz}
T.~I. Andersen et~al., \emph{{Non-Abelian braiding of graph vertices in a superconducting processor}}, \href{https://doi.org/10.1038/s41586-023-05954-4}{\emph{Nature} {\bfseries 618} (2023) 264} [\href{https://arxiv.org/abs/2210.10255}{{\ttfamily 2210.10255}}].

\bibitem{Iqbal:2023wvm}
M.~Iqbal et~al., \emph{{Non-Abelian topological order and anyons on a trapped-ion processor}}, \href{https://doi.org/10.1038/s41586-023-06934-4}{\emph{Nature} {\bfseries 626} (2024) 505} [\href{https://arxiv.org/abs/2305.03766}{{\ttfamily 2305.03766}}].

\bibitem{Kim:2023bwr}
Y.~Kim et~al., \emph{{Evidence for the utility of quantum computing before fault tolerance}}, \href{https://doi.org/10.1038/s41586-023-06096-3}{\emph{Nature} {\bfseries 618} (2023) 500}.

\bibitem{Bluvstein:2023zmt}
D.~Bluvstein et~al., \emph{{Logical quantum processor based on reconfigurable atom arrays}}, \href{https://doi.org/10.1038/s41586-023-06927-3}{\emph{Nature} {\bfseries 626} (2024) 58} [\href{https://arxiv.org/abs/2312.03982}{{\ttfamily 2312.03982}}].

\bibitem{Altman:2019vbv}
E.~Altman et~al., \emph{{Quantum Simulators: Architectures and Opportunities}}, \href{https://doi.org/10.1103/PRXQuantum.2.017003}{\emph{PRX Quantum} {\bfseries 2} (2021) 017003} [\href{https://arxiv.org/abs/1912.06938}{{\ttfamily 1912.06938}}].

\bibitem{Cirac:sequential2005}
C.~{Sch{\"o}n}, E.~{Solano}, F.~{Verstraete}, J.~I. {Cirac} and M.~M. {Wolf}, \emph{{Sequential Generation of Entangled Multiqubit States}}, \href{https://doi.org/10.1103/PhysRevLett.95.110503}{\emph{prl} {\bfseries 95} (2005) 110503} [\href{https://arxiv.org/abs/quant-ph/0501096}{{\ttfamily quant-ph/0501096}}].

\bibitem{cirac:2dsequential2008}
M.~C. {Ba{\~n}uls}, D.~{P{\'e}rez-Garc{\'\i}a}, M.~M. {Wolf}, F.~{Verstraete} and J.~I. {Cirac}, \emph{{Sequentially generated states for the study of two-dimensional systems}}, \href{https://doi.org/10.1103/PhysRevA.77.052306}{\emph{pra} {\bfseries 77} (2008) 052306} [\href{https://arxiv.org/abs/0802.2472}{{\ttfamily 0802.2472}}].

\bibitem{Lieb:1972wy}
E.~H. Lieb and D.~W. Robinson, \emph{{The finite group velocity of quantum spin systems}}, \href{https://doi.org/10.1007/BF01645779}{\emph{Commun. Math. Phys.} {\bfseries 28} (1972) 251}.

\bibitem{FuDaoZhiDa:2020vsq}
T.~Soejima, K.~Siva, N.~Bultinck, S.~Chatterjee, F.~Pollmann and M.~P. Zaletel, \emph{{Isometric tensor network representation of string-net liquids}}, \href{https://doi.org/10.1103/PhysRevB.101.085117}{\emph{Phys. Rev. B} {\bfseries 101} (2020) 085117}.

\bibitem{PRXQuantum.3.040315}
Y.-J. Liu, K.~Shtengel, A.~Smith and F.~Pollmann, \emph{Methods for simulating string-net states and anyons on a digital quantum computer}, \href{https://doi.org/10.1103/PRXQuantum.3.040315}{\emph{PRX Quantum} {\bfseries 3} (2022) 040315}.

\bibitem{Chen:2023qst}
X.~Chen, A.~Dua, M.~Hermele, D.~T. Stephen, N.~Tantivasadakarn, R.~Vanhove et~al., \emph{{Sequential quantum circuits as maps between gapped phases}}, \href{https://doi.org/10.1103/PhysRevB.109.075116}{\emph{Phys. Rev. B} {\bfseries 109} (2024) 075116} [\href{https://arxiv.org/abs/2307.01267}{{\ttfamily 2307.01267}}].

\bibitem{Satzinger:2021eqy}
K.~J. Satzinger et~al., \emph{{Realizing topologically ordered states on a quantum processor}}, \href{https://doi.org/10.1126/science.abi8378}{\emph{Science} {\bfseries 374} (2021) abi8378} [\href{https://arxiv.org/abs/2104.01180}{{\ttfamily 2104.01180}}].

\bibitem{KaneFisher1997}
C.~L. {Kane} and M.~P.~A. {Fisher}, \emph{{Quantized thermal transport in the fractional quantum Hall effect}}, \href{https://doi.org/10.1103/PhysRevB.55.15832}{\emph{prb} {\bfseries 55} (1997) } [\href{https://arxiv.org/abs/cond-mat/9603118}{{\ttfamily cond-mat/9603118}}].

\bibitem{Cappelli:2001mp}
A.~Cappelli, M.~Huerta and G.~R. Zemba, \emph{{Thermal transport in chiral conformal theories and hierarchical quantum Hall states}}, \href{https://doi.org/10.1016/S0550-3213(02)00340-1}{\emph{Nucl. Phys. B} {\bfseries 636} (2002) 568} [\href{https://arxiv.org/abs/cond-mat/0111437}{{\ttfamily cond-mat/0111437}}].

\bibitem{Kitaev:2005hzj}
A.~Kitaev, \emph{{Anyons in an exactly solved model and beyond}}, \href{https://doi.org/10.1016/j.aop.2005.10.005}{\emph{Annals Phys.} {\bfseries 321} (2006) 2} [\href{https://arxiv.org/abs/cond-mat/0506438}{{\ttfamily cond-mat/0506438}}].

\bibitem{Kapustin:nogo}
A.~{Kapustin} and L.~{Fidkowski}, \emph{{Local Commuting Projector Hamiltonians and the Quantum Hall Effect}}, \href{https://doi.org/10.1007/s00220-019-03444-1}{\emph{Communications in Mathematical Physics} {\bfseries 373} (2019) 763} [\href{https://arxiv.org/abs/1810.07756}{{\ttfamily 1810.07756}}].

\bibitem{Kim:2024amo}
I.~H. Kim, T.-C. Lin, D.~Ranard and B.~Shi, \emph{{Strict area law implies commuting parent Hamiltonian}},  \href{https://arxiv.org/abs/2404.05867}{{\ttfamily 2404.05867}}.

\bibitem{Li:2024iwn}
X.~Li, T.-C. Lin, J.~McGreevy and B.~Shi, \emph{{Strict Area Law Entanglement versus Chirality}}, \href{https://doi.org/10.1103/PhysRevLett.134.180402}{\emph{Phys. Rev. Lett.} {\bfseries 134} (2025) 180402} [\href{https://arxiv.org/abs/2408.10306}{{\ttfamily 2408.10306}}].

\bibitem{Chen:2024fvu}
X.~Chen, M.~Hermele and D.~T. Stephen, \emph{{Sequential Adiabatic Generation of Chiral Topological States}},  \href{https://arxiv.org/abs/2402.03433}{{\ttfamily 2402.03433}}.

\bibitem{Kitaev:2005dm}
A.~Kitaev and J.~Preskill, \emph{{Topological entanglement entropy}}, \href{https://doi.org/10.1103/PhysRevLett.96.110404}{\emph{Phys. Rev. Lett.} {\bfseries 96} (2006) 110404} [\href{https://arxiv.org/abs/hep-th/0510092}{{\ttfamily hep-th/0510092}}].

\bibitem{LevinWen:2006}
M.~{Levin} and X.-G. {Wen}, \emph{{Detecting Topological Order in a Ground State Wave Function}}, \href{https://doi.org/10.1103/PhysRevLett.96.110405}{\emph{prl} {\bfseries 96} (2006) 110405} [\href{https://arxiv.org/abs/cond-mat/0510613}{{\ttfamily cond-mat/0510613}}].

\bibitem{Li:2008kda}
H.~Li and F.~Haldane, \emph{{Entanglement Spectrum as a Generalization of Entanglement Entropy: Identification of Topological Order in Non-Abelian Fractional Quantum Hall Effect States}}, \href{https://doi.org/10.1103/PhysRevLett.101.010504}{\emph{Phys. Rev. Lett.} {\bfseries 101} (2008) 010504} [\href{https://arxiv.org/abs/0805.0332}{{\ttfamily 0805.0332}}].

\bibitem{Shi:2019mlt}
B.~Shi, K.~Kato and I.~H. Kim, \emph{{Fusion rules from entanglement}}, \href{https://doi.org/10.1016/j.aop.2020.168164}{\emph{Annals Phys.} {\bfseries 418} (2020) 168164} [\href{https://arxiv.org/abs/1906.09376}{{\ttfamily 1906.09376}}].

\bibitem{Zaletel:2020roc}
M.~P. {Zaletel} and F.~{Pollmann}, \emph{{Isometric Tensor Network States in Two Dimensions}}, \href{https://doi.org/10.1103/PhysRevLett.124.037201}{\emph{prl} {\bfseries 124} (2020) 037201} [\href{https://arxiv.org/abs/1902.05100}{{\ttfamily 1902.05100}}].

\bibitem{Dubail:2013pda}
J.~Dubail and N.~Read, \emph{{Tensor network trial states for chiral topological phases in two dimensions and a no-go theorem in any dimension}}, \href{https://doi.org/10.1103/PhysRevB.92.205307}{\emph{Phys. Rev. B} {\bfseries 92} (2015) 205307} [\href{https://arxiv.org/abs/1307.7726}{{\ttfamily 1307.7726}}].

\bibitem{Cirac:2020obd}
J.~I. Cirac, D.~Perez-Garcia, N.~Schuch and F.~Verstraete, \emph{{Matrix product states and projected entangled pair states: Concepts, symmetries, theorems}}, \href{https://doi.org/10.1103/RevModPhys.93.045003}{\emph{Rev. Mod. Phys.} {\bfseries 93} (2021) 045003} [\href{https://arxiv.org/abs/2011.12127}{{\ttfamily 2011.12127}}].

\bibitem{Malz:2024val}
D.~Malz and R.~Trivedi, \emph{{Computational complexity of isometric tensor network states}},  \href{https://arxiv.org/abs/2402.07975}{{\ttfamily 2402.07975}}.

\bibitem{3DisoTNS}
M.~S.~J. {Tepaske} and D.~J. {Luitz}, \emph{{Three-dimensional isometric tensor networks}}, \href{https://doi.org/10.1103/PhysRevResearch.3.023236}{\emph{Physical Review Research} {\bfseries 3} (2021) 023236} [\href{https://arxiv.org/abs/2005.13592}{{\ttfamily 2005.13592}}].

\bibitem{Poulin:LRbound2010}
D.~{Poulin}, \emph{{Lieb-Robinson Bound and Locality for General Markovian Quantum Dynamics}}, \href{https://doi.org/10.1103/PhysRevLett.104.190401}{\emph{prl} {\bfseries 104} (2010) 190401} [\href{https://arxiv.org/abs/1003.3675}{{\ttfamily 1003.3675}}].

\bibitem{Kossakowski:1977db}
A.~Kossakowski, A.~Frigerio, V.~Gorini and M.~Verri, \emph{{Quantum Detailed Balance and KMS Condition}}, \href{https://doi.org/10.1007/BF01625769}{\emph{Commun. Math. Phys.} {\bfseries 57} (1977) 97}.

\bibitem{FU07}
F.~{Fagnola} and V.~{Umanita}, \emph{{Generators of Detailed Balance Quantum Markov Semigroups}}, \href{https://doi.org/10.48550/arXiv.0707.2147}{\emph{arXiv e-prints} (2007) arXiv:0707.2147} [\href{https://arxiv.org/abs/0707.2147}{{\ttfamily 0707.2147}}].

\bibitem{Temme:2010big}
K.~Temme, M.~J. Kastoryano, M.~B. Ruskai, M.~M. Wolf and F.~Verstraete, \emph{{The $\chi^2$-divergence and Mixing times of quantum Markov processes}}, \href{https://doi.org/10.1063/1.3511335}{\emph{J. Math. Phys.} {\bfseries 51} (2010) 122201} [\href{https://arxiv.org/abs/1005.2358}{{\ttfamily 1005.2358}}].

\bibitem{CiracSequential2022}
Z.-Y. {Wei}, D.~{Malz} and J.~I. {Cirac}, \emph{{Sequential Generation of Projected Entangled-Pair States}}, \href{https://doi.org/10.1103/PhysRevLett.128.010607}{\emph{prl} {\bfseries 128} (2022) 010607} [\href{https://arxiv.org/abs/2107.05873}{{\ttfamily 2107.05873}}].

\bibitem{Bravyifermionlinearoptics}
S.~Bravyi, \emph{Lagrangian representation for fermionic linear optics}, {\emph{Quantum Info. Comput.} {\bfseries 5} (2005) 216–238}.

\bibitem{wu2025alternatinggaussianfermionicisometric}
Y.~Wu, Z.~Dai, S.~Anand, S.-H. Lin, Q.~Yang, L.~Wang et~al., \emph{{Alternating and Gaussian fermionic Isometric Tensor Network States}},  \href{https://arxiv.org/abs/2502.10695}{{\ttfamily 2502.10695}}.

\bibitem{PhysRevA.81.052338}
C.~V. Kraus, N.~Schuch, F.~Verstraete and J.~I. Cirac, \emph{Fermionic projected entangled pair states}, \href{https://doi.org/10.1103/PhysRevA.81.052338}{\emph{Phys. Rev. A} {\bfseries 81} (2010) 052338}.

\bibitem{PhysRevB.107.125128}
Q.~Yang, X.-Y. Zhang, H.-J. Liao, H.-H. Tu and L.~Wang, \emph{Projected $d$-wave superconducting state: A fermionic projected entangled pair state study}, \href{https://doi.org/10.1103/PhysRevB.107.125128}{\emph{Phys. Rev. B} {\bfseries 107} (2023) 125128}.

\bibitem{Fidkowski:2009entanglement}
L.~{Fidkowski}, \emph{{Entanglement Spectrum of Topological Insulators and Superconductors}}, \href{https://doi.org/10.1103/PhysRevLett.104.130502}{\emph{Phys. Rev. Lett.} {\bfseries 104} (2010) 130502} [\href{https://arxiv.org/abs/0909.2654}{{\ttfamily 0909.2654}}].

\bibitem{Dubail:2011fpa}
J.~Dubail and N.~Read, \emph{{Entanglement spectra of complex paired superfluids}}, \href{https://doi.org/10.1103/PhysRevLett.107.157001}{\emph{Phys. Rev. Lett.} {\bfseries 107} (2011) 157001} [\href{https://arxiv.org/abs/1105.4808}{{\ttfamily 1105.4808}}].

\bibitem{Gross:2012ygy}
D.~Gross, V.~Nesme, H.~Vogts and R.~F. Werner, \emph{{Index Theory of One Dimensional Quantum Walks and Cellular Automata}}, \href{https://doi.org/10.1007/s00220-012-1423-1}{\emph{Commun. Math. Phys.} {\bfseries 310} (2012) 419}.

\bibitem{Liu:2023gvd}
Y.-J. Liu, K.~Shtengel and F.~Pollmann, \emph{{Simulating 2D topological quantum phase transitions on a digital quantum computer}},  \href{https://arxiv.org/abs/2312.05079}{{\ttfamily 2312.05079}}.

\bibitem{Zou:2020bly}
Y.~Zou, K.~Siva, T.~Soejima, R.~S.~K. Mong and M.~P. Zaletel, \emph{{Universal tripartite entanglement in one-dimensional many-body systems}}, \href{https://doi.org/10.1103/PhysRevLett.126.120501}{\emph{Phys. Rev. Lett.} {\bfseries 126} (2021) 120501} [\href{https://arxiv.org/abs/2011.11864}{{\ttfamily 2011.11864}}].

\bibitem{Siva:2021cgo}
K.~Siva, Y.~Zou, T.~Soejima, R.~S.~K. Mong and M.~P. Zaletel, \emph{{Universal tripartite entanglement signature of ungappable edge states}}, \href{https://doi.org/10.1103/PhysRevB.106.L041107}{\emph{Phys. Rev. B} {\bfseries 106} (2022) L041107} [\href{https://arxiv.org/abs/2110.11965}{{\ttfamily 2110.11965}}].

\bibitem{Dutta:2019gen}
S.~Dutta and T.~Faulkner, \emph{{A canonical purification for the entanglement wedge cross-section}}, \href{https://doi.org/10.1007/JHEP03(2021)178}{\emph{JHEP} {\bfseries 03} (2021) 178} [\href{https://arxiv.org/abs/1905.00577}{{\ttfamily 1905.00577}}].

\bibitem{Zurek:2005kod}
W.~H. Zurek, U.~Dorner and P.~Zoller, \emph{{Dynamics of a quantum phase transition}}, \href{https://doi.org/10.1103/PhysRevLett.95.105701}{\emph{Phys. Rev. Lett.} {\bfseries 95} (2005) 105701} [\href{https://arxiv.org/abs/cond-mat/0503511}{{\ttfamily cond-mat/0503511}}].

\bibitem{Hastings:2010vzr}
M.~B. Hastings, \emph{{Locality in Quantum Systems}},  8, 2010, \href{https://arxiv.org/abs/1008.5137}{{\ttfamily 1008.5137}}.

\bibitem{Gopalakrishnan:2023kdp}
S.~Gopalakrishnan, \emph{{Push-down automata as sequential generators of highly entangled states}},  \href{https://arxiv.org/abs/2305.04951}{{\ttfamily 2305.04951}}.

\bibitem{toappear}
T.-C. Lu, Y.-J. Liu, S.~Gopalakrishnan and Y.~You, \emph{{Holographic duality between bulk topological order and boundary mixed-state order}}, {\emph{to appear} }.

\bibitem{Mortier:2024qtg}
Q.~Mortier, L.~Devos, L.~Burgelman, B.~Vanhecke, N.~Bultinck, F.~Verstraete et~al., \emph{{Fermionic tensor network methods}},  \href{https://arxiv.org/abs/2404.14611}{{\ttfamily 2404.14611}}.

\bibitem{Schuch:2010hfh}
N.~Schuch, I.~Cirac and D.~P\'erez-Garc\'\i{}a, \emph{{PEPS as ground states: Degeneracy and topology}}, \href{https://doi.org/10.1016/j.aop.2010.05.008}{\emph{Annals Phys.} {\bfseries 325} (2010) 2153}.

\bibitem{Lootens:2020mso}
L.~Lootens, J.~Fuchs, J.~Haegeman, C.~Schweigert and F.~Verstraete, \emph{{Matrix product operator symmetries and intertwiners in string-nets with domain walls}}, \href{https://doi.org/10.21468/SciPostPhys.10.3.053}{\emph{SciPost Phys.} {\bfseries 10} (2021) 053} [\href{https://arxiv.org/abs/2008.11187}{{\ttfamily 2008.11187}}].

\bibitem{kato2013perturbation}
T.~Kato, \emph{Perturbation theory for linear operators, Chapter 2}, vol.~132. Springer Science \& Business Media, 2013.

\end{thebibliography}\endgroup
\end{document}